\DeclareMathOperator*{\argmax}{argmax}
\DeclareMathOperator*{\argmin}{argmin}
\newcommand{\sep}{\text{sep}}
\newcommand{\cost}{\text{cost}}
\newcommand{\mycomment}[1]{}
\title{Connected k-Median with Disjoint and Non-disjoint Clusters} 
\author{Jan Eube}{University of Bonn, Germany}{eube@cs.uni-bonn.de}{}{}
\author{Kelin Luo}{University at Buffalo, United States}{kelinluo@buffalo.edu}{}{}
\author{Dorian Reineccius}{Deutsches Zentrum für Luft- und Raumfahrt, Germany}{dorian.reineccius@dlr.de}{}{}
\author{Heiko R\"oglin}{University of Bonn, Germany}{roeglin@cs.uni-bonn.de}{}{}
\author{Melanie Schmidt}{Heinrich-Heine Universität Düsseldorf, Germany}{mschmidt@hhu.de}{}{}
\authorrunning{J. Eube, K. Luo, D. Reineccius, H. R\"oglin, M. Schmidt} 
\keywords{Clustering, Connectivity constraints, Approximation algorithms} 
\begin{document}

\title{Connected $k$-Median with Disjoint and Non-disjoint Clusters\footnote{This work has been funded by the Deutsche Forschungsgemeinschaft (DFG, German Research Foundation) – 390685813; 459420781 and by the Lamarr Institute for Machine Learning and Artificial Intelligence lamarr-institute.org.}}

\maketitle

\begin{abstract}
The connected $k$-median problem is a constrained clustering problem that combines distance-based $k$-clustering with connectivity information. The problem allows to input a metric space and an unweighted undirected connectivity graph that is completely unrelated to the metric space. The goal is to compute $k$ centers and corresponding clusters such that each cluster forms a connected subgraph of $G$, and such that the $k$-median cost is minimized.

The problem has applications in very different fields like geodesy (particularly districting), social network analysis (especially community detection), or bioinformatics. We study a version with overlapping clusters where points can be part of multiple clusters which is natural for the use case of community detection. This problem variant is $\Omega(\log n)$-hard to approximate, and our main result is an $\mathcal{O}(k^2 \log n)$-approximation algorithm for the problem. 
We complement it with an $\Omega(n^{1-\epsilon})$-hardness result for the case of disjoint clusters without overlap with general connectivity graphs, as well as an exact algorithm in this setting if the connectivity graph is a tree.
\end{abstract}

\section{Introduction}

Clustering problems, like the $k$-center and the $k$-median problem, are classical optimization problems that have been widely studied both in theory and practice. In a typical center-based clustering problem, the input consists of a given set of data points in some metric space and a positive integer $k$. 
The goal is to partition the data points into $k$ groups (clusters) and to find a center for each of these groups so as to optimize some objective: in the $k$-center problem, the goal is to minimize the maximal distance of any point to the center of its cluster; in the $k$-median problem, the goal is to minimize the sum of the distances of the points to their corresponding cluster centers. 

Both the $k$-center and the $k$-median problem are NP-hard but various constant-factor approximation algorithms are known. For the $k$-center problem, there are 2-approximation algorithms~\cite{Gonzalez85,HochbaumS85}, and this is best possible~\cite{HsuN79}.
The best known approximation algorithm for the $k$-median problem achieves a $(2.675+\epsilon)$-approximation~\cite{byrka2017improved}, while it is known that no approximation factor better than $1+2/e\approx 1.735$ can be achieved, unless P=NP \cite{GuhaK98}. 

Beyond these vanilla clustering problems, there has been a lot of interest in variants of clustering problems motivated by specific applications. These variants often introduce additional constraints. 
For example, in capacitated clustering problems, constraints are added that limit the number of data points that can be assigned to each cluster~\cite{li2016approximating}. 
There are also models to address outliers in data sets~\cite{li2018distributed}, and in the last decade, different variants of fair clustering problems have been studied~\cite{chierichetti2017fair, schmidt2020fair, vakilian2022improved}. 
These are just a few examples of the numerous variants of clustering problems that have been studied to address diverse challenges in different application domains. 
Each of these variants requires specialized algorithms, often inspiring novel methodologies and insights in the broader field of clustering and data analysis.

In this paper, we study a \emph{connectivity} constraint. 
The basic idea of this constraint is that clusters have to correspond to connected subgraphs of a given undirected and unweighted \emph{connectivity graph}. This graph is independent of the metric that defines the distances between the input points. 

This model has been studied in different contexts. One line of research, see Ge et al.~\cite{ge2008joint}, considers the connected $k$-center problem to allow for a joint analysis of two type of data: attribute data and relationship data. Attribute data can be used for clustering by finding a suitable metric distance function on the attribute vectors and then applying a popular clustering method like $k$-means or $k$-center. Relationship data arises in the context of social networks or other network structures and gives information on pairwise relationships. Clustering on relationship data can be done via methods like correlation clustering. Both types of data thus allow for well studied clustering methods, but the focus by Ge et al.~\cite{ge2008joint} is a joint analysis. Connected clustering gives one option to combine relationship and attribute data by defining a constraint based on the relationships and then optimizing over the attribute vector distances.

More precisely, connected clustering expects as input a set of points in a metric space (stemming from the attribute data) and an unweighted graph $G$ (representing the relationship data), and the goal is to cluster with respect to the metric distance function but under the constraint that clusters are connected in $G$. We call $G$ the \emph{connectivity graph} in the following.

Ge et al.~\cite{ge2008joint} study the connected $k$-center problem. For this problem, the goal is to compute $k$ centers $c_1,\ldots,c_k$ and corresponding clusters $C_1,\ldots, C_k$ such that all $C_i$ are connected in the connectivity graph $G$, and such that the largest radius is minimized (the radius of $C_i$ is $\max_{x \in C_i} d(x,c_i)$). They provide an optimal algorithm via dynamic programming for the special case of a tree connectivity graph which was rediscovered by Drexler et al.~\cite{Drexler2024} in a later publication. Ge et al. also present algorithms for the general case with unknown approximation guarantee\footnote{There is a proof in the paper that claims a guarantee of $6$, but~\cite{Drexler2024} gives a counter example.} and provide an experimental evaluation.

The best-known approximation guarantee for the connected $k$-center problem is a $O(\log^2 k)$-approximation given by Drexler et al.~\cite{Drexler2024}. For Euclidean metrics with constant dimension and metrics with constant doubling dimension, they provide an $O(1)$-approximation algorithm. Their paper studies the problem from a very different context: Motivated by an application from geodesy and sea level analysis, they use the distance function to model differences in tide gauge measurements and the connectivity graph to model geographic closeness, and their modelling results in the same problem definition as Ge et al.~\cite{ge2008joint}.

In this paper, we study the connected \emph{$k$-median} problem, i.e., the related problem with the $k$-median objective instead of the $k$-center objective. We give a formal definition now.

\begin{definition}[Connected $k$-Median Problem (disjoint)]\label{def:connectedkmedian}
In an instance of the (disjoint) connected $k$-median problem, we are given a set $V$ of points, a metric $d: V\times V \rightarrow \mathbb{R}_{\ge 0}$ on $V$, a positive integer $k\ge 2$, 
and an unweighted and undirected connectivity graph $G = (V,E)$. 
A feasible solution is a partition of $V$ into $k$ disjoint clusters $\mathcal{V}=\{V_1, \ldots, V_k\}$ with corresponding centers $C=\{c_1, c_2, \ldots, c_k\}\subseteq V$ which satisfies that for every $i \in [k]:=\{1,\ldots, k\}$ the subgraph of $G$ induced by $V_i$ is connected and $c_i \in V_i$.
 The \emph{connected $k$-median problem} aims to find a feasible solution $(C,\mathcal{V})$ that minimizes 
 \[ \Phi_\text{med}(C,\mathcal{V}):= \sum_{i\in [k]} \sum_{v\in V_i} d(c_i, v).\]
\end{definition} 


The connected $k$-median has also appeared in different contexts. For example, Validi et al.~\cite{ValidiBL22} study connected $k$-median for a districting problem. Here the goal is to partition given administrative or geographic units into districts. The clusters should usually correspond to connected areas on the map which can be formulated via introducing centers for each unit and a graph to model which units are neighbors on the map. The $k$-median objective is one way to model compactness of districts. Validi et al.~\cite{ValidiBL22} study various districting methods, and connected $k$-median is one of them. They solve it via an integer linear programming approach, i.e., to optimality but with exponential worst-case running time.
A related variant is the connected $k$-means problem, which for example has been considered in~\cite{LiaoP12} where a heuristic approach to minimize it is employed.

Gupta et al.~\cite{gupta2011clustering} study $k$-median/$k$-means from the point of view of approximation algorithms. They consider the special case where the number of clusters is $k=2$ and the connectivity graph is star-like (there exists a point that is connected to all other points). The paper shows that even this case is $\Omega(\log n)$-hard to approximate for both problems and proposed an $O(\log n)$-approximation algorithm for this special case. 
%
%
There is nothing known about the general case, and our work is motivated by the following question: 
\begin{quote}
    Can the connected $k$-median problem be approximated for general connectivity graphs?
\end{quote}

Unfortunately, we quickly arrive at the conclusion that no reasonable approximation guarantee is possible if $P\neq NP$ (see Thm~\ref{thm:Hardnessk2}). We do derive a dynamic programming based exact algorithm for trees that runs in time $O(n^2k^2)$ (see Thm~\ref{thm:DP}), thus extending the landscape for the problem, but the answer to our main question is still negative.

However, our inapproximability proof depends on a subtlety of the problem definition: The clusters $V_i$ have to be disjoint. This is a natural assumption in clustering, and in clustering without constraints, clusters can be assumed to be disjoint (adding a point to multiple clusters will in general only increase the objective). But here, if we allow for overlapping clusters, then satisfying connectivity can become easier because we can add critical vertices like articulation points to multiple clusters. This motivates the definition of \emph{connected $k$-median with non-disjoint clusters}. 

\begin{definition}[Disjoint vs.\ non-disjoint clusters]
In the \emph{connected $k$-median problem with disjoint clusters}, we require the clusters $V_1,\ldots, V_k$ to be pairwise disjoint. In the \emph{connected $k$-median problem with non-disjoint clusters}, the clusters are allowed to overlap and we only require $V=\bigcup_{i\in[k]} V_i$.
\end{definition} 

One may note that if a vertex appears in multiple clusters it will also contribute to the objective function multiple times. Thus also in the non-disjoint variant, we are incentivized to avoid assigning vertices unnecessarily often. 

The non-disjoint version has also been studied previously. For example, Drexler et al.~\cite{Drexler2024} study it for $k$-center and give a $2$-approximation for this case. From the application side, it depends on the interpretation whether disjoint or non-disjoint clusters are preferred. In districting, units should certainly not be in multiple clusters. But for the application of community detection in social networks the modeling is natural since individuals can belong to different groups. We thus study the following question.

\begin{quote}
    Can the connected $k$-median problem  with non-disjoint clusters be approximated for general connectivity graphs?
\end{quote}

The proof by Gupta et al.~\cite{gupta2011clustering} gives a lower bound of $\Omega(\log n)$ for the disjoint case. For the sake of completeness, we provide an adapted version in Appendix~\ref{sec:appendix_hardness} that also works if the clusters may overlap. Our main result is a $O(k^2 \log n)$-approximation for the connected $k$-median problem with non-disjoint clusters.

Before we summarize our results, we need one more definition. For a clustering problem with constraints, the corresponding \emph{assignment} version of the problem  asks for assigning points to given centers while respecting the constraints and minimizing the cost. In our case, we get the following problem.




\begin{definition}[Assignment version]
In the \emph{assignment version} of the connected $k$-median problem, we assume that $k$ centers, denoted as $C\subseteq V$, are given and the goal is to find a feasible assignment of the points in $V$ to the centers $C$ such that the $k$-median costs are minimized. This version can be defined both for the disjoint and the non-disjoint variant of the connected $k$-median problem.
\end{definition}  

Without a constraint, the assignment version of the $k$-median problem can be solved optimally by assigning every point to its closest center. 
For clustering with constraints, assignment problems can be difficult: For example, the assignment problem for fair $k$-center is NP-hard~\cite{Bercea0KKRS019}, and the connected $k$-center  with disjoint clusters is also NP-hard~\cite{Drexler2024} (both problems are even hard to approximate better than $3$). We will establish an even higher lower bound for the assignment problem in the connected $k$-median setting. 

\subsection{Our results}

\begin{table}[h]
    \centering
    \begin{subtable}{\textwidth}
    \centering
\begin{tabular}{ |m{3cm}|m{4cm}|m{4cm}|  }
 \hline
 &Assignment version& Regular version\\
 \hline
Approximation  & $O(k\log(n))$ (Theorem \ref{thm:AssignmentNonDisjoint})   & $O(k^2\log(n))$ (Theorem \ref{thm:ClusteringNonDisjoint})\\
\hline
 Hardness($k=2$) &\multicolumn{2}{c|}{$\Omega(\log(n))$ (Theorem \ref{thm:non_disjoint_connected_k_median})}\\
 \hline
\end{tabular}
\subcaption{Our results in the non-disjoint setting}
\end{subtable}
\begin{subtable}{\textwidth}
\centering
\begin{tabular}{ |m{3cm}|m{4cm}|m{4cm}|  }
 \hline
 &General Graph& Tree graph\\
 \hline
Approximation  & $O(n \log^2(k))$  (Theorem \ref{thm:disjoint_general_apx}) & $1$ (Theorem \ref{thm:DP})\\
\hline
 Hardness($k=2$) &$\Omega(n^{1-\epsilon})$ (Theorem \ref{thm:Hardnessk2})& -\\
 \hline
\end{tabular}
\subcaption{Our results in the disjoint setting. All results also hold for the assignment version.}
\end{subtable}
\caption{Overview of the results presented in this work.}\label{table:results}
\end{table}
We prove new approximation and hardness results for the connected $k$-median problem, both for the disjoint and the non-disjoint variant. Table \ref{table:results} provides a brief overview of them. First we establish a strong hardness of approximation result for the connected $k$-median problem with disjoint clusters, which even holds in the special case that $k=2$.

\begin{restatable}{theorem}{hardnessdisjoint} 
\label{thm:Hardnessk2}
  For any $\epsilon>0$, there is no polynomial-time $O(n^{1- \epsilon})$-approximation algorithm for the connected $k$-median problem  with disjoint clusters even if $k=2$, unless P = NP. For the assignment version, this lower bound also holds.  
\end{restatable}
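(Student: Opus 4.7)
The plan is to prove Theorem~\ref{thm:Hardnessk2} via a polynomial-time reduction from $3$-SAT, amplified by padding to obtain the claimed $n^{1-\epsilon}$ gap. The strategy is to design a connectivity graph whose feasible connected $2$-partitions encode candidate truth assignments, with the metric set up so that a satisfiable instance admits a cheap partition while every feasible partition in an unsatisfiable instance is forced to route some ``clause vertex'' through an expensive connection.

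Given a $3$-SAT formula $\phi$ with $n$ variables and $m$ clauses, I would introduce two centers $c_1, c_2$, a pair of literal vertices $T_i, F_i$ for each variable $x_i$, and a clause vertex $w_j$ for each $C_j$. The connectivity graph $G$ contains edges $c_1T_i$, $c_2F_i$, and $T_iF_i$ for each $i$, together with $w_jT_i$ whenever $x_i\in C_j$ and $w_jF_i$ whenever $\neg x_i\in C_j$. The metric is defined as the shortest-path metric of an auxiliary weighted graph in which $c_1$ and $c_2$ are at distance $D$ for a large parameter $D$, each literal vertex sits at distance $1$ from its ``natural'' center, and each $w_j$ lies at distance $1$ from $c_1$ and $\Theta(D)$ from $c_2$. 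Thus placing $w_j$ in $V_1$ costs $O(1)$ but requires a satisfying literal vertex to also lie in $V_1$, while placing $w_j$ in $V_2$ always costs $\Theta(D)$.

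Next I would carry out the gap analysis. In a YES instance the partition determined by a satisfying assignment $\tau$ places every $w_j$ in $V_1$ through a literal satisfying its clause, yielding total cost $O(n+m)$. In a NO instance, an exchange argument should show that any feasible partition can be normalized to one corresponding to a truth assignment $\tau'$ without increasing the cost; since $\tau'$ fails some clause $C_j$, the corresponding $w_j$ is forced into $V_2$, contributing $\Theta(D)$. I then amplify the gap by adding $\Theta(D^{1/(1-\epsilon)})$ padding vertices, each adjacent to $c_1$ in $G$ and at distance $0$ from $c_1$ in the metric, so they can be freely placed into $V_1$ at zero cost and inflate the instance without affecting either YES or NO costs. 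Choosing $D$ polynomially large, the total instance size becomes $N$ with $\mathrm{OPT}_{\text{NO}}/\mathrm{OPT}_{\text{YES}}\geq N^{1-\epsilon}$. Because $c_1,c_2$ are explicit in the construction, the same lower bound transfers verbatim to the assignment version.

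The main obstacle is the normalization step in the NO-case analysis: one must rule out ``non-canonical'' partitions where $T_i$ and $F_i$ are placed in unexpected combinations (for example both in $V_1$, or $T_i\in V_2$ to open an alternative route for some $w_j$) and show that none of them beats the bound $\Omega(D)$. Proving this requires a careful local exchange argument enumerating the possible placements of each variable gadget and showing that each non-canonical configuration can be simplified to a canonical one without decreasing cost. A secondary technical difficulty is choosing the metric distances so as to realize the intended incentives while respecting the triangle inequality: since $d(c_2,w_j)$ must be $\Theta(D)$ and $d(c_1,w_j)=O(1)$, we are forced to take $d(c_1,c_2)=\Theta(D)$, which in turn makes every literal vertex far from its non-natural center and must be balanced carefully against the analysis above.
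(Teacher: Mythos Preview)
Your construction has a fundamental gap that breaks the YES-case analysis. Because you set $d(c_1,c_2)=\Theta(D)$ while $d(T_i,c_1)=d(F_i,c_2)=1$, the triangle inequality forces $d(F_i,c_1)=\Theta(D)$ and $d(T_i,c_2)=\Theta(D)$. Hence the only cheap placement of the literal vertices is the fixed one $T_i\in V_1$, $F_i\in V_2$ for \emph{every} $i$; there is no freedom left to encode a truth assignment. Concretely, take a clause $C_j$ whose only satisfied literal under $\tau$ is $\neg x_i$. To route $w_j$ into $V_1$ you need $F_i\in V_1$, but (a) in your connectivity graph $F_i$ is not adjacent to $c_1$, so $F_i$ can reach $c_1$ only through other clause vertices, which need not exist; and (b) even if connectivity could be salvaged, $F_i\in V_1$ contributes cost $d(F_i,c_1)=\Theta(D)$. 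Thus in a satisfiable instance where some clause is only satisfied negatively, your YES cost is already $\Theta(D)$, the same as your intended NO bound, and the gap collapses. The ``normalization'' obstacle you flag is real, but the deeper problem is that the gadget itself does not encode an assignment.

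The paper's reduction avoids exactly this trap by decoupling the assignment encoding from the gap mechanism. Both literal vertices $x_i,\bar{x}_i$ are placed at a \emph{middle} position (distance $1$ from both centers) and are adjacent to \emph{both} centers in the connectivity graph, so either literal can go to either cluster at the same cost. Consistency (``not both literals in $T$'s cluster'') is enforced by separate enforcer vertices $e_{i,1},\dots,e_{i,m}$ adjacent only to $x_i$ and $\bar{x}_i$ and placed near $F$; getting them cheaply into $F$'s cluster forces at least one of $x_i,\bar{x}_i$ there. The large gap comes not from a single large distance $D$ but from \emph{replication}: each clause and each variable is given $m$ copies of its auxiliary vertices, so a single unsatisfied clause forces $m$ of them across at cost $\ge 2m$, while the YES cost is $2a$ independent of $m$. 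Choosing $m$ polynomially large in $a,b$ then yields the $n^{1-\epsilon}$ gap without any padding or tension with the triangle inequality.
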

   

Another important special case that has also been studied for the connected $k$-center problem is the case that the connectivity graph is a tree. For this special case, we obtain a polynomial-time algorithm based on dynamic programming.

\begin{restatable}{theorem}{DPdisjoint} 
  \label{thm:DP}
  When the connectivity graph $G$ is a tree, then the connected $k$-median problem with disjoint clusters can be solved 
  optimally in time $O(n^2k^2)$. This is true even if the distances are not a metric or the centers are given.
\end{restatable}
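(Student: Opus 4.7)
The plan is to root the tree $T$ at an arbitrary vertex $r$ and compute a bottom-up dynamic program. For a rooted tree, every connected cluster is a subtree, so when we restrict attention to the subtree $T_v$ below a vertex $v$, each cluster is either entirely inside $T_v$ (a \emph{complete} cluster) or contains $v$ and may extend upward through the edge to $v$'s parent (the unique \emph{partial} cluster at $v$). Guided by this structural observation, I define $f(v,j,c)$ as the minimum cost contribution from the vertices of $T_v$ among all partial assignments that use exactly $j$ complete clusters strictly inside $T_v$ together with one partial cluster containing $v$ whose center is $c\in V$; this cost counts $d(u,c)$ for every $u\in T_v$ assigned to $v$'s cluster, and analogously for each complete cluster with its own center. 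The global optimum is then $\min_{c\in V} f(r, k-1, c)$, since $r$'s cluster is itself the $k$-th and its center must lie in $T_r = V$.

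To compute $f(v,\cdot,\cdot)$ from the children $w_1,\dots,w_d$ of $v$, I introduce an auxiliary DP $g(v,i,j,c)$ that incorporates the first $i$ children one at a time. Passing from $i-1$ to $i$, child $w_i$ has exactly two options: either (a) its cluster coincides with $v$'s, forcing its partial cluster to inherit the same center $c$ and giving $g(v,i-1,j-j'',c)+f(w_i,j'',c)$ for some $j''\in[0,j]$; or (b) its cluster is disjoint from $v$'s and therefore complete inside $T_{w_i}$, yielding $g(v,i-1,j-j''-1,c)+h(w_i,j'')$, where $h(w_i,j'') := \min_{c'\in T_{w_i}} f(w_i,j'',c')$ is precomputed in a single bottom-up pass. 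The base case is $g(v,0,0,c)=d(v,c)$ and $g(v,0,j,c)=\infty$ for $j>0$, and we set $f(v,j,c)=g(v,\deg(v),j,c)$. One bookkeeping detail: if $c\in T_{w_i}$ then option (a) must be forced at that child, so that the path from $c$ up to $v$ stays inside $v$'s cluster; this is a local check during the transition and does not change the asymptotics.

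For the complexity analysis, $\sum_v (\deg(v)+1)=O(n)$, so there are $O(n\cdot k\cdot n)=O(n^2 k)$ states of $g$ in total; each transition performs an $O(k)$-minimization over $j''$, for a total of $O(n^2 k^2)$ time, with the bottom-up computation of $h$ remaining within the same budget. None of the steps invokes the triangle inequality, so the algorithm handles arbitrary nonnegative dissimilarities $d$ verbatim; and when a prescribed center set $C\subseteq V$ is given, one simply restricts the range of $c$ (and of $c'$ inside $h$) to $C$, which can only shrink the state space. The main point to verify is that the recursion counts every feasible partition of $V$ into $k$ connected clusters exactly once: this holds because each cluster has a unique topmost vertex closest to $r$ at which option (b) closes it off, while every non-root vertex either merges its cluster with its parent via option (a) or is itself such a topmost vertex via option (b), with no other possibilities.
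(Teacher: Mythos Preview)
Your proposal is correct and follows essentially the same approach as the paper: root the tree, define a subtree DP indexed by the root's assigned center and the number of clusters already closed off, and process children one at a time via an auxiliary table to avoid enumerating all splits of the cluster budget. The paper separates the case $c\in T_v$ versus $c\notin T_v$ into two tables ($I$ and $C$, with auxiliaries $X_a$ and $Y_a$) whereas you unify them by letting $c$ range over all of $V$ and adding the local check that forces option~(a) when $c\in T_{w_i}$, but the recursion, correctness argument, and $O(n^2k^2)$ running-time analysis are the same.
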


Motivated by applications like community detection, in which the clusters do not necessarily have to be disjoint, we also study the connected $k$-median problem with non-disjoint clusters. This problem is $\Omega(\log{n})$-hard to approximate even if $k = 2$ (see Appendix~\ref{sec:appendix_hardness}). Our main result are approximation algorithms for the clustering and the assignment problem whose approximation factors match this lower bound in terms of $n$.

\begin{restatable}{theorem}{AssignmentNonDisjoint} 
  \label{thm:AssignmentNonDisjoint}
    For the assignment version of the non-disjoint connected $k$-median problem, there is a polynomial-time algorithm with approximation factor $O(k \log n)$.
    This is true even if the distances are not a metric.
\end{restatable}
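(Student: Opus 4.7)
The plan is to formulate a natural covering-plus-connectivity LP relaxation, round it so that each vertex is tentatively assigned to the cluster that supports it most fractionally, and then solve one node-weighted Steiner tree instance per cluster via a known $O(\log n)$-approximation (e.g., Klein--Ravi or Guha--Khuller). In the LP, I introduce variables $x_v^i\in[0,1]$ representing the fractional assignment of $v$ to cluster $i$, subject to $\sum_{i\in[k]} x_v^i\ge 1$ for every $v$ (coverage), $x_{c_i}^i=1$ for every center, and node-connectivity constraints tying cluster $i$ to its center: for every $v$ and every vertex-separator $S\subseteq V\setminus\{c_i,v\}$ between $v$ and $c_i$ in $G$, $\sum_{u\in S} x_u^i\ge x_v^i$. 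This is equivalent to a polynomial-size flow formulation in which $x_u^i$ plays the role of a vertex capacity in cluster $i$, so the LP is solvable in polynomial time. Minimizing $\sum_{i,v} x_v^i\, d(v,c_i)$ gives $\text{LP}\le\mathrm{OPT}$, because the integral indicator of any feasible clustering is LP-feasible.

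After solving the LP, for each vertex $v$ I pick $i^*(v)\in\argmax_i x_v^i$; since $\sum_i x_v^i\ge 1$, pigeonhole gives $x_v^{i^*(v)}\ge 1/k$. Let $T_i=\{v:i^*(v)=i\}\cup\{c_i\}$. For each $i\in[k]$ I independently run an $O(\log n)$-approximation for the node-weighted Steiner tree problem on $G$ with terminal set $T_i$, root $c_i$, and node weights $d(u,c_i)$, and declare the returned connected subgraph to be $V_i$. By construction each $V_i$ is connected and contains $c_i$, and every vertex lies in at least one $T_i\subseteq V_i$, so the output is a feasible non-disjoint clustering.

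For the analysis, fix a cluster $i$ and set $y_u:=k\,x_u^i$. For every terminal $v\in T_i$ we have $y_v=k\,x_v^i\ge 1$, and for every vertex-cut $S$ separating a terminal $v\in T_i$ from $c_i$, $\sum_{u\in S} y_u = k\sum_{u\in S} x_u^i \ge k\,x_v^i\ge 1$; hence $y$ is a feasible fractional solution of the node-weighted Steiner LP of cluster $i$ with value $k\sum_u x_u^i\, d(u,c_i)$. By the $O(\log n)$ integrality gap of that LP, the Steiner approximation returns a $V_i$ of cost at most $O(\log n)\cdot k\sum_u x_u^i\, d(u,c_i)$, and summing over $i$ yields total cost $O(k\log n)\sum_{i,u}x_u^i\, d(u,c_i)\le O(k\log n)\cdot\mathrm{OPT}$. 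The technical heart of the argument is this simultaneous rescaling step—verifying that $y=k\,x^i$ respects all exponentially many Steiner cut constraints at once, which works because the original connectivity inequalities are homogeneous in $x^i$ and the $1/k$ pigeonhole slack exactly absorbs the scale-up—together with the requirement that the invoked Steiner approximation compares against the LP value and not merely the integer optimum, which holds for standard LP-rounding analyses; no step uses the triangle inequality, so the bound extends to arbitrary non-negative dissimilarities as claimed.
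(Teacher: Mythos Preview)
Your proposal is correct and follows essentially the same approach as the paper: solve the cut-based LP relaxation, use pigeonhole to attach each vertex to a center with fractional mass at least $1/k$, scale the LP solution by $k$ to obtain feasible node-weighted Steiner LP solutions, and invoke an LP-relative $O(\log n)$ Steiner approximation (Klein--Ravi, with the primal-dual analysis of Guha et al.) per cluster. The paper's only extra detail is its Lemma~8, which explicitly verifies that cuts between \emph{arbitrary} terminal pairs---not only terminal-to-root cuts---are covered by the scaled solution; you implicitly rely on this standard equivalence of the rooted and unrooted Steiner LPs.
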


\begin{restatable}{theorem}{ClusteringNonDisjoint} 
  \label{thm:ClusteringNonDisjoint}
    For the non-disjoint connected $k$-median problem, there is a polynomial-time algorithm with approximation factor $O(k^2 \log n)$.
\end{restatable}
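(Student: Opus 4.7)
The plan is to derive Theorem~\ref{thm:ClusteringNonDisjoint} from the assignment version (Theorem~\ref{thm:AssignmentNonDisjoint}) by selecting a good candidate set $C$ of $k$ centers and paying an extra factor $O(k)$ for this selection. First, I would run any constant-factor approximation algorithm for the \emph{unconstrained} $k$-median problem (e.g.\ the one of~\cite{byrka2017improved}) on $(V,d)$, producing a center set $C = \{c_1,\ldots,c_k\}\subseteq V$ whose unconstrained cost $\sum_{v\in V} d(v, C)$ is at most $O(1)\cdot \text{OPT}$; this bound holds because every feasible connected non-disjoint solution is in particular a feasible unconstrained solution, so the unconstrained optimum lower-bounds $\text{OPT}$. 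Feeding $C$ into the algorithm of Theorem~\ref{thm:AssignmentNonDisjoint} then produces a feasible non-disjoint connected clustering of cost at most $O(k\log n)\cdot \text{ASGN}^*(C)$, where $\text{ASGN}^*(C)$ denotes the optimum cost over feasible non-disjoint connected assignments with the \emph{fixed} center set $C$.

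The proof thus reduces to the structural estimate $\text{ASGN}^*(C)\le O(k)\cdot \text{OPT}$, which combined with the above yields the desired $O(k^2\log n)$ guarantee. I would establish it by exhibiting an explicit feasible clustering that uses the centers $C$. Let $(C^*,\{V_1^*,\ldots, V_k^*\})$ be an optimal connected non-disjoint solution and, for each $i$, let $\sigma(i)$ be the index of a center in $C$ closest to $c_i^*$ under $d$. Define, for each $j\in[k]$,
\[
  \bar V_j \;:=\; \{c_j\}\;\cup\;\bigcup_{i\,:\,\sigma(i)=j} V_i^*\;\cup\;P_j,
\]
where $P_j\subseteq V$ is an additional vertex set that augments $\bar V_j$ into a connected subgraph of $G$ containing $c_j$; since we are in the non-disjoint setting, vertices may be reused across different $P_j$'s and $\bar V_j$'s with no feasibility penalty beyond the summation. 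The ``re-centering'' cost $\sum_i\sum_{v\in V_i^*} d(v,c_{\sigma(i)})$ can be bounded by $O(k)\cdot \text{OPT}$ using the triangle inequality together with the averaged bound $|V_i^*|\cdot d(c_i^*,C)\le \phi_i^* + \sum_{v\in V_i^*}d(v,C)$ and the ``non-disjoint blow-up'' $\sum_i\sum_{v\in V_i^*}d(v,C)\le k\cdot\sum_{v\in V}d(v,C)\le O(k)\cdot\text{OPT}$, which accounts for each vertex appearing in up to $k$ optimal clusters.

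The main obstacle I anticipate is bounding the connectivity overhead $\sum_j\sum_{v\in P_j}d(v,c_j)$, because a priori a shortest $G$-path can have arbitrarily large $d$-length and no direct metric-vs-connectivity relationship is available. I would resolve this by building $P_j$ out of the other optimal clusters themselves, used as ``bridges'': since each $V_\ell^*$ is connected in $G$ and $\bigcup_\ell V_\ell^* = V$, any collection of optimal clusters together with the center $c_j$ can be linked inside $G$ by a chain of at most $k$ optimal clusters, and the $d$-cost of including one such bridge cluster $V_\ell^*$ into $\bar V_j$ is controlled cluster-by-cluster by the same triangle-inequality argument used for re-centering, giving a contribution comparable to $\phi_\ell^* + |V_\ell^*|\,d(c_\ell^*, c_j)$. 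Careful bookkeeping---each optimal cluster is charged as a bridge at most $O(k)$ times overall, and the non-disjoint blow-up is again absorbed into a factor $k$---then yields another $O(k)\cdot \text{OPT}$ contribution, completing the structural estimate. This routing-through-optimal-clusters argument, rather than the reduction itself, is the technical heart of the proof.
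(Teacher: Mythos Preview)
Your reduction to the assignment problem via unconstrained $k$-median centers does not work: the structural estimate $\text{ASGN}^*(C)\le O(k)\cdot\text{OPT}$ is false, even when $C$ is an exact unconstrained $k$-median optimum. Take $k=3$ and the path $a_1-\cdots-a_N-z-b_1-\cdots-b_N-q$ as connectivity graph, with metric positions $0$ for all $a_i$ and for $q$, position $1$ for $z$, and position $2$ for all $b_j$. The connected optimum uses centers $\{q,a_1,b_1\}$ with clusters $\{q\}$, $\{a_1,\ldots,a_N\}$, $\{z,b_1,\ldots,b_N\}$ and has cost $1$. The unconstrained $k$-median optimum is $0$, attained for instance by $C=\{q,z,b_1\}$, which your analysis must handle since it uses only the bound $\sum_v d(v,C)\le O(\text{OPT})$. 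For this $C$, covering $\{a_1,\ldots,a_N\}$ forces either all $a_i$ into $z$'s cluster at cost $N$, or all of $\{z,b_1,\ldots,b_N\}$ into $q$'s cluster at cost $\ge 2N+1$; hence $\text{ASGN}^*(C)\ge N$ while $\text{OPT}=1$.

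Your bridging argument breaks exactly here: the $a$-block is assigned by $\sigma$ to $q$ (its metrically closest center), but connecting it to $q$ through the optimal clusters forces the $b$-block in as a bridge, and the term $|V_\ell^*|\,d(c_\ell^*,c_j)$ for that bridge is $\Theta(N)$. Metric closeness of $c_j$ to the \emph{assigned} optimal center gives no control over its distance to the \emph{bridge} clusters' centers, and the ``non-disjoint blow-up'' factor $k$ cannot absorb this.

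This is why the paper does not select centers via the unconstrained problem. It solves the LP relaxation of the \emph{connected} problem and performs a Charikar-et-al.-style rounding adapted to maintain fractional connectivity throughout: first consolidating fractional opening onto at most $2k$ half-open centers via ``shifts'' that respect cuts (losing a factor $O(k)$), then finding replacements and selecting $k$ integral centers, and only then invoking the node-weighted Steiner-tree rounding of Theorem~\ref{thm:AssignmentNonDisjoint} (losing $O(k\log n)$).
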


\subsection{Outline}
The remainder of this paper is organized as follows: Section \ref{sec:non_disjoint} deals with the non-disjoint variant. In Section \ref{sec:NDCkMAP} we present our algorithm for the assignment version and in Section \ref{sec:nd_find_centers_short} we give an overview on how the centers can be found. The full description and analysis of the latter algorithm can be found in Appendix \ref{apx:finding_centers}. The hardness proof for the disjoint setting is provided in Appendix \ref{sec:appendix_hardness}. 

Our results for the disjoint connected k-median problem are presented in Appendix \ref{apx:disjoint}. Appendix \ref{apx:disjoint_general} focuses on the hardness result on general graphs and Appendix \ref{apx:dyn_prog} presents the exact algorithm for the case that the connectivity graph is a tree. 

\mycomment{
\begin{table}[h]
    \centering
\begin{tabular}{ |c|c|c|c|  }
 \hline
 \multicolumn{2}{|c|}{non disjoint}& \multicolumn{2}{|c|}{disjoint}\\
 \hline
Centers given   & Centers not given   & General graph&   Tree graph\\
\hline
 $[\Omega(\log(n)),O(k\log(n))]$ &$[\Omega(\log(n)),O(k^2\log(n))]$ & $[\Omega(n^{1-\epsilon}), O(n \log^2(k))]$&  $[1,1]$\\
 \hline
\end{tabular}
\end{table}
}

\section{Non-disjoint Connected $k$-Median Problem}\label{sec:non_disjoint}


\subsection{Assignment Version}\label{sec:NDCkMAP}

The high-level idea to solve the assignment version of the connected $k$-median problem is to first define an LP formulation for it and compute an optimum fractional solution for the LP relaxation in polynomial time. 
Using an observation that the connected clusters are similar to so called node weighted Steiner trees we can use this LP solution to obtain $k$ instances of the node weighted Steiner tree problem such that every node needs to be contained in at least one Steiner tree and there exist a fractional solution for each of these instances such that the total cost of these solutions is bounded by $k$ times the optimum LP solution.
Using a primal-dual algorithm by Klein and Ravi \cite{Klein.Ravi.1995} we are able to find integral Steiner trees that cost at most $O(\log n)$ times as much as the respective fractional solutions and which directly form $k$ connected clusters assigning each node to at least one center. In total, this gives us an approximation factor of $O(k \log n)$.

Now we will define the LP formulation. For that, we use the following definition of cuts:
\begin{definition}[Vertex Cuts]
    For two arbitrary nodes $v, w \in V$, a \emph{$(v,w)$-cut} is a subset $S\subseteq V$ such that all paths from $v$ to $w$ in $G$ contain at least one node in $S$. The set of all $(v,w)$-cuts is denoted by $\mathcal{S}_{v,w}$.
\end{definition}
Using this, we can reformulate the requirement that a node $v$ must be connected to its assigned center $c$ by ensuring that for any $(v,c)$-cut $S \in \mathcal{S}_{v,c}$, at least one node in $S$ is also assigned to $c$.
This gives us the following LP-formulation of the non-disjoint connected $k$-median problem: 

\begin{alignat}{3}\label{LP_nd_assignment}
        \min \quad && \multicolumn{2}{c}{ $\displaystyle\sum\limits_{c \in C, v \in V} d(v,c) x_v^c$ } \nonumber\\
        \textrm{s.t.} \quad &&\sum\limits_{c \in C} x_v^c & \geq 1, && \quad\forall v \in V,\nonumber\\ 
        &&\sum\limits_{v' \in S} x_{v'}^c &\geq x_{v}^c, && \quad\forall v \in V, c \in C, S \in \mathcal{S}_{v,c},\nonumber\\
        &&x_v^c &\in \{0,1\}, && \quad\forall c \in C, v \in V,
\end{alignat}

where for all $v \in V$, $c \in C$ the variable $x_v^c$ indicates whether $v$ is assigned to $c$ ($x_v^c= 1$) or not ($x_v^c = 0$).  While the first constraint ensures, that each point $v$ must belong to at least one cluster, the second constraint ensures that the clusters are connected, i.e., if $v$ is connected to center $c$, then in each $(v,c)$-cut, there must be at least one point assigned to $c$. To relax this problem we drop the integrality constraint and only require that $x_v^c \geq 0$ for all $v \in V$, $c \in C$.

It is worth noting that the linear program may have exponential size, as there can be exponentially many
$(v,c)$-cuts for a given pair $v, c$. However, since the separation problem can be solved in polynomial time and we also provide an alternative flow-based formulation of the LP with polynomial size in Appendix \ref{sec:NDCkMAP-alternative-LP-formulation}, it remains possible to find the optimum fractional LP-solution in polynomial time.

Let $\tilde{x}$ be such an optimum fractional solution. By Markov's inequality it is easy to verify that for any $v \in V$ there exists at least one $c \in C$ such that $\tilde{x}_v^c \geq \frac{1}{k}$. By multiplying all values with $k$, we get a new LP solution $x$ with $x_v^c := k \tilde{x}_v^c$ for all $v \in V,c \in C$ where each node is assigned to at least one center in its entirety. But even if for an $v \in V, c \in C$ it holds that $x_v^c \geq 1$ there may still exist $(v,c)$-cuts not containing any nodes fully assigned to $c$. To deal with this problem we need some results shown for node weighted Steiner trees.

\begin{definition}[Node Weighted Steiner Tree Problem]
    In an instance of the \emph{node weighted Steiner tree problem} we are given a graph $G = (V,E)$, a set of terminal nodes $T \subseteq V$ and a weight function $w: V \rightarrow \mathbb{R}_{\geq 0}$. The goal is to find a subset $S \subseteq V$ such that $T \subseteq S$, the nodes of $S$ form a connected subgraph of $G$, and the sum $\sum_{v \in S} w(v)$ is minimized. 
\end{definition}

Similar as for the connected $k$-median problem there also exists an LP relaxation for the node weighted Steiner tree problem:

\begin{alignat}{3}\label{LP_steiner}
    \min \quad && \multicolumn{2}{c}{ $\displaystyle\sum\limits_{v \in V} w(v) x_v$ } \nonumber \\
    \textrm{s.t.} \quad && \sum\limits_{v' \in S} x_{v'}&\geq 1, && \quad\forall u,v \in T, S \in \mathcal{S}_{u,v},\nonumber\\
    &&x_v^c &\geq 0, && \quad\forall v \in V,
\end{alignat}


One might note that the assignment version of the connected $k$-median problem is closely related to the node weighted Steiner tree problem as one could reinterpret the problem as finding $k$ node weighted Steiner trees such that each node is contained in at least one Steiner tree. The distances of the nodes to the respective centers can in this context be interpreted as $k$ different node weight functions for the different trees.
Using the solution $x$ we will now define for any center $c \in C$ the set $T_c := \{c\} \cup \{ v \mid x_v^c \geq 1 \}$. Then the solution $x$ can be subdivided into $k$ solutions for the Steiner tree LP using the respective sets $\left(T_c\right)_{c \in C}$ as terminals.

\begin{lemma}
\label{lem:split_lp}
    Let $(x_v^c)_{v \in V,c \in C}$ be a solution for LP \eqref{LP_nd_assignment}. Then for any center $c \in C$ the values $(x_v^c)_{v \in V}$ form a solution for LP \eqref{LP_steiner} with $T := T_c$ and $w(v) := d(v,c)$.
\end{lemma}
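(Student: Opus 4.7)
The plan is to verify each constraint of LP \eqref{LP_steiner} instantiated with $T = T_c$ and $w(\cdot) = d(\cdot, c)$. Non-negativity $x_v^c \geq 0$ is inherited directly from the feasibility of $x$ for LP \eqref{LP_nd_assignment}, so the real work is to show $\sum_{v' \in S} x_{v'}^c \geq 1$ for every pair $u, v \in T_c$ and every $(u,v)$-cut $S$ in $G$. My strategy is to reduce each such inequality to one of the cut constraints of LP \eqref{LP_nd_assignment}, namely $\sum_{v' \in S'} x_{v'}^c \geq x_w^c$ whenever $S'$ is a $(w,c)$-cut, by locating $c$ relative to $S$.

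First I would dispose of the degenerate cases: if $u = c$, then $S$ is by definition a $(v,c)$-cut, so $\sum_{v' \in S} x_{v'}^c \geq x_v^c \geq 1$, using that $v \in T_c \setminus \{c\}$ forces $x_v^c \geq 1$; the case $v = c$ is symmetric, and $u = v$ is immediate because every $(u,u)$-cut must contain $u$ itself. For the main case $u, v \in T_c \setminus \{c\}$ (so $x_u^c \geq 1$ and $x_v^c \geq 1$), I would split on whether $c \in S$. If $c \in S$, then $S$ is trivially a $(u,c)$-cut since every $u$-to-$c$ path contains its endpoint $c \in S$, and the LP constraint yields $\sum_{v' \in S} x_{v'}^c \geq x_u^c \geq 1$. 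If $c \notin S$, removing $S$ from $G$ separates $u$ and $v$ into distinct components, and $c$ lies in some component of $G \setminus S$: if $c$'s component differs from $u$'s, every $u$-to-$c$ path must traverse $S$, so $S$ is a $(u,c)$-cut; otherwise $c$ shares a component with $u$ and hence is separated from $v$, making $S$ a $(v,c)$-cut. Either way, the existing cut constraint of LP \eqref{LP_nd_assignment} delivers the required lower bound of $1$.

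The only genuinely subtle step is the subcase $c \in S$: the tempting move of writing $\sum_{v' \in S} x_{v'}^c \geq x_c^c$ is unhelpful because the LP does not supply any a priori lower bound on $x_c^c$. The observation that sidesteps this, and which I expect to be the crux of the argument, is that $c \in S$ automatically qualifies $S$ as a $(u,c)$-cut in the sense of the Vertex Cuts definition, which activates the stronger constraint involving $x_u^c \geq 1$ instead and thereby removes any dependence on $x_c^c$.
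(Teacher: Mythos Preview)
Your proof is correct and follows essentially the same route as the paper. The paper also reduces the Steiner-LP cut constraint for $u,v\in T_c$ to the statement that $S$ must be a $(u,c)$-cut or a $(v,c)$-cut, just phrased contrapositively: assuming $\sum_{v'\in S}x_{v'}^c<1$, neither separation can hold, so paths $u\text{--}c$ and $c\text{--}v$ avoiding $S$ exist and concatenate to a $u\text{--}v$ path avoiding $S$, a contradiction. Your explicit split on $c\in S$ versus $c\notin S$ is not needed for this argument (if $c\in S$ then $S$ is already a $(u,c)$-cut, which the paper's contrapositive handles implicitly), but it does no harm.
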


\begin{proof}
    This lemma is proved by contradiction.
    Assume that $(x_v^c)_{v \in V,c \in C}$ is a solution for LP \eqref{LP_nd_assignment} that violates some constraint of LP \eqref{LP_steiner}.
    Then there exist two terminals $u,v \in T_c$ and a $(u,v)$-cut $S$ such that
    \begin{equation*}
        \sum_{v' \in S} x_{v'}^c < 1.
    \end{equation*}

    Now we distinguish two cases. In the first case let $c$ be one of the nodes $\{u, v\}$, w.l.o.g. $u = c$. Then we can use the definition of LP \eqref{LP_nd_assignment} and $T_c$ to obtain
    \begin{equation*}
        \sum_{v' \in S} x_{v'}^c \geq x_v^c = 1,
    \end{equation*}
    which contradicts the assumption.

    In the second case both $u \neq c$ and $v \neq c$. Then by the previous argument $S$ can neither separate $u$ nor $v$ from $c$ because then its weight would be at least $1$. Thus there exists a path from $u$ to $c$ and from $c$ to $v$ not containing any nodes in $S$. But then there also exists a path from $u$ to $v$ not using any nodes in $S$ and $S$ would not be a $(u,v)$-cut.
\end{proof}

This lemma is crucial because for the node weighted Steiner tree problem there exists an $O(\log n)$-approximation algorithm by Klein and Ravi \cite{Klein.Ravi.1995} for which Guha et al.~\cite{GuMoNaSc99} later proved that it is implicitly a primal-dual algorithm and that the cost of the solution is at most $O(\log n)$ times worse than any fractional solution of LP \eqref{LP_steiner}. Alternatively Könemann et al provided a more detailed primal-dual algorithm for a generalized version of the node weighted Steiner tree problem ensuring the same guarantees \cite{konemann2013lmp}. Using these algorithms and Lemma \ref{lem:split_lp} we are able to subdivide $x$ into $k$ Steiner tree LP solutions that can each be transformed into a feasible cluster.

\AssignmentNonDisjoint*

\begin{proof}
    In the first step of the algorithm we will first calculate an optimum solution $\tilde{x}$ for the relaxed version of LP \eqref{LP_nd_assignment} which as argued above can be done in polynomial time. Since every assignment solution for the connected $k$-median problem is also an LP solution, the costs of $\tilde{x}$ are upper bounded by the optimum value \textrm{OPT}. After multiplying the values by $k$, the cost of solution $x$ is then again bounded by $k\cdot \textrm{OPT}$.

    Consider $T_c$ as defined above, by Markov's inequality for every node $v$ there exists a center $c$ such that $\tilde{x}_v^c \geq \frac{1}{k}$ which means that $v \in T_c$. Thus we have $\bigcup_{c \in C} T_c = V$. By Lemma \ref{lem:split_lp} the fractional solution $x$ can be split up into $k$ solutions of the Steiner tree LP \eqref{LP_steiner} with the respective terminal sets $\left(T_c\right)_{c \in C}$ such that the total sum of the objective value is exactly the cost of $x$. 

    By applying the algorithm by Klein and Ravi \cite{Klein.Ravi.1995} with $T = T_c$ and $w(v) = d(v,c)$ for every $c \in  C$ separately, one obtains a Steiner tree $S_c$ with $T_c \subseteq S_c$ whose cost is upper bounded by $O(\log n)$ times the cost of the respective LP solution. Thus one obtains in polynomial time $k$ Steiner trees each connecting the respective terminal set $T_c$ and total cost bounded by $O(\log n)$ times the objective value of $x$ which results in an upper bound of $O\left(k \log n\right) \textrm{OPT}$. Since every node is contained in at least one of these Steiner trees, this directly yields the desired solution for the assignment version of the non-disjoint connected $k$-median problem.

    This proof works even if the distances are not a metric because it does not use any of their properties.
    %
\end{proof}

One may note that this approach works with any approximation algorithm for the node weighted Steiner tree problem whose cost is bounded by the optimum LP solution. For some restricted graph classes, including planar graphs, Demaine et al. \cite{demaine2009node} provided a primal dual algorithm calculating a constant approximation. Thus the approach can also be used to get an $O(k)$-approximation (and $O(k^2)$ if the centers are not given) for these kind of connectivity graphs.

\subsection{Finding Centers}\label{sec:nd_find_centers_short}

In this section we give an overview how suitable centers for the non-disjoint connected $k$-median problem can be found. For a more detailed description of the procedure and its analysis we refer to Section \ref{apx:finding_centers}. First we will again provide a suitable ILP formulation:  

\begin{alignat}{3}\label{LP_cut_centersMain}
    \min \quad && \multicolumn{2}{c}{ $\displaystyle\sum\limits_{v,c \in V} d(v,c) x_v^c$ } \nonumber\\
    \textrm{s.t.} \quad &&\sum\limits_{c \in V} x_v^c & \geq 1, && \quad\forall v \in V,\nonumber\\ 
    &&\sum\limits_{c \in V} x_c^c &\leq k, && \nonumber\\
    &&\sum\limits_{v' \in S} x_{v'}^c &\geq x_{v}^c, && \quad\forall v,c \in V, S \in \mathcal{S}_{v,c},\nonumber\\
    &&x_v^c &\in \{0,1\}, && \quad\forall v,c \in V,
\end{alignat}


where $\mathcal{S}_{v,c}$ denotes the set of vertex cuts, separating $v$ and $c$, defined as in the previous section.

The main difference to the assignment variant is that we do now allow to assign nodes to any other nodes and not only to some predefined centers. However, by definition of $\mathcal{S}_{v,c}$, it holds that $\{c\} \in \mathcal{S}_{v,c}$ for any $v,c \in V$. Thus $x_v^{c}$ will automatically be limited by $x_{c}^{c}$. This naturally leads to the property that nodes can only be assigned to $c$ if $x_{c}^{c}= 1$. Thus limiting $\sum_{c \in V} x_c^c$ by $k$ directly ensures that any solution of the ILP uses at most $k$ centers.

If we relax the integrality constraint and only require that $x_v^c \geq 0$ for all $v,c \in V$ also LP \eqref{LP_cut_centersMain} can be solved in polynomial time despite its exponential size. The drawback of this is that for arbitrarily many nodes $c' \in V$ it can hold that $x_{c'}^{c'} >0$ if the respective values are sufficiently small. This means that other nodes can at least partially be assigned to them and we will call the value $x_{c'}^{c'}$ the \emph{opening} of $c'$.

The main focus of this section is to move these openings to a limited set of $k$ centers that will be determined by our algorithm such that reassigning the nodes accordingly is not too expensive. To do this, we follow a similar approach as the LP rounding algorithm for $k$-median by Charikar et al. \cite{charikar1999constant}. In the first step, we will find a solution in which every node is opened either by at least $\frac{1}{2}$ or not opened at all (limiting the number of centers to $2k$). In the latter steps these openings will then be made integral by first finding possible replacements for some of the centers, which possibly allows us to close the respective centers, and then choosing a subset of the centers of size $k$ such that for every center itself or its replacement is contained in this subset.

We will relax the connectivity constraint during the algorithm. We say for an $\alpha \in [0,1]$ that an assignment $\left(x_v^c\right)_{v,c \in V}$ is an $\alpha$-connected solution if $\sum_{c \in V} x_c^c \leq k$, $\forall v \in V: \sum_{c \in V} x_v^c \geq 1$ and for all $v,c \in V$ and all cuts $S \in \mathcal{S}_{v,c}$ separating $v$ and $c$ it holds that $\sum_{v' \in S} x_{v'}^c \geq \min(\alpha,x_v^c)$. To simplify notation, we will denote the weight of the minimum cut between two nodes $v$ and $c$ (with respect to the values $\left(x_{v'}^{c}\right)_{v' \in V}$) by $\sep^x(v,c) = \min_{S \in \mathcal{S}_{v,c}}\sum_{v' \in S} x_{v'}^c$. Then the last requirement can be reformulated as $\sep^x(v,c) \geq \min(\alpha,x_v^c)$. It is easy to verify that an assignment corresponds to a fractional solution for LP \eqref{LP_cut_centersMain} if and only if it is $1$-connected. 

Now we give an overview over the different steps of the algorithm. Our main focus will lie on the first step, which ensures that all centers will be half opened afterwards. This is due to the fact that it introduces the largest amount of novel ideas and necessary modifications compared to the existing algorithm by Charikar et al.:

\textbf{Step 1:} To obtain half-opened centers, let an optimal (fractional) LP solution $\tilde{x}$ be given. For each $v$, let its mass be defined as $m_v = \sum_{c \in V} \tilde{x}_v^c$, its cost (or price) as $p_v = \sum_{c \in V} \tilde{x}_v^c \cdot d(v,c)$ and its (average) radius as $r_v = \frac{p_v}{m_v}$. We say that an assignment of a node $v$ to another node $c$ is \emph{good} if $d(v,c) \leq 4r_v$. Since $m_v \geq 1$ for all nodes $v \in  V$, we may observe that because of Markov's inequality at least $\frac{3}{4}$ of $v$ gets assigned via good assignments, i.e., $\sum_{c \in V: d(v,c) \leq 4r_v} \tilde{x}_v^{c} \geq \frac{3}{4}$. 

We will simultaneously maintain two assignments $(x_v^c)_{v,c \in V}$ and $(y_v^c)_{v,c \in V}$ such that for all $v,c  \in V$ initially $x_v^c = \tilde{x}_v^c$ and $y_v^c = 0$ and that at the end of the algorithm $x_v^c = 0$. The final values of $y_v^c$ will be the assignment using half opened centers. During the algorithm, the combination of $x$ and $y$ will always be a $\frac{1}{8k}$-connected solution, which means that the following inequalities stay satisfied throughout the entire process:

\begin{align}
         \sum_{c \in V} \left(x_v^c + y_v^c \right)& \geq 1 & \forall v \in V \label{half_inv_assignedMain}\\
         \sum_{c \in V} \left(x_c^c + y_c^c\right)&\leq k& \label{half_inv_numberMain}\\
         \sum_{v' \in S} x_{v'}^c &\geq x_{v}^c & \forall v,c \in V, S \in \mathcal{S}_{v,c} \label{half_inv_xMain}\\
         \sum_{v' \in S} y_{v'}^c &\geq \min\left(\frac{1}{8k}, y_{v}^c\right) &  \forall v,c \in V, S \in \mathcal{S}_{v,c} \label{half_inv_yMain}
\end{align}

The algorithm will now maintain an initially empty center set $C$ and repeat in an alternating order the following two steps: First it decides for an additional node whether it gets added to the center set or not (in the latter case it will be added to a so called environment of an already existing center). Afterwards it shifts some openings in $x$ (the $x_c^c$ that are greater than $0$) to already chosen centers $c$ in $C$.

To get an intuition regarding the shifts, it can be beneficial to sometimes think about the connectivity constraints not from a cut-perspective but to rather interpret the assignments as flows. Using the max-flow min-cut theorem, it is easy to see that if we interpret the values $(x_{v'}^{c'})_{v' \in V}$ as node capacities then there exists a flow from a node $v$ to a node $c'$ of value $f$ exactly if $\sep^x(v,c') \geq f$. Thus our connectivity constraint implies that there always exists a flow from $v$ to $c'$ of value $x_v^{c'}$. By basically reverting this flow for a center $c \in C$ and a node $c' \in V$ we are able to shift an opening of value $x_c^{c'}$ from $c'$ to  $c$. One important question is now how the nodes that were previously assigned to $c'$ get assigned. If $x_c^{c'}$ was equal to $x_{c'}^{c'}$, the entire opening of $c'$ can be shifted and we can simply reassign all nodes to $c$ by setting $y_v^c = y_v^c + x_v^{c'}$ and $x_v^{c'} = 0$ for any $v \in V$. However, if $x_c^{c'}$ was smaller than $x_{c'}^{c'}$ only a fraction of the opening of $c'$ gets shifted and we basically have to split up the previous assignments to $c'$ between $c$ and $c'$. As Figure \ref{fig:split_opening} shows this can lead to situations in which the total assignment to $c$ and $c'$ after this shift actually needs to be larger than the assignment before this shift. Since it is actually rather involved to determine the new values in this situation, a formal description will only be provided in Appendix \ref{sec:half}. To get an intuition regarding the algorithm it is for now sufficient to only consider the first case. 

\begin{figure}
\centering
    \includegraphics[width= 0.4 \textwidth]{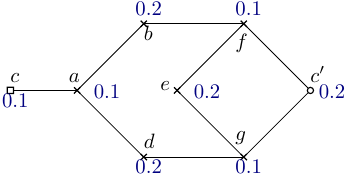}
    \caption{A sketch depicting a situation in which, at least if we require $1$-connectivity, the total assignment needs to increase if we apply a shift from $c'$ to $c$. The blue values denote the assignments of the respective nodes to $c'$ before the shift, i.e., the $x_v^{v'}$-values. We assume that $x_v^c=0$ for all nodes $v$. Now if we apply a shift from $c'$ to $c$ by $x_c^{c'}=0.1$, both $c'$ and $c$ have an opening of $0.1$. Thus the nodes $b,d,e$ all need to be assigned to $c$ and $c'$ with value 0.1. If we would try to find assignments of $a$, $f$ and $g$ such that for each node its total assignment to $c$ and $c'$ is exactly $0.1$ we would need to set $x_g^{c'} = x_f^{c'} = 0.1$ to ensure the connectivity of $b$ and $d$ to $c'$. However this would mean that $y_f^c = y_g^c = 0$ which would separate $e$ from $c$. Thus to ensure connectivity, for at least one node the total assignment after the shift needs to be larger than the assignment before the shift.}
    \label{fig:split_opening}
\end{figure}

Since either way also the assignments of nodes assigned to $c'$ have to be shifted in this situation, we have to make sure that the distances of the respective shifts are not too large which the algorithm will ensure by only allowing shifts with a rather small distance between $c$ and $c'$ in the beginning and increasing this distance over time. More details will follow later. 

While the shifts depend on the chosen centers $C$, the choice whether a node $v$ will be added to $C$ in turn depends on the shifts that have already been performed before $v$ gets considered by the algorithm. The algorithm terminates once it has decided for every node whether it gets added to $C$ and all openings in $x$ have been shifted.

To determine which nodes should be added to $C$, we iterate over all nodes ordered by their radii in increasing order and check whether we can open them by at least $\frac{1}{2}$. To be more precise, a node $v$ will be chosen as a center if the sum of the remaining good assignments $(x_v^{c'})_{c' \in V}$ is at least $\frac{1}{2}$. The construction of the algorithm will ensure that it will shift at least all remaining openings belonging to good assignments of $v$ to $v$ itself which means that $v$ will be at least half opened. When $v$ is added to the center set we also create its environment $S_v = \{v\}$, which initially only contains $v$ itself.

In the case that $v$ does not get chosen as a center, we know that there were at least $\frac{1}{4}$ of good assignments of $v$ that have already been shifted to other centers in $C$. Since every center in $C$ gets at least half opened and the entire opening is limited by $k$, we know that $|C| \leq 2k$ and thus there exists at least one center to which $\frac{1}{8k}$ of the good assignments of $v$ have been shifted.
One critical property of the algorithm will be that before $v$ gets considered, the distance between the source $c'$ and target $c$ of any shift will be at most $4r_v$. Thus for any good assignment to a node $z$ whose opening has been shifted to a center $c$ we have that $d(v,c) \leq d(v,z) + d(z,c) \leq 8r_v$.
As a result we may conclude that there exists a center $c \in C$ such that $y_v^c \geq \frac{1}{8k}$ while $d(v,c) \leq 8 r_v$ and we add $v$ to the environment $S_c$ of the closest center $c$ fulfilling these requirements.

Adding $v$ to this environment allows us to shift some openings of a node $c'$  with $x_v^{c'}> 0$ over the node $v$ to $c$ even if there does not exist a direct flow from $c$ to $c'$ (i.e., $c$ is not assigned to $c'$). The intuition behind this is that we first revert the flow from $v$ to $c'$ and thus are already shifting the opening to $v$. Now by combining $y_v^c \geq \frac{1}{8k}$ and Property \eqref{half_inv_yMain} we know that already before this shift the minimum cut between $v$ and $c$ has value at least $\frac{1}{8k}$. Thus if we are moving the opening (and the assignments to it) to $c$ instead of $v$ itself we know that the solution still stays $\frac{1}{8k}$-connected. Furthermore since $v$ and $c$ are not too far apart, we know that the distances of these shifts does not increase too much compared to a direct shift to $v$. This will help us ensuring that if the openings of some nodes, to which $v$ got assigned, get shifted to some other centers then the distance added by this shift will not be too large.

Each time after a new node $v$ gets added to either $C$ or to an environment $S_c$ of a center $c \in C$ we will go over all possible shifts ordered by their distances in increasing order, and perform all shifts with distance at most $4 r_{\mathrm{next}}$, where $r_{\mathrm{next}}$ denotes the radius of the next node that will be considered by the algorithm. Note that any shift with radius $ \leq 4_{r_v}$ can only result from the addition of $v$ to $S_c$ or $C$ because any shift not involving $v$ would already have been applied in the last iteration (as $r_v$ is the previous value of $r_{next}$). As a result all those shifts will be performed before any shift not involving $v$ will be executed. This directly guarantees that if $v$ is added to $C$ then all openings belonging to good assignments will be shifted to $v$, which ensures that $v$ is half opened.

An important property of the shifts is that if we shift some opening of a node $c' \in V$ over a node $v \in V$ then afterwards $x_v^{c'} = 0$ which ensures that for all $v,c' \in V$ we will only shift some opening of $c'$ over $v$ at most once, limiting the number of shifts by $|V|^2$. Additionally after we have added $v$ to $C$ or a set $S_c$ there always exists a possible shift while $x_v^{c'} > 0$ for a $c' \in V$. Thus, it is possible to execute the described algorithm for a polynomial time until for every $v,c' \in V$, $x_v^{c'} = 0$ and then return $\left(y_v^{c}\right)_{v \in V, c \in C}$. A pseudocode summarizing the structure of the described procedure can be found in Algorithm \ref{Alg:pseudo_half}. The cost of the resulting $\frac{1}{8k}$-connected solution is bounded by $O(k)$ times the cost of the LP solution $\tilde{x}$.

\begin{algorithm}[H]
	\DontPrintSemicolon
	\SetKwInOut{Input}{input}
	\SetKwInOut{Output}{output}
    \caption{The general structure of the first step.}
    \label{Alg:pseudo_half}
	Order $v_1,...,v_n$ according to their radii $(r_v)_{v \in V}$ in increasing order\;
    \For{$v_1,...,v_n$}{
        Decide whether $v_i$ is added to the center set $C$ or to an environment $S_c
        $ of a center $c \in C$\;
        \While{A shift from a node $v$ to a center $c \in C$ with $d(v,c) \leq r_{v_{i+1}}$ is possible}{
            Apply the cheapest shift\;
        }
    }
\end{algorithm}

\textbf{Step 2:} Now we would like to reduce the number of centers to at most $k$. However, if we want to reduce the opening of a center $c \in C$ to $0$, we will need to find a suitable replacement, i.e., another center $\tilde{c}$ such that reassigning all nodes assigned to $c$ to $\tilde{c}$ neither costs too much nor violates our connectivity constraints (which is further relaxed to $\frac{1}{16k}$-connectivity). To do this we will need two properties of Algorithm \ref{Alg:pseudo_half}. The respective proofs can be found in Section \ref{sec:replace}:
\begin{itemize}
    \item For any node $v$ we know that if a shift from another node $c'$ to a center $c$ caused $v$ to be (partially) assigned to $c$ then the radius $r_c$ of $c$ can be bounded by the radius $r_v$ of $v$ and the distance from $v$ to $c'$. To be more precise $r_c \leq d(v,c') + 2 r_v$. Intuitively, if this was not the case, then the algorithm would have shifted some opening of $c'$ either to $v$ itself (if $v$ gets chosen as a center) or over $v$ to another center $\tilde{c} \in C$ with $v \in S_{\tilde{c}}$ before $c$ was even added to the center set which would have prevented $v$ from being assigned to $c$. The main consequence of this property is that if we can find another center $\tilde{c}$ that is only a constant times $r_c$ away from $c$ then reassigning all nodes that are currently assigned to $c$, to $\tilde{c}$ instead costs not too much.
    \item Additionally for any center $c$ it holds that if $c$ got assigned to another center $\tilde{c}$ because some opening got shifted from a node $c'$ to $c$ then it holds that $d(c,\tilde{c})$ is bounded by $d(c,c') + \max(4r_c,d(c,c'))$ because otherwise a shift from $c'$ to $c$ would have happened before the shift to $\tilde{c}$. Since the average length of the original assignments of $c$ was $r_c$, one can use this to identify a center $\tilde{c}$ that is not too far apart from $c$.
\end{itemize}

Using these two properties, we are able to split up $C$ into two sets $C_1$ and $C_{1/2}$ such that for every center $c \in C_{1/2}$ we have a successor $s(c) \in C$ such that the total cost that occurs if we replace any center in $C_{1/2}$ by its successor would not be too large and $|C_1| + \frac{1}{2}|C_{1/2}| \leq k$. Later we will then open all nodes in $C_1$ as well as at most half the nodes in $C_{1/2}$ to obtain a integral center set $C^*$ such that for all nodes in $C_{1/2}$ either the node itself or its successor is contained in $C^*$.

To obtain these sets, we consider an arbitrary center $c \in C$ and define $a_c = 1 - y_c^c$.  Then at least a total value of $a_c$ of the openings corresponding to the initial assignments of $c$ (the values $\tilde{x}_c^{c'}$ for any $c' \in V$) were shifted to other centers. If $a_c$ has value at least $\frac{1}{4}$, we are able to use Markov's inequality together with the second property to identify another center $\tilde{c}$ such that $d(c,\tilde{c})$ lies within $O(r_c)$ and $c$ is already sufficiently connected to $\tilde{c}$ (i.e., $y_c^{\tilde{c}} \geq \frac{1}{16k}$). We thus can set $s(c) = \tilde{c}$ and add $c$ to $C_{1/2}$.

However, if $a_c$ is rather small then it can be that all assignments of $c$ to other centers $\tilde{c}$ only resulted from shifts of nodes $c'$ to $\tilde{c}$ that are really far apart from $c$ which means that $\tilde{c}$ can be even further away. Additionally the assignments $y_c^{\tilde{c}}$ might also be quite small which means that we would need to increase some existing assignments to $\tilde{c}$ to ensure $\frac{1}{16k}$-connectivity if we replace $c$ by $\tilde{c}$. However, if there are multiple centers where the respective value $a_c$ is small then we know that since $\sum_{c \in C} (1- a_c) = \sum_{c \in C} y_c^c \leq k$ that if we add one of these centers to $C_{1/2}$ we can add multiple others to $C_1$. Since we do not need a replacement for the centers in $C_1$, we can use a careful potential argument to redistribute parts of the replacement cost of the centers added to $C_{1/2}$ to the centers in $C_1$ to bound the additional cost among all nodes.

\textbf{Step 3:} To decide which centers in $C_{1/2}$ can be closed we have a look at the directed Graph $G_{1/2} = (C_{1/2},S)$, where $S = \left\{ (c,s(c))\mid c,s(c) \in C_{1/2}\right\}$ contains for any center in $C_{1/2}$ an edge to its successor if the latter is again contained in $C_{1/2}$. If this graph is bipartite we can simply take the smaller half of this partition and add this to the center set $C^*$. Then for every edge one of its two endpoints is contained in $C^*$ which ensures that for every center either itself or its successor is contained in $C^*$. However, the graph might not directly be bipartite because it can contain cycles of uneven length. But we can ensure by being a little bit more careful in step 2 that any of these cycles contains at most one center $c$ with $a_c < \frac{1}{4}$. Using this, we can remove the respective cycles and thus obtain a bipartite graph. For all centers in $C_{1/2}$ not added to $C^*$, we reassign all nodes assigned to them to the respective successor $s(c)$. Since we bounded the cost of replacing all centers in $C_{1/2}$ and are only replacing a subset of them, the total cost of this is again bounded. The entire cost still stays in $O(k)$ times the cost of the fractional LP solution $\tilde{x}$.

\textbf{Step 4:} After step 3, we have an $\frac{1}{16k}$-connected solution only using integral centers. However the assignments might still be fractional. By multiplying the assignments by $16k$ the solution becomes $1$-connected and similarly as for the assignment version in Section \ref{sec:NDCkMAP} we obtain that for every node $v$ there exists at least one center $c$ such that $x_v^c \geq 1$. Again applying the primal dual algorithm for node weighted Steiner trees, we obtain an entirely integral solution:

\ClusteringNonDisjoint*







\bibliography{literature}

\appendix

\section{Making the centers integral}\label{apx:finding_centers}

In this section, we give a complete description of how suitable centers for the non-disjoint $k$-median problem can be found. Since we wanted this description to be self-contained, some parts of it will be repetitions of Section \ref{sec:nd_find_centers_short}. We first provide a suitable LP formulation:  

\begin{alignat}{3}\label{LP_cut_centers}
    \min \quad && \multicolumn{2}{c}{ $\displaystyle\sum\limits_{v,c \in V} d(v,c) x_v^c$ } \nonumber \\
    \textrm{s.t.} \quad&&\sum\limits_{c \in V} x_v^c & \geq 1, && \quad\forall v \in V,\nonumber\\
    &&\sum\limits_{c \in V} x_c^c &\leq k, && \nonumber\\
    &&\sum\limits_{v' \in S} x_{v'}^c &\geq x_{v}^c, && \quad\forall v,c \in V, S \in \mathcal{S}_{v,c},\nonumber\\
    &&x_v^c &\in \{0,1\}, && \quad\forall v,c \in V,
\end{alignat}
where $\mathcal{S}_{v,c}$ denotes the set of vertex cuts, separating $v$ and $c$, defined in Definition \ref{def:cuts}.

The main difference to the assignment variant is that we do now allow to assign nodes to any other nodes and not only to some predefined centers. However by definition of $\mathcal{S}_{v,c}$ it holds that $\{c\} \in \mathcal{S}_{v,c}$ for any $v,c \in V$. Thus, $x_v^{c}$ will automatically be limited by $x_{c}^{c}$. This naturally leads to the property that nodes can only be assigned to $c$ if $x_{c}^{c}= 1$. Thus limiting $\sum_{c \in V} x_c^c$ by $k$ directly ensures that any solution of the LP uses at most $k$ centers.

If we relax the integrality constraint and only require that $x_v^c \geq 0$ for all $v,c \in V$ also LP \eqref{LP_cut_centers} can be solved in polynomial time despite its exponential size. The drawback of this is that arbitrarily many nodes $c' \in V$ can fulfill that $x_{c'}^{c'} >0$ if the respective values are sufficiently small. This means that other nodes can at least partially be assigned to them and we will call the value $x_{c'}^{c'}$ the \emph{opening} of $c'$.

The main focus of this section is to move these openings to a limited set of $k$ centers that will be determined by our algorithm such that reassigning the nodes accordingly is not too expensive. To do this, we follow a similar approach as the LP rounding algorithm for $k$-median by Charikar et al. \cite{charikar1999constant}. In the first step we will find a solution where every node is opened either by at least $\frac{1}{2}$ or not opened at all (limiting the number of centers to $2k$). In the latter steps these will then be made integral by first finding possible replacements for some of the centers if we choose to reduce the respective opening to $0$ and then choosing a subset of the centers of size $k$ such that for every center itself or its replacement is contained in this subset.

However during this we have to be very careful with our connectivity constraints. Because of this we need a more general definition of vertex cuts as well as some additional notation:

\begin{definition}\label{def:cuts}
For a set $N \subseteq V$ we say that $N$ is a \emph{vertex cut} between two sets  $S \subseteq V$ and $T \subseteq V$ if every path between two nodes $v \in S$ and $u \in T$ contains at least one node in $N$. One might note that by this definition $N$ can contain nodes from $S$ and $T$. We define the interior $I_T(N)$ of the cut $N$ as the set of all points $v \in V\setminus N$ for which no path from $v$ to any node in $T$ exists in $G \setminus N$. Additionally we set $H_T(N)= N \cup I_T(N)$. Then $N$ is a cut between $S$ and $T$ iff $S \subseteq H_T(N)$.

For a weight function $w: V \rightarrow \mathbb{R}_{\geq0}$ the weight of the cut $N$ is defined as $w(N) = \sum_{v \in N} w(v)$ and a cut between $S$ and $T$ is called a minimum (weight) vertex cut if it minimizes $w(N)$. Its respective weight is also denoted by $\sep^w(S,T)$. For any $v \in V$ the increase of $\sep^w(S,T)$ if we add $v$ to $S$ is denoted by 
\begin{equation*}
\Delta^w(S,v,T) = \sep^w(S \cup \{v\},T) - \sep^w(S,T).
\end{equation*}
\end{definition}

For a matrix $(x_v^c)_{v,c \in V}$ and for any $S \subseteq V$, $c \in V$ we can define $w^c_x(v) = x_v^c$ for all $v \in V$ and with a slight abuse of notation we will use $\sep^x(S,c) = \sep^{w_x^c}(S,\{c\})$ and $\Delta^x(S,v,c) = \Delta^{w_x^c}(S,v,\{c\})$. One may observe that using this notation we may reformulate the LP constraints $ \forall S \in \mathcal{S}_{v,c}: \sum_{v' \in S} x_{v'}^c \geq x_{v}^c $ for any $v,c \in V$ as $\sep^x(v,c) \geq x_v^c$.

Additionally we will relax the connectivity constraint during the algorithm. We say for an $\alpha \in [0,1]$ that an assignment (actually a matrix) $\left(x_v^c\right)_{v,c \in V}$ is an $\alpha$-connected solution if $\sum_{c \in V} x_c^c \leq k$, $\forall v \in V: \sum_{c \in V} x_v^c \geq 1$ and for all $v,c \in V$ it holds that $\sep^x(v,c) \geq \min(\alpha,x_v^c)$. It is easy to verify that an assignment is a solution for LP \eqref{LP_cut_centers} if and only if it is $1$-connected.

\subsection{Obtaining half opened centers}
\label{sec:half}

To obtain half-opened centers, let an optimum LP solution $\tilde{x}$ be given. For each $v$, let its mass be defined as $m_v = \sum_{c \in V} \tilde{x}_v^c$, its cost (or price) as $p_v = \sum_{c \in V} \tilde{x}_v^c d(v,c)$ and its (average) radius as $r_v = \frac{p_v}{m_v}$. We say that an assignment of a node $v$ to $c$ is \emph{good} if $d(v,c) \leq 4r_v$. Since $m_v \geq 1$ for all nodes $v \in  V$, we may observe that because of Markov's inequality at least $\frac{3}{4}$ of $v$ gets assigned via good assignments, i.e., $\sum_{c \in V: d(v,c) \leq 4r_v} \tilde{x}_v^{c} \geq \frac{3}{4}$. 

We will simultaneously maintain two assignments $(x_v^c)_{v,c \in V}$ and $(y_v^c)_{v,c \in V}$ such that for all $v,c$ initially $x_v^c = \tilde{x}_v^c$ and $y_v^c = 0$ and that at the end of the algorithm $x_v^c = 0$. The final values of $y_v^c$ will be the assignment using half opened centers. During the algorithm, the combination of $x$ and $y$ will always be a $\frac{1}{8k}$-connected solution, which means that the following inequalities stay fulfilled:

\begin{align}
         \sum_{c \in V} x_v^c + y_v^c & \geq 1 & \forall v \in V \label{half_inv_assigned}\\
         \sum_{c \in V} x_c^c + y_c^c&\leq k& \label{half_inv_number}\\
         \sum_{v' \in S} x_{v'}^c &\geq x_{v}^c & \forall v,c \in V, S \in \mathcal{S}_{v,c} \label{half_inv_x}\\
         \sum_{v' \in S} y_{v'}^c &\geq \min\left(\frac{1}{8k}, y_{v}^c\right) &  \forall v,c \in V, S \in \mathcal{S}_{v,c} \label{half_inv_y}
\end{align}

The algorithm will now maintain an initially empty center set $C$ and repeats in an alternating order the following two steps: First it decides for an additional node whether it gets added to the center set or not (in the latter case it will be added to a so called environment of an already existing center). Afterwards it shifts some openings in $x$ (the $x_c^c$ that are greater than $0$) to already chosen centers $c$ in $C$. 

To get an intuition regarding the shifts it can be beneficial to sometimes think about the connectivity constraints not from a cut-perspective but rather interpret the assignments as flows. Using the min-flow max-cut theorem it is easy to see that if we interpret the values $\left(x_{v'}^{c'}\right)_{v' \in V}$ as node capacities then there exists a flow from a node $v$ to a node $c'$ of value $f$ exactly if $\sep^x(v,c') \geq f$. Thus our connectivity constraint implies that there always exists a flow from $v$ to $c'$ of value $x_v^{c'}$. By basically reverting this flow for a center $c \in C$ and a node $c' \in V$ we are able to shift an opening of value $x_c^{c'}$ from $c'$ to  $c$. Since also the assignments of nodes assigned to $c'$ have to be shifted in this situation, we have to make sure that the distances of the respective shifts are not too large. More details on the exact shift procedure as well as which shifts get performed will follow later. While the shifts are obviously depending on the chosen centers $C$ the choice whether a node $v$ will be added to $C$ is in turn heavily depending on the shifts that were already performed before $v$ gets considered by the algorithm. The algorithm terminates once it has decided for every node whether it gets added to $C$ and all openings in $x$ have been shifted.

To determine which nodes should be added to $C$ we iterate over all of them ordered by their radii in increasing order and check whether we can open them by at least $\frac{1}{2}$. To be more precise a node $v$ will be chosen as a center, if the sum of the remaining good assignments $(x_v^{c'})_{c' \in V}$ is at least $\frac{1}{2}$. The construction of the algorithm will ensure that it will shift at least all remaining openings belonging to good assignments of $v$ to $v$ itself which means that $v$ will be at least half opened. 

In the case that $v$ does not get chosen as a center we know that there were at least $\frac{1}{4}$ of good assignments of $v$ that have already been shifted to one of the centers in $C$. Since every center in $C$ gets at least half opened and the entire opening is limited by $k$ we thus know that $|C| \leq 2k$ and thus there exists at least one center to which $\frac{1}{8k}$ of the good assignments of $v$ have been shifted.
One critical property of the algorithm will be that before $v$ gets considered, the distance between the source $c'$ and target $c$ of any shift will be at most $4r_v$. Thus for any good assignment to a node $z$ whose opening has been shifted to a center $c$ we have that $d(v,c) \leq d(v,z) + d(z,c) \leq 8r_v$.
As a result we may conclude that there exists a center $c \in C$ such that $y_v^c \geq \frac{1}{8k}$ while $d(v,c) \leq 8 r_v$ and we add $v$ to the environments $S_c$ of the closest center $c$ fulfilling these requirements.
Initially for every center $c \in C$ the environment is set to $S_c = \{c\}$ when $c$ gets added to $C$.

Adding $v$ to this environment allows us to shift some openings of a node $c'$  with $x_v^{c'}> 0$ over the node $v$ to $c$ even if there does not exist a direct flow from $c$ to $c'$ (i.e. $c$ is not assigned to $c'$). The intuition behind this is that we first revert the flow from $v$ to $c'$ and thus are already shifting the opening to $v$. Now by combining $y_v^c \geq \frac{1}{8k}$ and Property \eqref{half_inv_y} we know that already before this shift the minimum cut between $v$ and $c$ has value at least $\frac{1}{8k}$. Thus if we are moving the opening (and the assignments to it) to $c$ instead of $v$ itself we know that the solution still stays $\frac{1}{8k}$-connected. Furthermore since $v$ and $c$ are not too far apart, we know that the distance of these shifts does not increase too much compared to a shift directly to $v$. This will help us ensuring that if the openings of some nodes, to which $v$ got assigned, get shifted to some other centers then the distance added by this shift will not be too large.

Now let us go over the formal details of the shifts. To determine the values of $\left(x_v^c\right)_{v,c \in V}$ we will maintain  for any $c' \in V$ a set $D_{c'}$ which is the set of all nodes over which already some of the opening of $c'$ has been shifted. Then at any point during the algorithm for all $v,c' \in V$ it holds that $x_v^{c'} = \Delta^{\tilde{x}}(D_{c'},v,c')$, i.e. $x_v^{c'}$ is the amount by how much the minimum cut between $D_{c'}$ and $c'$ increases if we add $v$ to $D_{c'}$. We say that a shift from a node $c' \in V$ to a node $c \in C$ is possible if there exists a $v \in S_c$ (one of the nodes over which we may shift opening to $c$) such that $x_v^{c'} > 0$. 

To get a better intuition regarding $D_{c'}$ as well as the choice of the $x$-values, the flow-perspective might again be helpful. As mentioned above we can view the shift of some opening from $c'$ over a node $v$ to some center $c$ as directing a flow of values $x_v^{c'}$ from $c'$ to  $v $ (which then will be further shifted to $c$). Now if $x_v^{c'}$ is smaller than $x_{c'}^{c'}$ further shifts will be necessary to move the entire opening away from $c'$. Since for each of these further shifts again a flow is send from $c'$ to the respective nodes we can also interpret this as having a singular combined flow from a single source $c'$ to multiple sinks where the sinks are all the nodes in the set $D_{c'}$. For connectivity reasons we would like that also for this combined flow, the amount passing through a node $v' \in V$ is bounded by $\tilde{x}_{v'}^{c'}$. Since cuts are symmetric, $\sep^{\tilde{x}}(D_{c'},c')$ is also the value of the minimum cut between $c'$  and $D_{c'}$ and thus the value of the maximum possible flow from $c'$ to $D_{c'}$. Following the same logic we may also observe that $x_v^{c'} = \Delta^{\tilde{x}}(D_{c'},v,c')$ is the amount of additional flow that we are getting if we add $v$ to the set of sinks $D_{c'}$. Over the course of the algorithm it will now add more and more nodes to $D_{c'}$ and shift as much opening away from $c'$ as possible until the value of the maximum flow from $c'$ to $D_{c'}$ is exactly $\tilde{x}_{c'}^{c'}$ which means that the entire opening of $c'$ has been shifted to some centers $c \in C$. Additionally as shown below, this update of the values of $x$ also ensures that the addition of $v$ to $D_{c'}$ will reduce the value of $x_{c'}^{c'}$ by $x_v^{c'}$ thus ensuring that Property \eqref{half_inv_number} is preserved. Figure \ref{fig:values_delta} gives an example how the values of $\left(x_v^{c'}\right)_{v,c' \in V}$ change when some nodes get added to $D_{c'}$.

\begin{figure}
\begin{subfigure}{0.45\textwidth}
    \includegraphics[width=\textwidth, page =1]{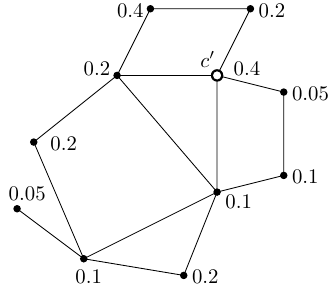}
    \subcaption{The initial values $x_v^{c'} = \tilde{x}_v^{c'}$, while $D_{c'} = \emptyset$.}
\end{subfigure}
\hfill
\begin{subfigure}{0.45\textwidth}
    \includegraphics[width=\textwidth, page =2]{example_delta.pdf}
    \subcaption{The values $x_v^{c'}$ for $D_{c'} = \{v_1,v_2,v_3\}$}
\end{subfigure}
\caption{An example of the assignments to a node $c' \in V$ for $D_{c'} = \emptyset$ and $D_{c'} = \{v_1,v_2,v_3\}$. One may note that the latter values would be the same if $v_1$ was not part of $D_{c'}$.}
\label{fig:values_delta}
\end{figure}

To simplify the analysis of the algorithm we will store for any assignment $y_v^c$ which shifts caused this assignment. To be more precise for any $v,c' \in V$ and $c \in C$ we have a value $y_v^{c',c}$ that tells us how much $v$ got assigned to $c$ by shifts from $c'$ to $c$. At any point the value $y_v^c$ is implicitly defined as $\sum_{c' \in V} y_v^{c',c}$. 

Each time after a new node $v$ gets either added to $C$ or to an environment $S_c$ of a center $c \in C$ we will go over all possible shifts ordered by their distances in increasing order, and perform all shifts with distance at most $4 r_{next}$ where $r_{next}$ denotes the radius of the next node that will be considered by the algorithm. One may note that any shift with a radius $ \leq 4_{r_v}$ can only result from the addition of $v$ to $S_c$ or $C$ and that all those shifts will be performed before any shift not involving $v$ will be executed. This directly guarantees us that if $v$ got added to $C$ that all openings belonging to good assignments will be shifted to $v$ which ensures that $v$ is half opened.

The first time we shift some opening from $c'$ to $c$ we will store the current values of $(x_v^{c'})_{v \in V}$ in $(u_v^{c',c})_{v \in V}$. Functionally for any $v,c'$ the value $u_v^{c',c}$ acts as an upper bound limiting the amount $x_v^{c',c}$ by which $v$ can be assigned to $c$ via shifts from $c'$ to $c$. Every time we perform a shift from $c'$ over $v$ to $c$ we will increase $y_{c}^{c',c}$ by $x_v^{c'}$ and set for all other $\tilde{v} \in V$ the value of $y_{\tilde{v}}^{c',c}$ to the minimum of $u_{\tilde{v}}^{c',c}$ and $y_{c}^{c',c}$. This directly ensures that $y_{\tilde{v}}^{c',c}$ stays below $\tilde{x}_{\tilde{v}}^{c'}$, which together with the fact that $x_{\tilde{v}}^{c'}$ was nonzero when the first shift from $c'$ to $c$ occurs will help us bound the cost of the solution of the algorithm. Later we will show that this assignment to the centers also aligns with our connectivity constraints. Unfortunately the fact that $y_{\tilde{v}}^{c',\tilde{c}}$ can become $\tilde{x}_{\tilde{v}}^{c'}$ for multiple center $\tilde{c} \in C$ to which some opening of $c'$ gets shifted, will add an multiplicative factor of $O(k)$ to the cost of the solution (compared with $\tilde{x}$).

The fact that $y_{\tilde{v}}^{c',c}$ stays below $y_{c}^{c',c}$ and that this cannot cause a cut between a node $v' \in V$ and $S_c$ to get reduced below $y_{c}^{c',c}$ will be important to show that Property \eqref{half_inv_y} is preserved. The modification of $(x_{v'}^{c'})_{v' \in V}$ will be caused by adding $v$ to $D_{c'}$. To get a better intuition about the shifts Figure \ref{fig:example_shifts} depicts an example in which multiple shifts occur after each other.

One might note that after a shift from $c'$ over $v$ to $c$ that $x_v^{c'} = 0$ which together with the fact that the values of $x$ are only decreasing (which will be shown below) implies that we will only once shift opening from $c'$ over $v$ to a center $c$. Additionally after we have added $v$ to $C$ or a set $S_c$ there always exists a possible shift while $x_v^{c'} > 0$ for a $c' \in V$. Thus it is possible to execute the described algorithm for a polynomial time until for every $v,c' \in V$, $x_v^{c'} = 0$ and then return $\left(y_v^{c}\right)_{v \in V, c \in C}$. A more formal description can be found in Algorithm \ref{Alg_half}.

\begin{figure}
\begin{subfigure}{0.45\textwidth}
    \includegraphics[width=\textwidth, page =1]{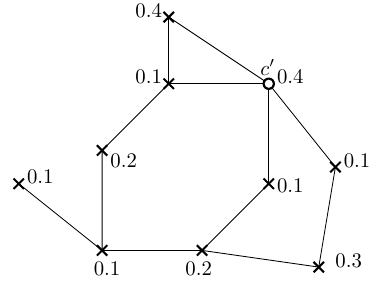}
    \subcaption{The initial values of $x_{v'}^{c'}$.}
    \label{fig:example_shofts_one}
\end{subfigure}
\hfill
\begin{subfigure}{0.45\textwidth}
    \includegraphics[width=\textwidth, page =2]{example_shifts_alt.pdf}
    \subcaption{The values $y_{v'}^{c',c_1}$ after a shift from $c'$ to $c_1$.}
\end{subfigure}
\hfill
\begin{subfigure}{0.45\textwidth}
    \includegraphics[width=\textwidth, page =3]{example_shifts_alt.pdf}
    \subcaption{The values $x_{v'}^{c'}$ after the first shift. }
    \label{fig:third}
\end{subfigure}
\hfill
\begin{subfigure}{0.45\textwidth}
    \includegraphics[width=\textwidth, page =4]{example_shifts_alt.pdf}
    \subcaption{The values $y_{v'}^{c',c_2}$ after a shift from $c'$ to $c_2$. }
\end{subfigure}
\begin{subfigure}{0.45\textwidth}
    \includegraphics[width=\textwidth, page =5]{example_shifts_alt.pdf}
    \subcaption{The values $x_{v'}^{c'}$ after the second shift.\newline \phantom{a}}
\end{subfigure}
\hfill
\begin{subfigure}{0.45\textwidth}
    \includegraphics[width=\textwidth, page =6]{example_shifts_alt.pdf}
    \subcaption{The values $y_{v'}^{c',c_1}$ after another shift from $c'$ to $c_1$ over $v \in S_{c_1}$.}
\end{subfigure}
        
\caption{An example in which multiple shifts from a node $c'$ to two centers $c_1$ and $c_2$ occur. The values are always given for all $v' \in V$. In the end the opening of $c'$ is $0$.}
\label{fig:example_shifts}
\end{figure}

\begin{algorithm}
	\DontPrintSemicolon
	\SetKwInOut{Input}{input}
	\SetKwInOut{Output}{output}
    \caption{Obtaining half-opened centers}\label{Alg_half}
    \textbf{Input:} $\tilde{x}$\;
	$r^* = 0$\;
    $C = \emptyset, Q = V$\;
    $\forall v,c,c' \in V: y_v^{c',c} = 0$\;
    $\forall v,c'  \in V: x_v^{c'} = \tilde{x}_v^{c'}$, $D_{c'} = \emptyset$\;
	\While {$\exists v,c: x_v^c \neq 0$}{
        $v^* = \argmin_{v \in Q} r_v$\;
        $r^* = r_{v*}$\;
        $Q = Q - v^*$\;
        \eIf{$\sum_{\tilde{c} \in V, d(\tilde{c},v^*) \leq 4r^*}x_{v^*}^{\tilde{c}} \geq \frac{1}{2}$}{
        $C = C \cup \{v^*\}$\;
        $S_{v^*} = \{v^*\}$\;
        }{
        $c^* = \argmin_{\tilde{c} \in C, y_{v^*}^{\tilde{c}} \geq \frac{1}{8k}} d(v^*,\tilde{c})$\;
        $S_{c^*} = S_{c^*} \cup \{v^*\}$
        }
        $r_{next} = \min_{v \in Q} r_v$\;
        \tcp{$r_{next} = \infty$ if $Q = \emptyset$}
        $c,c' = \argmin_{c_1 \in C, c_2 \in V, \sep^x(S_{c_1},c_2)> 0} d(c_1,c_2)$\label{line:choice_shift}\;
        \While{$d(c,c') \leq 4 r_{next}$ (and $c,c'$ are defined)}{
            \If{$y_c^{c',c} = 0$}{
                $\forall v \in V: u_v^{c',c} = x_v^{c'}$\;
            }
            \While{$\exists \tilde{v} \in S_c: x_{\tilde{v}}^{c'} > 0$}{
                $f = x_{\tilde{v}}^{c'}$ \label{line:begin_shift}\;
                $D_{c'} = D_{c'} \cup \{\tilde{v}\}$\;
                $\forall v \in V: x_v^{c'} = \Delta^{\tilde{x}}(D_{c'}, v, c')$\;
                $y_c^{c',c} = y_c^{c',c} + f$\;
                $\forall v \in V \setminus \{c\}: y_v^{c',c} = \min(u_v^{c',c}, y_{c}^{c',c})\label{line:end_shift}$\;
            }
            $c,c' = \argmin_{c_1 \in C, c_2 \in V, \sep^x(S_{c_1},c_2)> 0} d(c_1,c_2)$\;
        }
    }
    $\forall v \in V, c \in C: y_v^c = \sum_{c' \in V} y_v^{c',c}$\;
    \textbf{return} $(y_v^c)_{v,c\in V}$;
\end{algorithm}

To prove the correctness and analyze the approximation guarantee of the algorithm we proceed as follows:
In Lemma \ref{lem_half} we show that when the algorithm terminates, all centers are half-opened.
Then we show in Lemma \ref{lem:half_inv_number} the preservation of Property (\ref{half_inv_number}),
in Lemma \ref{lem:half_inv_x} for Property (\ref{half_inv_x}),
Lemma \ref{lem:y-shifts-lower-bounded-by-cut}, Corollary \ref{cor:v_con_S_c} and Lemma \ref{lem:half_inv_y} for Property (\ref{half_inv_y}) and after that we show the approximation guarantee in Lemma \ref{lem:half_inv_assigned}, Lemma \ref{lem:half_bound} and Theorem \ref{thm:bound_y}.
To begin, we will need the following technical properties of cuts. First of all we may observe that a cut between a set $S$ and a set $T$ is also a cut between $T$ and $S$. This ensures us that we actually can revert the flows with our shifts:

\begin{observation}
\label{obs:cut_sym}
For any $S, T \subseteq V$ any cut $N$ between $S$ and $T$ is also a cut between $T$ and $S$. Thus for any weight function $w$ it holds that $\sep^w(S,T) = \sep^w(T,S)$.
\end{observation}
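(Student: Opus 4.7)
The plan is to prove this directly from the definitions, exploiting that $G$ is undirected. I would start by unpacking the definition of a cut between sets: a set $N \subseteq V$ is a cut between $S$ and $T$ iff every path in $G$ between any $v \in S$ and any $u \in T$ passes through some node of $N$.

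The crucial observation is that since $G$ is undirected, the set of paths from $v$ to $u$ coincides with the set of paths from $u$ to $v$ (just reverse the vertex sequence). Therefore, the condition "every path between some $v \in S$ and some $u \in T$ hits $N$" is literally the same statement as "every path between some $u \in T$ and some $v \in S$ hits $N$". So $N$ is a cut between $S$ and $T$ if and only if it is a cut between $T$ and $S$, and in particular the collection of cuts is identical.

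For the second part, since the weight function $w$ depends only on the vertex set $N$ and not on any ordering of $S,T$, we have $w(N) = \sum_{v \in N} w(v)$ irrespective of which side we label as source or sink. Minimizing over the (identical) collections of cuts then gives $\sep^w(S,T) = \sep^w(T,S)$.

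There is essentially no obstacle here; the entire content is that "path" is an undirected notion. The only thing to be slightly careful about is the convention from Definition~\ref{def:cuts} that $N$ may contain nodes of $S$ or $T$ itself, but this does not interact with direction in any way, so symmetry carries through without modification.
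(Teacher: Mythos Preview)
Your proposal is correct and matches the paper's treatment: the paper states this as an observation without proof, and your direct argument from the undirectedness of paths is exactly the implicit reasoning behind it.
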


Additionally we need that for any $S \subseteq V$ and any $t \in V$ the value $\Delta^w(S,v,t)$ only decreases if we add further elements to $S$ (thus ensuring that the values of $x$ decrease). Also a more generalized version of the statement that if $N$ is a cut between $S$ and $t$ then any cut between $N$ and $t$ is also a cut between $S$ and $t$ will be needed. Both lemmas will be proven in Section \ref{sec:technical}:

\begin{lemma}
\label{lem:decrease}
For all $S,S' \subseteq V$, $t,v \in V$ with $S \subseteq S'$ it holds that:
\begin{equation*}
 \Delta^w(S',v,t) \leq \Delta^w(S,v,t)
\end{equation*}
\end{lemma}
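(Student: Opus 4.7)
My plan is to derive this statement from the submodularity of the function $f(U) := \sep^w(U, t)$, viewed as a set function on $U \subseteq V$. The inequality in the lemma is exactly the diminishing-marginals reformulation of submodularity applied to $A = S \cup \{v\}$ and $B = S'$: when $v \notin S'$ we have $A \cup B = S' \cup \{v\}$ and $A \cap B = S$, so $f(A \cup B) + f(A \cap B) \leq f(A) + f(B)$ rearranges to exactly $\Delta^w(S', v, t) \leq \Delta^w(S, v, t)$. The case $v \in S'$ is trivial, since then the left-hand side is $0$ while the right-hand side is non-negative (enlarging the source set cannot decrease a minimum vertex cut, because the set of admissible cuts can only shrink).

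To establish submodularity of $f$, I would use an uncrossing argument on vertex cuts. Fix optimal cuts $N_1$ between $A$ and $t$ and $N_2$ between $B$ and $t$; by definition $H_t(N_1) \supseteq A$ and $H_t(N_2) \supseteq B$ are the corresponding source-side regions (inclusive of the cuts themselves). I would then define candidate cuts $N_\cup$ between $A \cup B$ and $t$, and $N_\cap$ between $A \cap B$ and $t$, built from the outer boundaries of $H_t(N_1) \cup H_t(N_2)$ and $H_t(N_1) \cap H_t(N_2)$ respectively, together with those vertices of $N_1 \cup N_2$ that sit on these boundaries. A case analysis on where each vertex of $N_1 \cup N_2$ lies (strictly in $I_t(N_1) \setminus I_t(N_2)$, strictly in $I_t(N_2) \setminus I_t(N_1)$, in $N_1 \cap N_2$, etc.) should yield the charging identity $w(N_\cup) + w(N_\cap) \leq w(N_1) + w(N_2)$, which gives the desired submodular inequality.

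A perhaps cleaner alternative is the standard node-splitting reduction: replace each $v \in V$ by a directed arc $v^- \to v^+$ of capacity $w(v)$, and turn every edge of $G$ into a pair of infinite-capacity arcs between the corresponding $^+/^-$ endpoints. Then $\sep^w(S, t)$ equals the minimum $(\{s^+ : s \in S\}, \{t^-\})$-edge-cut in this auxiliary digraph, and the submodularity of $f$ in $S$ becomes a textbook consequence of the submodularity of the edge-cut function, provable by the usual argument that counts each edge's contribution to the cuts of $A$, $B$, $A \cup B$, and $A \cap B$.

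The main obstacle is the bookkeeping in the vertex-cut uncrossing: unlike in the edge-cut setting, a vertex lying inside some $N_i$ is neither strictly on its source side nor strictly on its sink side, so some care is needed when deciding to which of $N_\cup, N_\cap$ it should contribute, in order not to over-count its weight. I expect the node-splitting reduction to sidestep this difficulty by moving the problem into an edge-cut setting where the uncrossing identity is unambiguous, so that is the route I would adopt.
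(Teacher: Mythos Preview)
Your proposal is correct and the primary route you outline---submodularity of $f(U)=\sep^w(U,t)$ via an uncrossing of two minimum vertex cuts, applied with $A=S\cup\{v\}$ and $B=S'$---is exactly what the paper does: Lemmas~\ref{lem:cup_cap} and~\ref{lem:sum_cup_cap} construct precisely your $N_\cup,N_\cap$ and prove the charging inequality $w(N_1)+w(N_2)\ge w(N_\cup)+w(N_\cap)$, after which the proof of Lemma~\ref{lem:decrease} is the four-line chain you describe. Your node-splitting reduction to edge cuts is a valid and arguably cleaner alternative that the paper does not use; it would indeed avoid the case analysis on vertex positions that the paper carries out in Lemma~\ref{lem:cup_cap}.
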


\begin{lemma}
\label{lem:cut_of_cut}
For any $S, S' \subseteq V$, $t \in V$ and any cut $N$ between $S$ and $t$ it holds that:
\begin{equation*}
\sep^w(N \cup S',t) \geq \sep^w(S \cup S',t)
\end{equation*}
\end{lemma}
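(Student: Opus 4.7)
The plan is to reduce the weight inequality to the following structural claim: every vertex cut $M$ between $N \cup S'$ and $t$ is also a vertex cut between $S \cup S'$ and $t$. Once this is in hand, I just pick $M$ to be a minimum-weight cut between $N \cup S'$ and $t$, so that $w(M)=\sep^w(N\cup S',t)$, and since the same $M$ is a cut between $S\cup S'$ and $t$, we get $w(M)\geq \sep^w(S\cup S',t)$, which is exactly the desired inequality.

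To prove the structural claim, I would fix an arbitrary path $P$ from some $v \in S \cup S'$ to $t$ in $G$ and argue that $P$ must contain a vertex of $M$. If $v \in S'$, then $v \in N \cup S'$, so $P$ is already a path from $N \cup S'$ to $t$ and hence hits $M$ by the cut property of $M$. If instead $v \in S$, then since $N$ is a cut between $S$ and $t$, the path $P$ contains some vertex $w \in N$; the suffix of $P$ starting at $w$ is a path from $N \cup S'$ to $t$, and so by the cut property of $M$ it contains a vertex of $M$, which is also a vertex of $P$.

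I expect the argument to be essentially routine once the definitions are unfolded carefully. The only subtlety is that the paper's definition of a vertex cut between two sets explicitly allows the cut vertices to lie inside the endpoint sets, so I must be careful to treat degenerate cases — for instance when $v \in N$ already, or when $v = t$ — by viewing $P$ (or its trivial suffix starting at $v$) as the relevant witness path, rather than silently assuming cuts are disjoint from their sides.
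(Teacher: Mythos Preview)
Your proposal is correct and follows essentially the same approach as the paper: both arguments establish the structural claim that every cut between $N\cup S'$ and $t$ is also a cut between $S\cup S'$ and $t$, via the path-tracing observation that any $S$--$t$ path first meets $N$ and thereafter must meet the given cut. The paper phrases this through the $H_t(\cdot)$ notation while you argue directly on paths, but the underlying reasoning is identical.
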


Using these statements we will prove the correctness of the algorithm. First we may verify that actually each center $c$ in $C$ gets opened by at least one half. 

\begin{lemma}
\label{lem_half}
At the end of Algorithm \ref{Alg_half} for all $c \in C$ it holds that $y_c^c \geq \frac{1}{2}$.
\end{lemma}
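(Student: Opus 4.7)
The plan is to look back at the moment the algorithm inserted $c$ into $C$. At that moment the if-branch tested and passed $\sum_{c'\in V:\, d(c,c')\le 4r_c} x_c^{c'}\ge \tfrac12$, where the $x$-values are the ones carried into this iteration. Write $\hat{x}_c^{c'}$ for these values and call $c'$ \emph{good} for $c$ if $d(c,c')\le 4r_c$ and $\hat{x}_c^{c'}>0$. I will show that the inner while-loop immediately following the insertion of $c$ executes, for every good $c'$, a shift from $c'$ to $c$ over $c$ itself that adds exactly $\hat{x}_c^{c'}$ to $y_c^{c',c}$. Combining these contributions and observing that $y_c^c=\sum_{c'\in V}y_c^{c',c}$ only grows during the remainder of the algorithm will then prove the lemma.

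The heart of the argument is a scheduling claim. Right after $c$ joins $C$ we have $S_c=\{c\}$, and the updated $r_{next}$ is at least $r_c$ because nodes are processed in non-decreasing order of radius. So for every good $c'$ the shift from $c'$ to $c$ over $c$ has distance $d(c,c')\le 4r_c\le 4r_{next}$ and is admissible inside the loop. I then claim that when the $\argmin$ in line~\ref{line:choice_shift} eventually selects this shift, $x_c^{c'}$ still equals $\hat{x}_c^{c'}$. For this, observe that the only way $x_c^{c'}$ could have shrunk in the interim is via an earlier shift in this iteration that involves $c'$; such a shift must go to some center $c_1\ne c$ over some $\tilde{v}\in S_{c_1}$. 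Since adding $c$ to $C$ did not alter $S_{c_1}$ for any $c_1\ne c$, this alternative was equally available already in the previous iteration, where $r_{next}$ equaled $r_c$ and the loop processed every feasible shift of distance at most $4r_c$. Hence any still-possible shift from $c'$ to $c_1\ne c$ satisfies $d(c_1,c')>4r_c\ge d(c,c')$, and the $\argmin$ selects the shift to $c$ first. Running this shift sets $f=x_c^{c'}=\hat{x}_c^{c'}$ and increments $y_c^{c',c}$ by $f$. Lemma~\ref{lem:decrease} is invoked implicitly here: shifts that enlarge $D_{c''}$ for some $c''\ne c'$ do not touch $x_c^{c'}$, whereas shifts that enlarge $D_{c'}$ can only decrease $x_c^{c'}$, and those have just been ruled out before the dedicated shift fires.

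Summing over all good $c'$ yields $y_c^c\ge \sum_{c':\, d(c,c')\le 4r_c}\hat{x}_c^{c'}\ge \tfrac12$ at the end of this iteration. For the remainder of the run $y_c^c$ never decreases: every shift from any $c''$ to $c$ strictly increases $y_c^{c'',c}$ via the additive update $y_c^{c'',c}=y_c^{c'',c}+f$, and the $\min$-update with $u_{v'}^{c'',c}$ only touches values $y_{v'}^{c'',c}$ for $v'\ne c$. Therefore $y_c^c\ge\tfrac12$ at termination. The main obstacle I foresee is the scheduling step: one has to rule out every indirect way in which $x_c^{c'}$ could shrink before its dedicated shift fires, including tie handling inside the $\argmin$, and this is exactly where the increasing-radius processing order and the $4r_{next}$ cap on admissible shift distances do the real work.
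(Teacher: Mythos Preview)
Your proof is correct and follows essentially the same route as the paper: both argue that at the moment $c$ enters $C$, every shift from a good $c'$ to any other center $c_1\neq c$ has distance $d(c_1,c')>4r_c$ (since otherwise it would have been executed in the previous iteration, where $r_{\mathrm{next}}=r_c$), so the $\argmin$ processes the shifts from all good $c'$ to $c$ first and deposits the full $\hat{x}_c^{c'}$ into $y_c^{c',c}$. Your write-up is somewhat more explicit about why $x_c^{c'}$ cannot shrink in the interim and why $y_c^c$ is monotone, but the underlying argument is the same.
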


\begin{proof}
    For any $c \in C$ we consider the iteration in which $c$ gets added to $C$. At the begin of this iteration the total amount of good assignments of $c$ is still at least $\frac{1}{2}$. Now for each of the corresponding nodes $c'$ there exists a possible shift from $c'$ to $c$ which would increase the opening of $c$ by $x_c^{c'}$. And since the distance between $c$ and $c'$ is at most $4r_v$ and for all other possible shifts going to another center $c_2 \in C$ the distance $d(c',c_2)$ is greater $4 r_c$ (as otherwise the shift would already have been performed in the previous iteration) we know that the algorithm will actually perform a shift from $c'$ to $c$ before performing any shifts going to another center. Thus at the end of the iteration in which $c$ gets added to $C$ we have that
    \begin{equation*}
        y_c^c = \sum_{c' \in V} y_c^{c',c} \geq \sum_{c' \in V, d(c',c) \leq 4r_c}x_{c} ^{c'} \geq \frac{1}{2}
    \end{equation*}
    where the values of $x$ are still referring to the situation at the begin of the iteration. Since the value of $y_c^c$ will not decrease afterwards this directly proves the lemma.
\end{proof}

For the remainder of the proof is important to note that the before and after any shift for any $v,c' \in V$ the value $x_v^{c'}$ is equal to $\Delta^{\tilde{x}}(D_{c'},v,c')$. This is obviously fulfilled at the begin of the algorithm as $\Delta^{\tilde{x}}(\emptyset,v,c') = \tilde{x}_v^{c'}$ and afterwards the value $x_v^{c'}$ gets updated accordingly when $D_{c'}$ gets modified.

\begin{observation}
\label{obs:value_of_x}
    Before any shift of the algorithm it holds for any $v,c' \in V$ that
    \begin{equation*}
        x_v^{c'} = \Delta^{\tilde{x}}(D_{c'},v,c').
    \end{equation*}
\end{observation}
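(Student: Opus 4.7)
The plan is a straightforward induction on the number of modifications the algorithm makes to any set $D_{c'}$, since the only two quantities appearing in the claimed identity are $x_v^{c'}$ and $D_{c'}$, and these are updated in lockstep.

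For the base case, at the start of Algorithm \ref{Alg_half} we have $D_{c'} = \emptyset$ and $x_v^{c'} = \tilde{x}_v^{c'}$ for every $v, c' \in V$. We need to verify
\[
\tilde{x}_v^{c'} \;=\; \Delta^{\tilde{x}}(\emptyset, v, c') \;=\; \sep^{\tilde{x}}(\{v\}, c') - \sep^{\tilde{x}}(\emptyset, c').
\]
The empty set is vacuously a cut between $\emptyset$ and $\{c'\}$, hence $\sep^{\tilde{x}}(\emptyset, c') = 0$. For the other term, $\{v\}$ is itself a cut between $\{v\}$ and $\{c'\}$ (every path leaving $v$ uses $v$), so $\sep^{\tilde{x}}(\{v\}, c') \leq \tilde{x}_v^{c'}$; on the other hand, by feasibility of $\tilde{x}$ for LP~\eqref{LP_cut_centers} we have $\sum_{v' \in S} \tilde{x}_{v'}^{c'} \geq \tilde{x}_v^{c'}$ for every $S \in \mathcal{S}_{v,c'}$, giving $\sep^{\tilde{x}}(\{v\}, c') \geq \tilde{x}_v^{c'}$. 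Equality follows.

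For the inductive step, I would inspect Algorithm \ref{Alg_half} and observe that the only lines that touch $x_{\cdot}^{c'}$ or $D_{c'}$ are the three consecutive lines inside the innermost \textbf{while} loop (between line~\ref{line:begin_shift} and line~\ref{line:end_shift}) that first augment $D_{c'}$ by one node $\tilde{v}$ and then immediately reset $x_v^{c'} \gets \Delta^{\tilde{x}}(D_{c'}, v, c')$ for every $v \in V$. Hence the identity is re-established by construction the moment $D_{c'}$ changes, and outside the shift block both sides are constant. Since ``before any shift'' means at a point where no partial modification is in progress, the invariant holds throughout.

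No step is a real obstacle here; the only thing one has to be a bit careful about is the base case, which uses LP feasibility of $\tilde{x}$ to pin down $\sep^{\tilde{x}}(\{v\},c')$ exactly (not just bound it). Everything else is bookkeeping directly against the pseudocode.
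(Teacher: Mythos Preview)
Your proposal is correct and follows essentially the same argument as the paper: the paper simply notes that the identity ``is obviously fulfilled at the begin of the algorithm as $\Delta^{\tilde{x}}(\emptyset,v,c') = \tilde{x}_v^{c'}$ and afterwards the value $x_v^{c'}$ gets updated accordingly when $D_{c'}$ gets modified.'' You spell out the base case more carefully (using LP feasibility for the lower bound on $\sep^{\tilde{x}}(\{v\},c')$), but the structure---initial identity plus the pseudocode re-establishing it after each update of $D_{c'}$---is identical.
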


Using this we can prove that during the entire algorithm the entirety of openings in $x$ and $y$ is at most $k$ and each node has a total assignment of at least one.

\begin{lemma}
\label{lem:half_inv_number}
For any $c$, $c'$ it holds that whenever a shift increases $y_c^{c',c}$ by a value $f$, $x_{c'}^{c'}$ gets decreased by $f$. Thus during the algorithm Property \eqref{half_inv_number} is preserved.
\end{lemma}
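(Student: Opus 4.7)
The plan is to establish two parts: (a) each shift decreases $x_{c'}^{c'}$ by exactly the amount $f$ by which it increases $y_c^{c',c}$, and (b) this per-shift identity inductively preserves Property~\eqref{half_inv_number}.

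For (a), I would unpack the shift block at Lines~\ref{line:begin_shift}--\ref{line:end_shift}. After the shift, $D_{c'}$ becomes $D_{c'} \cup \{\tilde{v}\}$, and by Observation~\ref{obs:value_of_x} the new diagonal value is $x_{c'}^{c'} = \Delta^{\tilde{x}}(D_{c'} \cup \{\tilde{v}\}, c', c')$, so the decrease equals
\[
\Delta^{\tilde{x}}(D_{c'}, c', c') - \Delta^{\tilde{x}}(D_{c'} \cup \{\tilde{v}\}, c', c').
\]
The key auxiliary fact I would prove is $\sep^{\tilde{x}}(S \cup \{c'\}, c') = \tilde{x}_{c'}^{c'}$ for every $S \subseteq V$: the singleton $\{c'\}$ is a vertex cut of weight $\tilde{x}_{c'}^{c'}$ since every path ending at $c'$ contains $c'$, and by Definition~\ref{def:cuts} every vertex cut between two sets sharing $c'$ must itself contain $c'$ (the length-zero path $c' \to c'$ has to be hit), giving the matching lower bound. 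Applying this identity with $S = D_{c'}$ and $S = D_{c'} \cup \{\tilde{v}\}$ makes the two ``$\sep^{\tilde{x}}(\cdot \cup \{c'\}, c')$'' terms in the expansion of the two $\Delta$'s cancel, so the decrease collapses to $\sep^{\tilde{x}}(D_{c'} \cup \{\tilde{v}\}, c') - \sep^{\tilde{x}}(D_{c'}, c') = \Delta^{\tilde{x}}(D_{c'}, \tilde{v}, c') = x_{\tilde{v}}^{c'} = f$, again by Observation~\ref{obs:value_of_x}. This exactly matches the $+f$ update of $y_c^{c',c}$ prescribed by the algorithm.

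For (b), I would induct on the number of shifts performed so far, tracking $\Sigma := \sum_{v \in V}(x_v^v + y_v^v)$. Initially $x_c^c = \tilde{x}_c^c$ and $y_c^c = 0$, so $\Sigma = \sum_c \tilde{x}_c^c \leq k$ by LP-feasibility of $\tilde{x}$. For the inductive step, a shift on the pair $(c,c')$ modifies $x$ only within column $c'$, so the only changed diagonal entry of $x$ is $x_{c'}^{c'}$, and by (a) it drops by $f$. On the $y$-side, only coordinates of the form $y_\cdot^{c',c}$ are touched, so among diagonal entries of $y$ only $y_c^c = \sum_{c''} y_c^{c'',c}$ can be affected; within this sum only the summand with $c'' = c'$ changes, and by the explicit update it grows by exactly $f$ (the $\min$-rule applies only to $v \neq c$ and so contributes nothing to the diagonal). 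The net change in $\Sigma$ is therefore $-f + f = 0$, so $\Sigma \leq k$ is preserved.

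The step I anticipate being most delicate is the cut identity $\sep^{\tilde{x}}(S \cup \{c'\}, c') = \tilde{x}_{c'}^{c'}$, because it depends on the slightly non-standard convention from Definition~\ref{def:cuts} that a vertex cut may---and when the two separated sets intersect, must---include vertices shared by both sides. Once that identity is accepted, the rest is essentially bookkeeping around which coordinates of $x$ and $y$ are touched by a single shift.
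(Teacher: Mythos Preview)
Your proposal is correct and follows essentially the same argument as the paper: both isolate a single shift, invoke Observation~\ref{obs:value_of_x}, prove the auxiliary identity $\sep^{\tilde{x}}(S\cup\{c'\},c')=\tilde{x}_{c'}^{c'}$ from the fact that every $(S\cup\{c'\},c')$-cut must contain $c'$ while $\{c'\}$ is itself such a cut, and then cancel the two ``$\sep^{\tilde{x}}(\cdot\cup\{c'\},c')$'' terms to obtain that $x_{c'}^{c'}$ drops by exactly $f$. Your part~(b) spells out the bookkeeping for the invariant a bit more explicitly than the paper does, but the underlying reasoning is identical.
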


\begin{proof}
    We consider an arbitrary shift (lines \ref{line:begin_shift} - \ref{line:end_shift} of the algorithm) from $c'$ over $\tilde{v} \in S_c$ to $c$. During this shift $y_c^{c',c}$ gets increased by $f = x_{\tilde{v}}^{c'} = \Delta^{\tilde{x}}(D_{c'}, \tilde{v},c')$. At the same time the value of $x_{c'}^{c'}$ changes from $\Delta^{\tilde{x}}(D_{c'}, c',c')$ to $\Delta^{\tilde{x}}(D_{c'}\cup\{\tilde{v}\}, c',c')$. One might observe for any $S \subseteq V$ with $c' \in S$ that $c'$ is contained in any cut $N$ between $S$ and $c'$ while $\{c'\}$ is also a cut between $S$ and $c'$ which implies $\sep^{\tilde{x}}(S,c') = \tilde{x}_{c'}^{c'}$. Thus $\sep^{\tilde{x}}(D_{c'} \cup \{c'\},c') = \sep^{\tilde{x}}(D_{c'} \cup \{\tilde{v}\}\cup \{c'\},c') = \tilde{x}_{c'}^{c'}$. At the same time $\sep^{\tilde{x}}(D_{c'} \cup \{\tilde{v}\}, c') = \sep^{\tilde{x}}(D_{c'},c') + \Delta^{\tilde{x}}(D_{c'},\tilde{v},c') = \sep^{\tilde{x}}(D_{c'},c') + f$. Thus $\Delta^{\tilde{x}}(D_{c'}\cup\{\tilde{v}\}, c',c') = \Delta^{\tilde{x}}(D_{c'}, c',c') - f$ which means that the shift decreases $x_{c'}^{c'}$ by $f$ and thus the lemma holds. 
\end{proof}

\begin{lemma}
\label{lem:half_inv_assigned}
After any shift it holds for all $v \in V$ that:
\begin{equation*}
    \sum_{c \in C} y_{v}^{c',c} + x_v^{c'} \geq \tilde{x}_v^{c'}
\end{equation*}
\end{lemma}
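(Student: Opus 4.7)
I would prove this by induction on the number of shifts performed by the algorithm. The base case is immediate: before any shift, $x_v^{c'} = \tilde{x}_v^{c'}$ and $y_v^{c',c} = 0$ for all $v,c,c'\in V$, so the inequality holds with equality. For the inductive step, consider a shift from $c'$ over $\tilde{v}\in S_c$ to $c$ of value $f = x_{\tilde{v}}^{c',t-}$. A shift only affects the values $x_v^{c'}$ (for this fixed $c'$) and $y_v^{c',c}$ (for this fixed $c'$ and $c$); the invariants for pairs involving another source are unchanged. Fix any $v\in V$ and let $t_1\le t$ denote the time of the \emph{first} shift from $c'$ to $c$, so by definition $u_v^{c',c} = x_v^{c',t_1-}$ and $y_v^{c',c,t_1-}=0$. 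Also recall that $x_v^{c',t}$ is monotone non-increasing in $t$ by Lemma~\ref{lem:decrease}, and that $y_v^{c',c,t}$ is monotone non-decreasing, since $y_c^{c',c}$ only grows and $u_v^{c',c}$ is a fixed cap.

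I then bound the drop of $x_v^{c'}$ during the shift by $f$. For $v=\tilde{v}$ this is immediate since $x_{\tilde{v}}^{c'}$ falls from $f$ to $0$. For $v\ne\tilde{v}$ it follows from the submodular identity
\[
\Delta^{\tilde{x}}(D,v,c')-\Delta^{\tilde{x}}(D\cup\{\tilde{v}\},v,c') \;=\; \Delta^{\tilde{x}}(D,\tilde{v},c') - \Delta^{\tilde{x}}(D\cup\{v\},\tilde{v},c') \;\leq\; f
\]
applied with $D=D_{c'}^{t-}$. I next split on whether the cap $u_v^{c',c}$ is binding after the shift. In the non-binding case $u_v^{c',c} \geq y_c^{c',c,t+}$, we have $y_v^{c',c,t+} = y_c^{c',c,t-}+f$, so $y_v^{c',c}$ grows by exactly $f$, which compensates the drop in $x_v^{c'}$; combined with the inductive hypothesis at time $t-$, the invariant is preserved.

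The delicate case --- and the main obstacle of the proof --- is when $u_v^{c',c} < y_c^{c',c,t+}$ (the cap is binding after the shift), for then $y_v^{c',c,t+} = u_v^{c',c}$ and the increment of $y_v^{c',c}$ may be strictly smaller than the drop in $x_v^{c'}$. My idea is to apply the inductive hypothesis at the earlier time $t_1-$ rather than $t-$:
\[
\sum_{c_\ast\in C} y_v^{c',c_\ast,t_1-} + x_v^{c',t_1-} \;\geq\; \tilde{x}_v^{c'}.
\]
Using $y_v^{c',c,t_1-} = 0$ and $x_v^{c',t_1-} = u_v^{c',c}$, this rearranges to $\sum_{c_\ast\ne c} y_v^{c',c_\ast,t_1-} + u_v^{c',c} \geq \tilde{x}_v^{c'}$. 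Since shifts with target $c$ leave $y_v^{c',c_\ast}$ for $c_\ast\ne c$ untouched and $y_v^{c',c_\ast}$ is monotone non-decreasing in general, this ``reserve'' can only grow, so
\[
\sum_{c_\ast\in C} y_v^{c',c_\ast,t+} + x_v^{c',t+} \;\geq\; u_v^{c',c} + \sum_{c_\ast\ne c} y_v^{c',c_\ast,t-} \;\geq\; \tilde{x}_v^{c'},
\]
as required.

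The essential subtlety is therefore the capping behaviour: once $y_c^{c',c}$ surpasses the cap $u_v^{c',c}$, a naive per-shift comparison between the two sides of the invariant breaks down. The trick that resolves it is to observe that the deficit is already accounted for by $v$'s assignments to other centers at the moment the cap was recorded, and hence one should apply the inductive hypothesis at time $t_1-$ and combine it with the monotonicity of $y_v^{c',c_\ast}$ for $c_\ast\ne c$.
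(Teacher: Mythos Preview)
Your proof is correct and follows essentially the same approach as the paper: induction on shifts, bounding the drop of $x_v^{c'}$ by $f$ via submodularity, and in the capped case invoking the inductive hypothesis at the moment $u_v^{c',c}$ was recorded together with the monotonicity of the remaining $y_v^{c',c_\ast}$. The only differences are cosmetic---your explicit time indices and the stated submodular identity make the argument slightly more formal than the paper's version.
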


\begin{proof}
The claim can be shown via induction. Obviously in the beginning the claim is true because $x_v^{c'} = \tilde{x}_v^{c'}$. Let us now consider a shift from $c'$ to the center $c$ over the node $\tilde{v} \in S_c$. Then adding $\tilde{v}$ to $D_{c'}$ increases $\sep^{\tilde{x}}(D_{c'},c')$ by exactly $\Delta^{\tilde{x}}(D_{c'}, \tilde{v},c') = x_{\tilde{v}}^{c'} = f$. At the same time for any $v \in V$ the value $\sep^{\tilde{x}}(D_{c'} \cup \{v\}, c')$ is not decreasing when we add $\tilde {v}$ to $D_{c'}$ since any cut between $D_{c'} \cup \{v\} \cup \{\tilde{v}\}$ and $c'$ (including in particular the minimum cut) is also a cut between $D_{c'} \cup \{v\}$ and $c'$. Thus $x_{v}^{c'} = \Delta^{\tilde{x}}(D_{c'}, v, c')$ decreases by at most $f$. At the same time $y_{c}^{c',c}$ gets increased by $f$ and $y_v^{c',c} $ gets set to $ \min(u_v^{c',c}, y_{c}^{c',c})$. So either it gets increased by $f$ and the induction holds or $y_v^{c',c} = u_v^{c',c}$. In the latter case we know by induction that in the moment in which $u_v^{c',c}$ was set to $x_v^{c'}$, $\sum_{\tilde{c} \in C} y_{v}^{c',\tilde{c}} + u_v^{c',c} = \sum_{\tilde{c} \in C} y_{v}^{c',\tilde{c}} + x_v^{c'} \geq \tilde{x}_v^{c'}$.
Since $y_{v}^{c',c}$ has in the intermediate iterations increased from $0$ to $u_v^{c',c}$ and for all $\tilde{c}$ the value of $y_{v}^{c',\tilde{c}}$ did not decrease we may conclude that the claim also remains true in this case.
\end{proof}

Furthermore we can verify that at any point during the algorithm any assignment by $x$ is upper bounded by the respective minimum cut:

\begin{lemma}
\label{lem:half_inv_x}
    After any shift of the algorithm it holds for any $v,c' \in V$ that
    \begin{equation*}
        \sep^x(v,c') \geq x_v^{c'}.
    \end{equation*}
    Thus Property \eqref{half_inv_x} is preserved.
\end{lemma}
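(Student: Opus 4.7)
The plan is to fix arbitrary $v,c' \in V$ and an arbitrary cut $N \in \mathcal{S}_{v,c'}$, and prove $\sum_{v' \in N} x_{v'}^{c'} \ge x_v^{c'}$; taking the minimum over $N$ then yields $\sep^x(v,c') \ge x_v^{c'}$. By Observation \ref{obs:value_of_x}, every $x$-value equals an increment in the original weights: $x_{v'}^{c'} = \Delta^{\tilde{x}}(D_{c'}, v', c')$. So the goal becomes
\[
\sum_{v' \in N} \Delta^{\tilde{x}}(D_{c'}, v', c') \;\ge\; \Delta^{\tilde{x}}(D_{c'}, v, c').
\]
If $v \in D_{c'}$, the right-hand side is $0$ and the claim is trivial, so assume $v \notin D_{c'}$. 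Note also that elements in $N \cap D_{c'}$ contribute $0$ to the left-hand side for the same reason, so only $v' \in N \setminus D_{c'}$ need to be controlled.

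The first key step is to apply Lemma \ref{lem:cut_of_cut} with $S = \{v\}$, $S' = D_{c'}$, $t = c'$, using that $N$ is a cut between $\{v\}$ and $c'$. This yields
\[
\sep^{\tilde{x}}(N \cup D_{c'}, c') \;\ge\; \sep^{\tilde{x}}(\{v\} \cup D_{c'}, c') \;=\; \sep^{\tilde{x}}(D_{c'}, c') + \Delta^{\tilde{x}}(D_{c'}, v, c'),
\]
which rewrites to
\[
\sep^{\tilde{x}}(N \cup D_{c'}, c') - \sep^{\tilde{x}}(D_{c'}, c') \;\ge\; x_v^{c'}.
\]

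The second key step is to upper bound the same difference by $\sum_{v' \in N} x_{v'}^{c'}$. I would enumerate $N \setminus D_{c'} = \{u_1, \ldots, u_m\}$ in any order and telescope:
\[
\sep^{\tilde{x}}(N \cup D_{c'}, c') - \sep^{\tilde{x}}(D_{c'}, c') \;=\; \sum_{i=1}^{m} \Delta^{\tilde{x}}\bigl(D_{c'} \cup \{u_1, \ldots, u_{i-1}\},\, u_i,\, c'\bigr).
\]
By Lemma \ref{lem:decrease}, each summand is bounded by $\Delta^{\tilde{x}}(D_{c'}, u_i, c') = x_{u_i}^{c'}$, since the source set has only grown. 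Combining this telescoping upper bound with the lower bound from the previous step gives the desired inequality, and Property \eqref{half_inv_x} is then just the statement that $\sum_{v' \in S} x_{v'}^c \ge x_v^c$ for every $S \in \mathcal{S}_{v,c}$, which is what we proved.

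The main conceptual obstacle is recognizing that the statement about the current, dynamically updated weights $x$ is equivalent, via Observation \ref{obs:value_of_x}, to a statement about increments of $\sep^{\tilde{x}}$ in the fixed original weights, and that both directions (the lower bound from ``a cut of a cut is a cut'' and the upper bound from monotonicity of $\Delta$) are needed to sandwich the quantity $\sep^{\tilde{x}}(N \cup D_{c'}, c') - \sep^{\tilde{x}}(D_{c'}, c')$. Once this framing is in place, the two technical lemmas plug in directly and no further calculation is required.
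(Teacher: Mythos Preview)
Your proposal is correct and takes essentially the same approach as the paper: both use Observation~\ref{obs:value_of_x} to translate the claim into a statement about $\sep^{\tilde{x}}(D_{c'}\cup N,c') - \sep^{\tilde{x}}(D_{c'},c')$, bound it from above via telescoping and Lemma~\ref{lem:decrease}, and from below via Lemma~\ref{lem:cut_of_cut}. The only cosmetic difference is that the paper phrases this as a contradiction argument, whereas you present the same sandwich directly.
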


\begin{proof}
Assume that the lemma is wrong. Then at some point during the algorithm there exist two nodes $v, c' \in V$ such that $\sep^x(v,c') < x_v^{c'}$ which means that there exists a cut $N$ between $v$ and $c'$ such that $\sum_{w \in N} x_{w}^{c'} < x_v^{c'}$. Let $N = \{w_1,\dots,w_\ell\}$ be such a cut. We will now bound $\sep^{\tilde{x}}(D_{c'} \cup N, c') $ as follows:
\begin{align*}
    \sep^{\tilde{x}}(D_{c'} \cup N, c') &\stackrel{\phantom{\text{Lem. }\ref{lem:decrease}}}{=} \sep^{\tilde{x}}(D_{c'},c') + \sum_{i = 1}^\ell \big( \sep^{\tilde{x}}(D_{c'} \cup\{w_1,\dots,w_i\},c') \\
     &\stackrel{\phantom{\text{Lem. }\ref{lem:decrease}}}{} \phantom{\sep^{\tilde{x}}(D_{c'},c') + \sum_{i = 1}^\ell \big(} -\sep^{\tilde{x}}(D_{c'} \cup\{w_1,\dots,w_{i -1}\},c') \big)\\
    &\stackrel{\phantom{\text{Lem. }\ref{lem:decrease}}}{=} \sep^{\tilde{x}}(D_{c'},c') + \sum_{i=1}^\ell \Delta^{\tilde{x}}(D_{c'} \cup \{w_1,\dots,w_{i-1}\},w_i,c')\\
    &\stackrel{\text{Lem. }\ref{lem:decrease}}{\leq} \sep^{\tilde{x}}(D_{c'},c') + \sum_{i=1}^\ell\Delta^{\tilde{x}}(D_{c'},w_i,c')\\
    &\stackrel{\text{Obs. }\ref{obs:value_of_x}}{=} \sep^{\tilde{x}}(D_{c'},c') + \sum_{i=1}^\ell x_{w_i}^{c'}\\
    &\stackrel{\phantom{\text{Lem. }\ref{lem:decrease}}}{<} \sep^{\tilde{x}}(D_{c'},c') + x_v^{c'}\\
    &\stackrel{\text{Obs. }\ref{obs:value_of_x}}{=} \sep^{\tilde{x}}(D_{c'} \cup\{v\}, c')
\end{align*}
However since $N$ is a cut between $v$ and $c'$ this directly contradicts Lemma \ref{lem:cut_of_cut}. Thus the lemma is correct.
\end{proof}

Using this we may prove that for any $v \in V$, $c \in C$ there exists a flow with capacity $y_v^c$ from $v$ to $c$ which together with the fact that $x_v^c \geq \frac{1}{8k}$ for any $v \in S_c$ can be used to show that Property \eqref{half_inv_y} is preserved:

\begin{lemma}\label{lem:y-shifts-lower-bounded-by-cut}
After every shift of the algorithm it holds for any $v \in V, c \in C, c' \in V$ and all cuts $N$ between $v$ and $S_c$
\begin{equation*}
    y_v^{c',c} \leq \sum_{w \in N} y_w^{c',c}.
\end{equation*}
\end{lemma}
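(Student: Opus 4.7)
The plan is to prove this by induction on the number of shifts performed by the algorithm. The base case is trivial because $y \equiv 0$ before any shift. For the inductive step, consider a shift from some $c'$ to some $c$ over $\tilde{v} \in S_c$ of amount $f$, and note that this shift only modifies values of the form $y_w^{c',c}$ while $S$-sets are only modified in the outer loop; moreover, enlarging any $S_{c_1}$ only shrinks the collection of relevant cuts between $v$ and $S_{c_1}$, so expansions of $S$ automatically preserve the invariant. It therefore suffices to fix $v$ and a cut $N$ between $v$ and $S_c$ and verify the inequality for the updated pair $(c',c)$. First I would dispose of the easy cases: if $v \in S_c$ the trivial $v$-to-$v$ path forces $v \in N$, and if $c \in N$ then $\sum_{w \in N} y_w^{c',c} \geq y_c^{c',c} = t \geq y_v^{c',c}$, where $t := y_c^{c',c}$ denotes the post-shift value.

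In the remaining case ($v \notin S_c$ and $c \notin N$), using $y_w^{c',c} = \min(u_w^{c',c}, t)$ for $w \neq c$, the claim reduces to
\[
  \min(u_v^{c',c}, t) \leq \sum_{w \in N} \min(u_w^{c',c}, t),
\]
and a short monotonicity argument shows this is implied by either $u_v^{c',c} \leq \sum_{w \in N} u_w^{c',c}$ or $t \leq \sum_{w \in N} u_w^{c',c}$. Let $T_0$ denote the time of the first shift from $c'$ to $c$, at which $u_w^{c',c}$ is set to the then-current value of $x_w^{c'}$, and let $D^0 := D_{c'}^{T_0}$. If $c'$ is not in the same connected component of $G \setminus N$ as $v$ (in particular if $c' \in N$), then $N$ is also a cut between $v$ and $c'$, and Lemma~\ref{lem:half_inv_x} applied at time $T_0$ directly yields $u_v^{c',c} = x_v^{c'}|_{T_0} \leq \sum_{w \in N} x_w^{c'}|_{T_0} = \sum_{w \in N} u_w^{c',c}$, which is the first sufficient inequality.

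The hard case is when $c'$ lies in the $v$-component of $G \setminus N$, so $N$ separates $v$ from $S_c$ without separating $v$ from $c'$. Let $\tilde{v}_1, \dots, \tilde{v}_J \in S_c$ be the nodes through which all shifts from $c'$ to $c$ up to and including the current one have been performed, with respective magnitudes $f_1, \dots, f_J$ summing to $t$. By Lemma~\ref{lem:decrease}, each $f_j$ is bounded above by $\Delta^{\tilde x}(D^0 \cup \{\tilde v_1, \dots, \tilde v_{j-1}\}, \tilde v_j, c')$, and summing telescopically yields
\[
  t \leq \sep^{\tilde x}(D^0 \cup \{\tilde v_1, \dots, \tilde v_J\}, c') - \sep^{\tilde x}(D^0, c').
\]
Since every $\tilde v_j$ lies in $S_c$ and $c'$ lies on the $v$-side of $N$, the set $N$ separates every $\tilde v_j$ from $c'$, so $D^0 \cup N$ is a cut between $D^0 \cup \{\tilde v_1, \dots, \tilde v_J\}$ and $c'$. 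Lemma~\ref{lem:cut_of_cut} then upgrades the bound to $t \leq \sep^{\tilde x}(D^0 \cup N, c') - \sep^{\tilde x}(D^0, c')$, and expanding this difference by inserting the nodes of $N$ one at a time together with another invocation of Lemma~\ref{lem:decrease} gives $\sum_{w \in N} \Delta^{\tilde x}(D^0, w, c') = \sum_{w \in N} u_w^{c',c}$, completing the second sufficient inequality.

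I expect the main obstacle to be precisely this last case: the cut inequality for the $u$-values may genuinely fail when $N$ does not separate $v$ from $c'$, and the argument must instead bound the \emph{total} shifted mass $t$. The key conceptual point is that, because every shift target $\tilde v_j$ lies in $S_c$ on the far side of $N$, the cumulative flow $t$ from $c'$ must cross $N$; Lemmas~\ref{lem:decrease} and~\ref{lem:cut_of_cut} provide the node-capacitated max-flow/min-cut bookkeeping that makes this precise, even when there are interleaved shifts from $c'$ to other centers (which only shrink the per-shift $\Delta$-bounds, in the direction we need).
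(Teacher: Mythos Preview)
Your proof is correct and follows essentially the same approach as the paper's: both split into the case where $N$ separates $v$ from $c'$ (handled via Lemma~\ref{lem:half_inv_x}) and the case where $c'$ lies on the $v$-side of $N$ (handled by bounding the total shifted mass $t$ via the telescoping/submodularity argument using Lemmas~\ref{lem:decrease} and~\ref{lem:cut_of_cut}). The only cosmetic difference is that the paper first isolates the special case $v=c'$ and then reduces the general ``$c'$ on the $v$-side'' case back to it, whereas you handle that case directly and uniformly; the underlying computation is identical.
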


\begin{proof}
First we consider the situation that $v = c'$. Assume that after the $\ell$-th shift the claim is violated. Let $v_1,\dots,v_\ell$ be the nodes over which opening has been shifted from $c'$ to $c$, $D_1,\dots,D_\ell$ be the respective states of $D_{c'}$ before and $f_1,\dots,f_\ell$ the values of $f$ during these shifts. One might observe that $\forall w \in V: u_w^{c',c} = \Delta^{\tilde{x}}(D_1,w,c')$ and that after the $\ell$-th shift $v_1,\dots,v_\ell \in S_c$. Let $N = \{w_1,\dots,w_m\}$ be an arbitrary cut between  $c'$ and $S_c$. Then by Observation \ref{obs:cut_sym}, $N$ is also a cut between $S_c$ and $c'$ and it holds that:
\begin{align*}
    \sum_{i = 1}^m  u_{w_i}^{c',c} &\stackrel{\phantom{\text{Lem. }\ref{lem:decrease}}}{=} \sum_{i= 1}^m \Delta^{\tilde{x}}(D_1,w_i,c')\\
    &\stackrel{\text{Lem. }\ref{lem:decrease}}{\geq} \sum_{i = 1}^\ell \Delta^{\tilde{x}}(D_1\cup\{w_1,\dots,w_{i-1}\},w_i,c')\\
    &\stackrel{\phantom{\text{Lem. }\ref{lem:decrease}}}{=} \sep^{\tilde{x}}(D_1 \cup N,c') - \sep^{\tilde{x}}(D_1,c')\\
    &\stackrel{\text{Lem. }\ref{lem:cut_of_cut}}{\geq} \sep^{\tilde{x}}(D_1 \cup S_c,c')- \sep^{\tilde{x}}(D_1,c')
\end{align*}

But at the same time
\begin{align*}
y_{c'}^{c',c} &\stackrel{\phantom{\text{Lem. }\ref{lem:decrease}}}{=} \sum_{i=1}^\ell f_i\\
&\stackrel{\phantom{\text{Lem. }\ref{lem:decrease}}}{=} \sum_{i = 1}^\ell \Delta^{\tilde{x}}(D_i,v_i,c')\\
&\stackrel{\text{Lem. } \ref{lem:decrease}}{\leq} \sum_{i=1}^\ell \Delta^{\tilde{x}}(D_1 \cup\{v_1,\dots,v_{i-1}\},v_i,c')\\
&\stackrel{\phantom{\text{Lem. }\ref{lem:decrease}}}{=} \sep^{\tilde{x}}(D_1 \cup\{v_1,\dots,v_\ell\},c') - \sep^{\tilde{x}}(D_1,c')\\
&\stackrel{\phantom{\text{Lem. }\ref{lem:decrease}}}{=} \sep^{\tilde{x}}(D_1 \cup S_c,c')- \sep^{\tilde{x}}(D_1,c').
\end{align*}

Thus $y_v^{c',c} \leq \sum_{w \in N} u_w^{c',c}$ and since for all $w \in V$ it holds after the $\ell$-th shift that $y_w^{c',c} = \min(y_c^{c',c}, u_w^{c',c})$ we may conclude that $y_v^{c',c} \leq \sum_{w \in N} y_w^{c',c}$ which means that our assumption was wrong. 

Now we consider the case that $v \neq c'$. Let $N$ be an arbitrary cut between $v$ and $S_c$. If $c' \in H_{S_c}(N)$ then by the previous argument $\sum_{w \in N} y_w^{c',c} \geq y_{c'}^{c',c} \geq y_v^{c',c}$. Otherwise Lemma \ref{lem:half_inv_x} tells us that Property \eqref{half_inv_x} was fulfilled when $(u^{c',c}_w)_{w \in V}$ was set. Since $N$ also separates $v$ and $c'$ in this case we may conclude that $u_v^{c',c} \leq \sum_{w \in N} u_w^{c',c}$ and since for all $w$ it holds that $y_w^{c',c} = \min\left(y_{c}^{c',c}, u_w^{c',c}\right)$ we also get $y_v^{c',c} \leq \sum_{w \in N} y_w^{c',c}$. Thus in both cases our claim is true and the lemma holds.
\end{proof}

\begin{corollary}
\label{cor:v_con_S_c}
After every shift of the algorithm it holds for any $v \in V, c \in C$ and every cut $N$ between $v$ and $S_c$ that:
\begin{equation*}
    y_v^{c} \leq \sum_{v' \in N} y_{v'}^c
\end{equation*}
\end{corollary}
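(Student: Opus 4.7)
The plan is to derive the corollary directly from Lemma \ref{lem:y-shifts-lower-bounded-by-cut} by summing over the source index $c'$. Recall that by definition $y_v^c = \sum_{c' \in V} y_v^{c',c}$, so the statement of the corollary is exactly the sum over $c' \in V$ of the inequality asserted in Lemma \ref{lem:y-shifts-lower-bounded-by-cut}.

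Concretely, I would fix an arbitrary $v \in V$, $c \in C$, and a cut $N$ between $v$ and $S_c$, and then apply Lemma \ref{lem:y-shifts-lower-bounded-by-cut} to $N$ once for every $c' \in V$. This gives $y_v^{c',c} \leq \sum_{w \in N} y_w^{c',c}$ for each $c'$. Summing these inequalities and swapping the order of summation on the right-hand side yields
\begin{equation*}
y_v^c \;=\; \sum_{c' \in V} y_v^{c',c} \;\leq\; \sum_{c' \in V}\sum_{w \in N} y_w^{c',c} \;=\; \sum_{w \in N} \sum_{c' \in V} y_w^{c',c} \;=\; \sum_{w \in N} y_w^c,
\end{equation*}
which is exactly the claim.

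There is no real obstacle here: the corollary is a one-line consequence of Lemma \ref{lem:y-shifts-lower-bounded-by-cut} together with the definition $y_v^c = \sum_{c' \in V} y_v^{c',c}$. The only thing to note is that the cut $N$ does not depend on $c'$, so the same $N$ may be used in every application of the lemma, which justifies pulling the sum over $w \in N$ outside.
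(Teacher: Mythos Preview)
Your proof is correct and is exactly the intended derivation: the paper states this as a corollary of Lemma~\ref{lem:y-shifts-lower-bounded-by-cut} with no separate proof, and summing that lemma over $c' \in V$ together with the definition $y_v^c = \sum_{c' \in V} y_v^{c',c}$ is precisely what is needed.
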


\begin{lemma}
\label{lem:half_inv_y}
At any point during the algorithm for any $c \in C$ and any $v \in S_c \setminus \{c\}$ it holds that $\sep^y(v,c)\geq \frac{1}{8k}$. Thus at any point of the algorithm Property \eqref{half_inv_y} is preserved.
\end{lemma}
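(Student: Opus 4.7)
The lemma combines two assertions: (i) for every $c\in C$ and every $v\in S_c\setminus\{c\}$, $\sep^y(v,c)\geq\tfrac{1}{8k}$; and (ii) Property~\eqref{half_inv_y} is preserved throughout the algorithm. My plan is to establish (i) by induction over the atomic events of Algorithm~\ref{Alg_half} and then to deduce (ii) from (i) with the help of Corollary~\ref{cor:v_con_S_c}.

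The induction is indexed by three kinds of events: (a) adding a node $v^*$ to $C$, (b) adding $v^*$ to an existing environment $S_c$, and (c) performing a single shift (Lines~\ref{line:begin_shift}--\ref{line:end_shift}). Event~(a) sets $S_{v^*}=\{v^*\}$, so there is nothing to verify. For event~(c) I would appeal to monotonicity: the update only raises $y_c^{c',c}$ by $f\geq 0$ and sets $y_w^{c',c}=\min(u_w^{c',c},y_c^{c',c})$ for $w\neq c$, and since $u_w^{c',c}$ is frozen after the first shift from $c'$ to $c$, every entry $y_w^{c',c}$ is non-decreasing in time. Therefore both $y_w^c=\sum_{c'}y_w^{c',c}$ and every cut sum $\sum_{w\in N}y_w^c$ are non-decreasing, so any bound $\sep^y(v,c)\geq\tfrac{1}{8k}$ that held before event~(c) continues to hold after it. The same term-wise comparison $y_v^{c',c}\leq y_c^{c',c}$ yields the useful inequality $y_v^c\leq y_c^c$, which I rely on below.

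The nontrivial case, which I expect to be the main obstacle, is event~(b). When $v^*$ is added to $S_c$, the algorithm's selection rule forces $y_{v^*}^c\geq\tfrac{1}{8k}$. Fix an arbitrary $N\in\mathcal{S}_{v^*,c}$; the goal is $\sum_{w\in N}y_w^c\geq\tfrac{1}{8k}$. If $v^*\in N$ the sum contains $y_{v^*}^c$; if $c\in N$ it contains $y_c^c\geq y_{v^*}^c$; either way we are done. Otherwise $v^*,c\notin N$ and I split on whether $N$ is also a cut between $v^*$ and the environment $S_c$ as it stood immediately \emph{before} the insertion. If it is, Corollary~\ref{cor:v_con_S_c} applied in that earlier state (the insertion does not change any $y$-value) gives $\sum_{w\in N}y_w^c\geq y_{v^*}^c\geq\tfrac{1}{8k}$. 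If it is not, some $s$ in the old $S_c$ lies outside $H_{v^*}(N)$, so $s\notin N$ and $s$ is reachable from $v^*$ in $G\setminus N$; being in the same component as $v^*\in I_c(N)$, $s$ cannot reach $c$ in $G\setminus N$ either, so $s\in I_c(N)$ and in particular $s\neq c$. Hence $N$ is a cut between $s$ and $c$, and the inductive hypothesis on $s\in S_c\setminus\{c\}$ (in the pre-insertion state) yields $\sep^y(s,c)\geq\tfrac{1}{8k}$, so $\sum_{w\in N}y_w^c\geq\tfrac{1}{8k}$. This subcase is the heart of the argument: it uses the observation that any cut that fails to isolate $v^*$ from the previous environment must instead isolate some previously installed environment member from $c$, at which point induction pays off.

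For assertion~(ii), fix arbitrary $v,c\in V$ and $S\in\mathcal{S}_{v,c}$; I aim to show $\sum_{v'\in S}y_{v'}^c\geq\min(\tfrac{1}{8k},y_v^c)$. If $c\notin C$, then $y^c\equiv 0$ and there is nothing to prove. If $v\in S$ or $c\in S$, the sum dominates $y_v^c$ directly (using $y_c^c\geq y_v^c$ in the second case). In the remaining case $c\in C$, $v\neq c$, and $v,c\notin S$. If $S$ happens to be a cut between $v$ and $S_c$, Corollary~\ref{cor:v_con_S_c} gives $\sum_{v'\in S}y_{v'}^c\geq y_v^c$. Otherwise, exactly the argument from event~(b) locates some $s\in S_c\setminus\{c\}$ that is reachable from $v$ in $G\setminus S$; then $S$ is a cut between $s$ and $c$, and part~(i) applied to $s$ yields $\sum_{v'\in S}y_{v'}^c\geq\sep^y(s,c)\geq\tfrac{1}{8k}$. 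In every case $\sum_{v'\in S}y_{v'}^c\geq\min(\tfrac{1}{8k},y_v^c)$, completing the proof.
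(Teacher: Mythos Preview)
Your proof is correct and follows essentially the same approach as the paper's own argument: induction on the events of Algorithm~\ref{Alg_half}, monotonicity of $y$ handling shifts, and for the insertion of $v^*$ into $S_c$ the two-case split on whether a given $(v^*,c)$-cut also separates $v^*$ from the old $S_c$ (Corollary~\ref{cor:v_con_S_c}) or instead traps some earlier $s\in S_c\setminus\{c\}$ on $v^*$'s side (induction). Your write-up is in fact more explicit than the paper's: you treat the degenerate cases $v^*\in N$ and $c\in N$ separately, record the inequality $y_v^c\le y_c^c$, and spell out the derivation of Property~\eqref{half_inv_y} from part~(i), which the paper leaves implicit.
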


\begin{proof}
We will show the claim via induction. In the beginning it is obviously fulfilled. One may note that since for all $\tilde{c} \in C,v' \in V$ the value of $y_{v'}^{\tilde{c}}$ is only increasing, the claim can only be violated by adding a new node $v$ to a set $S_c$, i.e. setting $S_c = S_c \cup \{v\}$. By construction of the algorithm we know that before this $y_v^c \geq \frac{1}{8k}$ which together with Corollary \ref{cor:v_con_S_c} implies that the weight of the minimum cut between $v$ and $S_c$ is at least $\frac{1}{8k}$. Let now $N$ be an arbitrary cut between $v$ and $c$. If $H_c(N)$ does not contain any node from $S_c$ then it is also a cut between $v$ and $S_c$ and thus has weight at least $\frac{1}{8k}$. If there exists however a node $w \in S_c$ with $w \in H_c(N)$ then the weight of the cut is at least $\sep^{y}(w,c)$ which by induction is at least $\frac{1}{8k}$. Thus $\sep^{y}(v,c) \geq \frac{1}{8k}$ and the induction holds.
\end{proof}

Finally we will show that the total assignment of any $v \in V$ by $y$ is not too much larger than the total assignment by $\tilde{x}$ and limit the distance of $v$ to any center it gets assigned to:

\begin{figure}
\centering
    \includegraphics[width= 0.6 \textwidth]{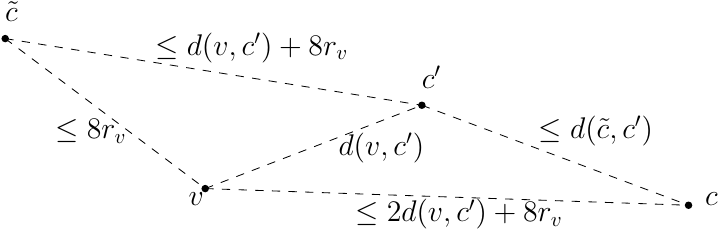}
    \caption{A sketch depicting the third case in Lemma \ref{lem:half_bound}. }
    \label{fig:distance_formula}
\end{figure}

\begin{lemma}
\label{lem:half_bound}
At any point during the algorithm for all $c \in C, c', v \in V$ it holds that:
\begin{enumerate}[(i)]
        \item If $v \neq c$ then $y_v^{c',c} \leq \tilde{x}_v^{c'}$
        \item $d(v,c) \leq 2 d(v,c') + 8 r_v$ if $ y_v^{c',c} > 0$
\end{enumerate}
\end{lemma}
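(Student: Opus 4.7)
My plan is to handle (i) via monotonicity and to prove (ii) by a short case analysis on the state of $v$ at the instant $u_v^{c',c}$ is fixed.

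For (i), note that for $v\neq c$ the algorithm maintains $y_v^{c',c}=\min(u_v^{c',c},y_c^{c',c})\le u_v^{c',c}$, and $u_v^{c',c}$ is set exactly once, at the first shift from $c'$ to $c$, to the then-current value of $x_v^{c'}$. By Observation~\ref{obs:value_of_x} we have $x_v^{c'}=\Delta^{\tilde x}(D_{c'},v,c')$, and since $D_{c'}$ only grows, Lemma~\ref{lem:decrease} implies that $x_v^{c'}$ is non-increasing with initial value $\tilde x_v^{c'}$. Thus $u_v^{c',c}\le \tilde x_v^{c'}$ and (i) follows.

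For (ii), the case $v=c$ is immediate, so assume $v\neq c$. Since $y_v^{c',c}>0$ forces $u_v^{c',c}>0$, at the iteration $i$ where $u_v^{c',c}$ is fixed we have $x_v^{c'}>0$, and by Lemma~\ref{lem:half_inv_x} applied to the trivial cut $\{v\}$ this gives $\sep^x(\{v\},c')=x_v^{c'}>0$. I split on the status of $v$ at that moment (including, if $v$ is the node being processed in iteration $i$, its state immediately after that processing step). In Case~A, $v\in C$: since $v\in S_v$ and $\sep^x$ is monotone under enlargement of the source set, $\sep^x(S_v,c')>0$, so the shift from $c'$ to $v$ is available; the minimality of $d(c_1,c_2)$ enforced in line~\ref{line:choice_shift} yields $d(c,c')\le d(v,c')$, and the triangle inequality gives $d(v,c)\le 2 d(v,c')$. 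In Case~B, $v\in S_{\tilde c}$ for some $\tilde c\in C$: the same minimality argument yields $d(c,c')\le d(\tilde c,c')$. Combining with $d(v,\tilde c)\le 8 r_v$ (the bound the algorithm enforces when adding $v$ to $S_{\tilde c}$) and the triangle inequality I obtain $d(c,c')\le 8 r_v + d(v,c')$, hence $d(v,c)\le 2 d(v,c')+8 r_v$; in the special case $\tilde c=c$ the bound $d(v,c)\le 8 r_v$ is already direct. In Case~C, $v$ has not yet been processed by iteration $i$: then $r_v\ge r_{next}$, and the while-loop condition forces $d(c,c')\le 4 r_{next}\le 4 r_v$, so $d(v,c)\le d(v,c')+4 r_v\le 2 d(v,c')+8 r_v$.

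The delicate point I anticipate is verifying the availability of the alternative shifts in Cases~A and~B, which is where I rely on the identity $\sep^x(\{v\},c')=x_v^{c'}>0$ coming from Lemma~\ref{lem:half_inv_x} together with monotonicity of $\sep^x$ under enlarging the source set. Once availability is established, the remainder of the argument is a triangle-inequality manipulation combined with either the algorithm's minimality rule for shifts or its explicit guarantee $d(v,\tilde c)\le 8 r_v$ at the moment $v$ is placed into an environment.
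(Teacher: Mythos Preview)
Your proof is correct and follows essentially the same approach as the paper: part (i) via the monotonicity of $\Delta^{\tilde x}(D_{c'},v,c')$ in $D_{c'}$, and part (ii) via the same three-way case split on whether $v$ is still in $Q$, already in $C$, or already in some $S_{\tilde c}$ at the moment $u_v^{c',c}$ is set. The only cosmetic difference is that the paper re-derives the bound $d(v,\tilde c)\le 8r_v$ inside the proof, whereas you cite it as a property the algorithm guarantees when placing $v$ into an environment; this is justified since that fact is established in the surrounding text.
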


\begin{proof}
First we prove (i). One might observe that $y_v^{c',c}$ is upper bounded by $u_v^{c',c}$ which is initially set to 
\begin{equation*}
\Delta^{\tilde{x}}(D_{c'},v,c') \stackrel{\text{Lem. }\ref{lem:decrease}}{\leq} \Delta^{\tilde{x}}(\emptyset,v,c') = \sep^{\tilde{x}}(v,c')
\end{equation*}
Thus (i) is fulfilled.

To prove (ii) let $y_v^{c',c} > 0$ which implies that $u_v^{c',c} > 0$ (unless $v=c$ but in this case the property is trivially fulfilled). Again we consider the moment in which  $u_v^{c',c}$ is set. We consider three cases of which exactly one is fulfilled in that moment:
\begin{itemize}
\item $v \in Q$: In this case we know that $r_v \geq r_{next}$ and since mass gets shifted from $c'$ to $c$ we know that $d(c,c') \leq 4r_{next}$.
Thus by the triangle inequality $d(v,c) \leq d(c,c') + d(v,c') \leq d(v,c')+4r_v$.
\item $v \in C$: Since $u_v^{c',c} > 0$ we know that $x_v^{c'}> 0$ which means that $v$ and $c'$ have been considered in line \ref{line:choice_shift} of Algorithm \ref{Alg_half} which together with the fact that $c,c'$ have been chosen for the shift implies that $d(v,c') \geq d(c,c')$ which together with the triangle inequality gives us $d(v,c) \leq 2d(v,c')$.
\item There exists a $\tilde{c} \in C$ such that $v \in S_{\tilde{c}}$: By the same arguments as in the previous step we get that $x_v^{c'} > 0$ which implies $\sep^x(S_{\tilde{c}},c') > 0$ which implies that $d(\tilde{c},c') \geq d(c,c')$.
 At the same time we know that in the iteration in which $v$ was added to $S_{\tilde{c}}$ at least $\frac{1}{4}$ of good assignments, which as a reminder were assignments to nodes $z \in V$ with $d(v,z) \leq 4r_v$, have already been shifted and for any previous shift from $z \in V$ to $c'' \in C$ that $d(z,c'') \leq 4r_v$. Thus there exists at least one $c'' \in C$ to which $\frac{1}{8k}$ of the good assignments have been shifted which means $d(c'',v) \leq 8r_v$ and $y_v^{c''} \geq \frac{1}{8k}$.
Thus $d(v, \tilde{c}) \leq 8 r_v$ which implies $d(\tilde{c},c') \leq d(v,c') + 8r_v$ and thus $d(v,c) \leq 2 d(v,c') + 8 r_v$. A visualization of this situation can be found in Figure \ref{fig:distance_formula}.
\end{itemize}
Since in any case $d(v,c) \leq 2 d(v,c') + 8 r_v$ we can conclude that this is true for any $v,c',c$ with $y_v^{c',c} > 0$ which directly proves (ii).

\end{proof}

By combining all these results we obtain:

\begin{restatable}{theorem}{HalfOpened}
\label{thm:bound_y}
Algorithm \ref{Alg_half} computes an $\frac{1}{8k}$-connected solution $y$ such that every center is at least half opened and $\cost(y) \leq 20k \cost(\tilde{x})$.
\end{restatable}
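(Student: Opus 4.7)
The proof plan is to combine the invariants and structural lemmas already established in this subsection to verify three separate assertions: (a) every center in $C$ is at least half-opened, (b) $y$ is $\frac{1}{8k}$-connected, and (c) $\cost(y) \leq 20k \cost(\tilde x)$.

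Claim (a) is immediate from Lemma~\ref{lem_half}. For claim (b), I would verify each defining condition of $\frac{1}{8k}$-connectivity in turn. The bound $\sum_{c \in V} y_c^c \leq k$ follows from Lemma~\ref{lem:half_inv_number} combined with the termination condition $x \equiv 0$. The bound $\sum_{c \in V} y_v^c \geq 1$ for every $v$ follows from Lemma~\ref{lem:half_inv_assigned}: summing that inequality over $c'$ and using $x \equiv 0$ at termination yields $\sum_c y_v^c \geq \sum_{c'} \tilde x_v^{c'} = m_v \geq 1$.

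The cut condition $\sep^y(v,c) \geq \min(\frac{1}{8k}, y_v^c)$ is the most delicate part of the argument and the one I expect to require genuine care. It is vacuous when $c \notin C$, and when $c \in C$ with $v \in S_c$ it is exactly Lemma~\ref{lem:half_inv_y}. The remaining case $c \in C$, $v \notin S_c$ requires a small case split on an arbitrary cut $N$ between $v$ and $c$: if $H_c(N) \cap S_c = \emptyset$, then $N$ is also a cut between $v$ and $S_c$, so Corollary~\ref{cor:v_con_S_c} gives weight at least $y_v^c$; otherwise some $w \in S_c$ lies in $H_c(N)$, which makes $N$ a cut between $w$ and $c$ whose weight is at least $\frac{1}{8k}$ by Lemma~\ref{lem:half_inv_y}.

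For claim (c), the key observation is that $|C| \leq 2k$, since every center is at least half-opened and the total opening satisfies $\sum_{c \in C} y_c^c \leq k$. I would then expand $\cost(y) = \sum_{v,c \in V,\, c' \in V} y_v^{c',c}\, d(v,c)$, note that the $v = c$ terms contribute zero, and combine Lemma~\ref{lem:half_bound}(i) (which gives $y_v^{c',c} \leq \tilde x_v^{c'}$ whenever $v \neq c$) with Lemma~\ref{lem:half_bound}(ii) (which gives $d(v,c) \leq 2 d(v,c') + 8 r_v$ whenever $y_v^{c',c} > 0$). Summing the inner sum over $c \in C$ introduces a factor of $|C| \leq 2k$, and then the identities $\sum_{c'} \tilde x_v^{c'}\, d(v,c') = p_v$ and $\sum_{c'} \tilde x_v^{c'}\, r_v = m_v r_v = p_v$, together with $\sum_v p_v = \cost(\tilde x)$, collapse the resulting bound to $4k \cost(\tilde x) + 16k \cost(\tilde x) = 20k \cost(\tilde x)$. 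The only genuine subtlety in the entire argument is the cut case $v \notin S_c$ in (b); everything else is bookkeeping that instantiates the preserved invariants at termination.
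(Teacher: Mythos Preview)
Your proposal is correct and follows essentially the same route as the paper: it cites Lemma~\ref{lem_half} for half-opening, combines Lemmas~\ref{lem:half_inv_assigned}, \ref{lem:half_inv_number}, and \ref{lem:half_inv_y} (together with $x\equiv 0$ at termination) for $\tfrac{1}{8k}$-connectivity, and uses Lemma~\ref{lem:half_bound} with $|C|\le 2k$ to obtain the $20k$ cost bound via the identical arithmetic. Your explicit case split for $v\notin S_c$ in the cut condition is a harmless elaboration of a step the paper absorbs into the concluding sentence of Lemma~\ref{lem:half_inv_y}.
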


\begin{proof}
It is easy to verify that the algorithm is terminating and that at the end for all $v,c \in V$ it holds that $x_v^c = 0$. Thus we obtain a solution which because of Lemma \ref{lem:half_inv_assigned}, \ref{lem:half_inv_number} and \ref{lem:half_inv_y} fulfills the following properties:
\begin{align*}
         \sum_{c \in V} y_v^c & \geq 1 & \forall v \in V\\
         \sum_{c \in V} y_c^c&\leq k& \\
         \sum_{v' \in S} y_{v'}^c &\geq \min\left(\frac{1}{8k}, y_{v}^c\right) & \forall v,c \in V, S \in \mathcal{S}_{v,c}
\end{align*}
Thus the algorithm actually returns an $\frac{1}{8k}$-connected solution which by Lemma \ref{lem_half} contains only half opened centers. To bound the cost of this solution we consider an arbitrary node $v \in V$ its cost in $y$ are bounded by:
\begin{align*}
\sum_{c \in C} y_v^c d(v,c) &\stackrel{\phantom{\text{Lem. }\ref{lem:half_bound}}}{=} \sum_{c' \in V}\sum_{c \in C} y_v^{c',c} d(v,c)\\
&\stackrel{\text{Lem. }\ref{lem:half_bound}}{\leq} \sum_{c' \in V} \sum_{c \in C: d(v,c) \leq 2d(v,c') + 8r_v} \tilde{x}_v^{c'} d(v,c)\\
&\stackrel{\phantom{\text{Lem. }\ref{lem:half_bound}}}{\leq} \sum_{c' \in V} 2k \cdot \tilde{x}_v^{c'} (2d(v,c') + 8 r_v)\\
&\stackrel{\phantom{\text{Lem. }\ref{lem:half_bound}}}{=} 2k \left( 2\sum_{c' \in V}\tilde{x}_v^{c'}d(v,c') + 8r_v\sum_{c' \in V}\tilde{x}_v^{c'}\right)\\
&\stackrel{\phantom{\text{Lem. }\ref{lem:half_bound}}}{=} 2k\left(10p_v\right) = 20k \cdot p_v
\end{align*}
 Note that Lemma \ref{lem:half_bound} does not hold for values $y_v^{c',c}$ with $v=c$ but since the summand $y_v^{c',c}d(v,c)$ equals $0$ in this case, this does not make a difference.
Thus for every node the cost increases by at most a factor of $20k$ which means that $\cost(y) \leq 20k \cost(\tilde{x})$.

\end{proof}

\subsection{Finding suitable replacements.}\label{sec:replace}

In this section we want to subdivide $C$ into two subsets $C_1$ and $C_{1/2}$. Intuitively we can consider $C_1$ as the centers opened with value $1$ while the centers in $C_{1/2}$ can be considered to be opened with value $\frac{1}{2}$ (again following the ideas of the LP rounding algorithm by Charikar et al. \cite{charikar1999constant}). Following this logic we require that $|C_1| + \frac{1}{2} |C_{1/2}| \leq k$. However it is important to note that we are not actually modifying the openings of the centers yet since this would require to also raise some of the assignments to the centers to ensure a sufficient connectivity which would cause some problems in the next step.

For each center $c$ in $C_{1/2}$ we will determine a successor $s(c) \in C$ such that reassigning all nodes (partially) assigned to $c$ from $c$ to $s(c)$ will not produce too much additional costs. Later we will then open at most half the centers in $C_{1/2}$ such that for every center either itself or its successor gets opened, thus obtaining at most $k$ centers such that every assignment can be redirected to one of them. For simplicity we will for every center $c \in C$ define $M_c$ to be the total assignment (or mass) assigned to it, i.e., $M_c = \sum_{v \in V} y_v^c$. Then replacing $c$ by $s(c)$ will cost up to $M_c \cdot d(c,s(c))$ since $s(c)$ might be that much farther away from the respective nodes plus any cost necessary to increase some previous assignments to $s(c)$ to ensure a sufficient connectivity. Let $R(c,\tilde{c})$ denote the cost that arises if we replace a center $c$ by another center $\tilde{c}$.  We will provide upper bounds for $R(c,s(c))$ for any $c \in C_{1/2}$. 

\mycomment{
First let us formally define how a center $c \in C$ can be replaced by another center $\tilde{c} \in C$ for which $y_c^{\tilde{c}} > 0$. We distinguish two cases: 
\begin{itemize}
    \item If $y_{c}^{\tilde{c}} \geq \frac{1}{16k}$ one can set for all $v \in V \setminus\{\tilde{c}\}$ the value $y_v^{\tilde{c}}$ to $\min(1,y_v^{c}+y_v^{\tilde{c}})$ and reduce $y_v^{c}$ to $0$. Lemma \ref{lem:half_inv_y} together with the fact that $y_{c}^{\tilde{c}}$ was at least $\frac{1}{16k}$ before this replacement ensures that the instance stays $\frac{1}{16k}$-connected. It holds that
    \begin{equation*}
        R(c,\tilde{c}) \leq \sum_{v\in V} d(v,\tilde{c}) - d(v,c) \leq M_c \cdot d(c,\tilde{c}).
    \end{equation*}
\item If $y_{c}^{\tilde{c}} < \frac{1}{16k}$ we need to be more careful to guarantee a sufficient connectivity. First we will calculate a maximum flow from $c$ to $\tilde{c}$ where for every node $v \in V$ we have a capacity of $y_v^{\tilde{c}}$. Let $v$ be an arbitrary node and let $h_v$ be the flow passing through it. Then we set $y_v^{\tilde{c}} = \max\left(y_v^{\tilde{c}}, \frac{1}{16k  y_{c}^{\tilde{c}}}\cdot h_v\right)$. Because of Lemma \ref{lem:half_inv_y} the flow had at least value $y_{c}^{\tilde{c}}$ which means that after this operation there exists a flow of value at least $\frac{1}{16k}$ from $c$ to $\tilde{c}$ ensuring $\frac{1}{16k}$-connectivity while the assignments to $\tilde{c}$ have been increased by a factor of at most $\frac{1}{y_{c}^{\tilde{c}}}\frac{1}{16k  }$. Afterwards one can apply the same procedure as in the first case. The total additional cost in this case can be upper bounded as follows:
\begin{equation*}
    R(c,\tilde{c}) \leq M_c \cdot d(c,\tilde{c}) + \sum_{v \in V} y_v^{\tilde{c}} \cdot d(v, \tilde{c})\frac{1}{y_c^{\tilde{c}}}  \frac{1}{16k},
\end{equation*}
where the second summand results from the increase of the flow from $c$ to $\tilde{c}$.
\end{itemize}
}

First let us formally define how a center $c \in C$ can be replaced by another center $\tilde{c} \in C$ for which $y_c^{\tilde{c}} > 0$. During this procedure we will further relax our connectivity constraint and only require the solution to be $\frac{1}{16k}$-connected afterwards. We distinguish two cases: 
\begin{itemize}
    \item If $y_{c}^{\tilde{c}} \geq \frac{1}{16k}$ one can set for all $v \in V \setminus\{\tilde{c}\}$ the value $y_v^{\tilde{c}}$ to $\min(1,y_v^{c}+y_v^{\tilde{c}})$ and reduce $y_v^{c}$ to $0$. Lemma \ref{lem:half_inv_y} together with the fact that $y_{c}^{\tilde{c}}$ was at least $\frac{1}{16k}$ before this replacement ensures that the instance stays $\frac{1}{16k}$-connected. It holds that
    \begin{equation}
        R(c,\tilde{c}) \leq \sum_{v\in V} d(v,\tilde{c}) - d(v,c) \leq M_c \cdot d(c,\tilde{c}).
    \end{equation}
\item If $y_{c}^{\tilde{c}} < \frac{1}{16k}$ we need to be more careful to guarantee a sufficient connectivity. For simplicity reasons we will ensure this connectivity by simply multiplying all assignments to $\tilde{c}$ by $\frac{1}{y_c^{\tilde{c}}}\frac{1}{16k}$. This directly increases $\sep^y(c,\tilde{c})$ from $y_{c}^{\tilde{c}}$ to $\frac{1}{16k}$. One might note that in practice there would be cheaper methods to do this, for example only increasing the assignments of the nodes participating to the maximum flow from $c$ to $\tilde{c}$. The additional cost of this step are upper bounded by
\begin{equation*}
    \sum_{v \in V} y_v^{\tilde{c}} \cdot d(v, \tilde{c})\frac{1}{y_c^{\tilde{c}}}  \frac{1}{16k}.
\end{equation*}
Afterwards the same steps as in the previous case can be applied leading to an entire replacement cost of
\begin{equation}
    R(c,\tilde{c}) \leq M_c \cdot d(c,\tilde{c}) + \sum_{v \in V} y_v^{\tilde{c}} \cdot d(v, \tilde{c})\frac{1}{y_c^{\tilde{c}}}  \frac{1}{16k}.\label{eq:replacement_hard}
\end{equation}
\end{itemize}

In both cases we need to bound the value $M_c \cdot d(c,\tilde{c})$ if we want to replace $c$ by $\tilde{c}$. To do this, we first prove for any point $v$ at least partially assigned to $c$ that the initial radius of $c$ is not too large compared to $v$'s own radius $r_v$ and its initial assignments:

\begin{lemma}
\label{lem:radius_shift}
At any point during Algorithm \ref{Alg_half} if $y_v^{c',c} > 0$ then $r_c \leq 0.25 d(v,c') + 2 r_v$.
\end{lemma}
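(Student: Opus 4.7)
The plan is to argue by contradiction. Suppose $y_v^{c',c}>0$ while $r_c>0.25\,d(v,c')+2r_v$. Because $y_v^{c',c}$ is set only to $\min(u_v^{c',c},y_c^{c',c})$, and $u_v^{c',c}$ is frozen to the value $x_v^{c'}$ at the very first shift from $c'$ to $c$, the hypothesis $y_v^{c',c}>0$ forces $x_v^{c'}>0$ at that instant. The goal will be to derive a contradiction by showing that Algorithm~\ref{Alg_half} must already have driven $x_v^{c'}$ to zero strictly before that instant.

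First I would peel off the easy consequence $r_c>2r_v$, which forces the algorithm to process $v$ strictly before $c$. When $v$ is processed, either (a)~$v$ is added to $C$ with $S_v=\{v\}$, or (b)~$v$ is added to some $S_{\tilde c}$ with $\tilde c\in C$ already present, and in the latter case the selection rule together with the Markov argument about good assignments yields $d(v,\tilde c)\le 8r_v$. I then set up a ``competing shift'' from $c'$ that passes through $v$: in case~(a) this is the shift to $v$ itself with distance $d(v,c')$, and in case~(b) it is the shift to $\tilde c$ via $v\in S_{\tilde c}$ with distance bounded by $8r_v+d(v,c')$ by the triangle inequality. The standing assumption rearranges to $4r_c>8r_v+d(v,c')$, so in either case the competing distance is strictly smaller than $4r_c$.

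The crux is the mechanics of the inner shift loop. In the outer iteration immediately before $c$ is processed, $r_{\mathrm{next}}$ equals $r_c$, so the inner loop has exhausted every shift $(c_1,c_2)$ with $c_1\in C$, $d(c_1,c_2)\le 4r_c$ and $\sep^x(S_{c_1},c_2)>0$. Applied to the competing shift (whose distance is $<4r_c$ and whose target $c_1$ lies in $C$ by the previous step), one of two things must have happened. Either the shift actually fired, in which case the inner inner while loop did not terminate until $x_{\tilde v}^{c'}=0$ for every $\tilde v\in S_{c_1}$, and since $v\in S_{c_1}$ this pins $x_v^{c'}=\Delta^{\tilde x}(D_{c'},v,c')=0$. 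Or the shift was no longer feasible, meaning $\sep^x(S_{c_1},c')=0$; by cut monotonicity any $(S_{c_1},c')$-cut is also a $(v,c')$-cut, so $\sep^x(v,c')=0$ and Lemma~\ref{lem:half_inv_x} gives $x_v^{c'}=0$. Either way $x_v^{c'}=0$ before $c$ ever enters $C$, and Lemma~\ref{lem:decrease} keeps it zero thereafter, the desired contradiction.

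The main obstacle I expect is the careful bookkeeping in case~(b): one has to verify that $v$ remains a member of $S_{\tilde c}$ in every iteration between its own processing and the entry of $c$ into $C$ (which it does, because environments only grow), and that although the inner inner loop may visit elements of $S_{\tilde c}$ in an arbitrary order, its termination condition demands $x_{\tilde v}^{c'}=0$ for \emph{every} member of $S_{\tilde c}$, so $v$ is necessarily captured. Minor degenerate cases are easy: $v=c$ is excluded by $r_c>2r_v\ge 0$, and $\tilde c=c$ cannot happen because $\tilde c$ is already in $C$ when $v$ is processed, whereas $c$ enters $C$ only strictly later.
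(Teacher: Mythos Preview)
Your proof is correct and follows essentially the same competing-shift idea as the paper: both arguments exploit that if $x_v^{c'}>0$ at the first shift from $c'$ to $c$, then no shift from $c'$ through $v$ (to $v$ itself or to the center $\tilde c$ with $v\in S_{\tilde c}$) can have been executed beforehand, which forces the distance inequalities. The paper organizes this as a direct three-case analysis on the status of $v$ at the moment $u_v^{c',c}$ is frozen (with the case $v\in Q$ giving $r_c\le r_v$ directly), whereas your contradiction framing absorbs that case into the consequence $r_c>2r_v$ and pins the reasoning to the outer iteration just before $c$ is processed; this is a mild reorganization rather than a different argument.
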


\begin{proof}
Let  $v,c' \in V$ be two nodes and  $c \in C$ a center such that $y_v^{c',c}>0$. Then if $v \neq c$ also $u_v^{c',c}>0$ (for $v=c$ the claim would be trivially fulfilled).
Let us consider the moment in which $u_v^{c',c}$ is set. Then exactly one of the following cases is fulfilled:
\begin{itemize}
\item $v \in Q$: Then $r_v \geq r^*$ while $r^* \geq r_c$ since $c \in C$. Thus $r_c \leq r_v$.
\item $v \in C$: Since $u_v^{c',c} > 0$ we know that $x_v^{c'} > 0$.
Assume for the sake of contradiction that $d(v,c') \leq 4r^*$.
If $v$ was added to $C$ in an earlier iteration, then we would already have applied a shift from $c'$ to $v$ in the previous iteration, as the algorithm applies all possible shifts with distance at most $4r_{next}$, which is always equal to the next value of $4r^*$.
Now assume that $v$ was added in the current iteration. By the fact, that $u_v^{c',c}$ is was not set earlier we know that until now no shift from $c'$ to $c$ was executed. By the same argument as before, we get that $d(c',c) > 4 r^*$.
Thus we would have applied a shift from $c'$ to $v$ before applying a shift from $c'$ to $c$.

In both cases we would already have applied a shift from $c'$ to $v$ contradicting that $x_v^{c'} > 0$. This implies that $d(v,c') > 4r^*$ and by combining this with the fact that $r_c \leq r^*$ we obtain that $r_c \leq 0.25 d(v,c')$.
\item There exists a $\tilde{c} \in C$ such that $v \in S_{\tilde{c}}$: As in the proof of Lemma \ref{lem:half_bound} we may conclude that $d(v,\tilde{c}) \leq 8 r_v$. Thus as in the previous case we would already have applied a shift from $c'$ over $v$ to $\tilde{c}$ if $d(v,c') + 8 r_v \leq 4 r^*$ and hence $r_c \leq r^* \leq 0.25 d(v,c') + 2 r_v$.
\end{itemize}

\end{proof}

Using this we may conclude that if we reassign each center $c$ to a center at distance at most $O(r_c)$ then this will also increase the costs of the assignments only by $O(k \cdot \cost(\tilde{x}))$:

\begin{lemma}
\label{lem:shift_by_radius}
\begin{equation*}
    \sum_{c \in C} \sum_{v \in V} y_v^c r_c \leq 4.5 k \cdot \cost(\tilde{x})
\end{equation*}
\end{lemma}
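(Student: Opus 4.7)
The plan is to expand $y_v^c = \sum_{c' \in V} y_v^{c',c}$ and bound each contribution using Lemmas~\ref{lem:half_bound}(i) and~\ref{lem:radius_shift}. Since Lemma~\ref{lem:half_bound}(i) only yields $y_v^{c',c} \leq \tilde{x}_v^{c'}$ when $v \neq c$, I would split
\[
\sum_{c \in C}\sum_{v \in V} y_v^c\, r_c \;=\; \sum_{c \in C}\sum_{v \neq c}\sum_{c' \in V} y_v^{c',c}\, r_c \;+\; \sum_{c \in C} y_c^c\, r_c
\]
and handle the two pieces separately.

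For the off-diagonal piece, whenever $y_v^{c',c} > 0$ with $v \neq c$, Lemma~\ref{lem:radius_shift} gives $r_c \leq 0.25\, d(v,c') + 2 r_v$, so combined with Lemma~\ref{lem:half_bound}(i) the inner sum over $c'$ is at most $\sum_{c'} \tilde{x}_v^{c'}\bigl(0.25\, d(v,c') + 2 r_v\bigr) = 0.25\, p_v + 2 r_v m_v = 2.25\, p_v$, using $r_v m_v = p_v$. Summing over $v$ and $c$ introduces a factor of $|C|$, and since Lemma~\ref{lem_half} together with Property~\eqref{half_inv_number} forces every center to have opening at least $\tfrac{1}{2}$ while the total opening is at most $k$, we get $|C| \leq 2k$. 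Hence this piece is bounded by $2.25 \cdot 2k \cdot \cost(\tilde{x}) = 4.5\, k\, \cost(\tilde{x})$.

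The main obstacle is the diagonal piece $\sum_c y_c^c r_c$, where neither lemma applies directly. My plan is to trace the shifts that produce $y_c^c$: every shift from $c'$ over some $\tilde{v} \in S_c$ to $c$ raises $y_c^{c',c}$ by the current value $x_{\tilde{v}}^{c'} \leq \tilde{x}_{\tilde{v}}^{c'}$, and each pair $(\tilde{v}, c')$ participates in at most one such shift, so $y_c^c \leq \sum_{\tilde{v} \in S_c} m_{\tilde{v}}$. Because $c$ joins $C$ before any other $\tilde{v}$ is added to $S_c$ and nodes are processed in order of increasing radius, $r_c \leq r_{\tilde{v}}$ for every $\tilde{v} \in S_c$, giving $y_c^c\, r_c \leq \sum_{\tilde{v} \in S_c} m_{\tilde{v}}\, r_{\tilde{v}} = \sum_{\tilde{v} \in S_c} p_{\tilde{v}}$. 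Since each $v \in V$ lies in exactly one environment $S_c$, summing over $c$ the diagonal piece is at most $\cost(\tilde{x})$, which is of lower order and is absorbed into the $O(k\, \cost(\tilde{x}))$ estimate, matching the claimed $4.5\, k\, \cost(\tilde{x})$ bound.
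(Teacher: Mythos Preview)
Your off-diagonal argument is exactly the paper's proof: expand $y_v^c = \sum_{c'} y_v^{c',c}$, apply Lemma~\ref{lem:radius_shift} to bound $r_c \le 0.25\,d(v,c') + 2r_v$, replace $y_v^{c',c}$ by $\tilde{x}_v^{c'}$ via Lemma~\ref{lem:half_bound}(i), collapse the inner sum to $2.25\,p_v$, and multiply by $|C| \le 2k$. The paper carries this out uniformly for all $v,c$ without splitting off the diagonal, so your decomposition is actually more careful: Lemma~\ref{lem:half_bound}(i) is stated only for $v \neq c$, and the paper's proof silently applies it at $v = c$ as well. Your separate handling of $\sum_{c} y_c^c\,r_c$ via $y_c^c \le \sum_{\tilde v \in S_c} m_{\tilde v}$ and $r_c \le r_{\tilde v}$ (using that nodes enter $S_c$ no earlier than $c$ and that the environments partition~$V$) is a clean and correct way to cover that case.

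The only inaccuracy is your final sentence. Adding the diagonal contribution $\cost(\tilde{x})$ to the off-diagonal bound $4.5k\,\cost(\tilde{x})$ gives $(4.5k+1)\,\cost(\tilde{x})$; this is not ``absorbed'' into $4.5k\,\cost(\tilde{x})$. You have established the lemma up to a slightly worse constant, which is harmless for every downstream use (Lemmas~\ref{lem:potential_bound} and~\ref{lem:bound_int_centers_z} only need an $O(k)\,\cost(\tilde{x})$ bound), but it does not prove the inequality exactly as stated.
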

    
\begin{proof}

We will consider the contribution of all assignments of a single node $v$ to the sum:
\begin{align*}
    \sum_{c \in C} y_v^c r_c &\stackrel{\phantom{\text{Lem. }\ref{lem:decrease}}}{=} \sum_{c \in C} \sum_{c' \in V} y_v^{c',c} r_c\\
    &\stackrel{\text{Lem. }\ref{lem:radius_shift}}{\leq} \sum_{c \in C} \sum_{c' \in V} y_v^{c',c} \left( 0.25 d(v,c') + 2 r_v \right)\\
    &\stackrel{\text{Lem. }\ref{lem:half_bound}}{\leq} \sum_{c  \in C} \sum_{c' \in V} \tilde{x}_v^{c'} \left( 0.25 d(v,c') + 2 r_v \right)\\
    &\stackrel{\phantom{\text{Lem. }\ref{lem:decrease}}}{\leq} 2k \left( \sum_{c' \in V} \tilde{x}_v^{c'} 0.25 d(v,c') + \sum_{c' \in V} \tilde{x}_v^{c'} 2 r_v \right)\\
    &\stackrel{\phantom{\text{Lem. }\ref{lem:decrease}}}{=} 4.5k \cdot p_v
\end{align*}

Since $p_v$ is the cost of $v$ in $\tilde{x}$ we obtain that:
\begin{equation*}
    \sum_{c \in C} \sum_{v \in V} y_v^c r_c \leq \sum_{v \in V} 4.5k \cdot p_v = 4.5k \cdot \cost(\tilde{x})
\end{equation*}

which proves the lemma.
\end{proof}

Now for each center $c$ we define $a_c = 1 - y_c^c$ as the amount by which the center is unopened. One might note that there exists at least $a_c$ assignments of $c$ to other centers $\tilde{c} \in C$. By considering only the first two cases in the proof of Lemma \ref{lem:half_bound} we can bound the length of these assignment as follows:

\begin{corollary}
\label{cor:center_shift}
For any centers $c, \tilde{c} \in C$ and any point $c' \in V$ it holds that if $y_c^{c',\tilde{c}} > 0$ that:
\begin{equation*}
    d(c,\tilde{c}) \leq d(c,c') + \max\left(4 r_c,d(c,c')\right)
\end{equation*}
\end{corollary}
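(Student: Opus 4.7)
The plan is to mirror the case analysis in the proof of Lemma~\ref{lem:half_bound}(ii), but restricted to the two cases that can actually occur when the node being assigned is itself a center. Since $c \in C$, by the structure of Algorithm~\ref{Alg_half}, $c$ is either still in the queue $Q$ at the moment $u_c^{c',\tilde{c}}$ is set, or it has already been moved to $C$. The third case of Lemma~\ref{lem:half_bound}(ii), namely $c \in S_{\tilde{c}'}$ for some other center $\tilde{c}' \ne c$, cannot occur because once a node is added to $C$ it is never added to an environment of another center. This is precisely the reason the corollary drops the ``$+\, 8r_v$'' term and strengthens the bound.

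First I would reduce to analyzing the instant $u_c^{c',\tilde{c}}$ was set, by observing that $y_c^{c',\tilde{c}} > 0$ forces $u_c^{c',\tilde{c}} > 0$ (the case $c = \tilde{c}$ is trivial since then $d(c,\tilde{c})=0$). Then I split:

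\textbf{Case 1:} $c \in Q$. Then $c$ has not been processed yet, so $r_c \geq r^*$, where $r^*$ is the current value guiding the shifts. Since the algorithm only performs shifts with $d(c',\tilde{c}) \leq 4 r^*$ (the inner while-loop condition from line guarded by $d(c,c') \leq 4 r_{next}$ applied in the previous outer iteration), we obtain $d(c',\tilde{c}) \leq 4 r_c$. Combined with the triangle inequality this gives $d(c,\tilde{c}) \leq d(c,c') + 4 r_c$.

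\textbf{Case 2:} $c \in C$. Here $u_c^{c',\tilde{c}} > 0$ forces $x_c^{c'} > 0$, so $\sep^x(S_c, c') > 0$, meaning the pair $(c, c')$ was a valid candidate in line~\ref{line:choice_shift} at the time the shift from $c'$ to $\tilde{c}$ was initiated. Because the algorithm selects the $\argmin$ of $d(c_1,c_2)$ over such candidate pairs, we must have $d(\tilde{c}, c') \leq d(c, c')$. By the triangle inequality, $d(c,\tilde{c}) \leq d(c,c') + d(\tilde{c},c') \leq 2\,d(c,c') = d(c,c') + d(c,c')$.

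Taking the max of the two bounds yields $d(c,\tilde{c}) \leq d(c,c') + \max(4 r_c, d(c,c'))$. The only subtlety — and the ``main obstacle'' worth double-checking — is justifying why the third case of Lemma~\ref{lem:half_bound} cannot arise; this is a structural claim about the algorithm (centers in $C$ are never subsequently inserted into any environment $S_{c''}$ with $c'' \ne c$), which can be read off directly from the pseudocode. Everything else is an immediate adaptation of the argument in Lemma~\ref{lem:half_bound}(ii).
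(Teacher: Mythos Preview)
Your proposal is correct and follows exactly the approach the paper indicates: the corollary is stated immediately after the remark ``By considering only the first two cases in the proof of Lemma~\ref{lem:half_bound} we can bound the length of these assignment as follows,'' and your two cases reproduce precisely those first two cases of Lemma~\ref{lem:half_bound}(ii), together with the observation that a node eventually placed in $C$ is never placed in any environment $S_{c''}$ with $c'' \neq c$. One small notational slip: in Case~1 the relevant threshold is $r_{\mathrm{next}}$ rather than $r^*$ (the shift condition is $d(\tilde{c},c') \leq 4 r_{\mathrm{next}}$ and $c \in Q$ gives $r_c \geq r_{\mathrm{next}}$), but this does not affect the argument.
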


We will now subdivide the center set into two sets: The set $H = \{c \in C \mid a_c > \frac{1}{4}\}$ of centers that are closer to being half-opened than opened and the set $O = \{c \in C \mid a_c \leq \frac{1}{4}\}$ of centers that are closer to being opened. It is important to note that despite some similarities there is no direct connection to the division of $C$ into $C_{1/2}$ and $C_1$ that we want to obtain. While $H$ will in its entirety be added to $C_{1/2}$ also some nodes in $O$ might be added to it.

For the centers $c$ in $H$ we will observe that there will always exist another center $\tilde{c}$ such that $\tilde{c}$ will not be too far away from $c$ and there already exists a $\frac{1}{16k}$ assignment from $c$ to $\tilde{c}$:

\begin{lemma}\label{lem:H-connectivity-and-cost-guarantee}
For any $c \in H$ there exists a center $\tilde{c} \in C \setminus \{c\}$ with $y_c^{\tilde{c}}\geq \frac{1}{16k}$ and $d(c,\tilde{c}) \leq 16 r_c$.
\end{lemma}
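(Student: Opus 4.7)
The plan is to reduce the lemma, via pigeonhole on $|C \setminus \{c\}| \leq 2k-1$ (using Lemma~\ref{lem_half} and Property~\eqref{half_inv_number} to bound $|C| \leq 2k$), to the following close-mass lower bound:
\[
\sum_{\tilde{c} \in C \setminus \{c\},\, d(c,\tilde{c}) \leq 16 r_c} y_c^{\tilde{c}} \geq \frac{1}{8}.
\]
Given this, at least one $\tilde{c}$ in the sum attains $y_c^{\tilde{c}} \geq 1/(16k)$ and simultaneously lies within distance $16 r_c$ of $c$, which is exactly the claim.

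To establish the close-mass lower bound, I would start from Lemma~\ref{lem:half_inv_assigned}. At termination $x \equiv 0$, so for every $c' \in V$ we get $\sum_{\tilde{c} \in C} y_c^{c',\tilde{c}} \geq \tilde{x}_c^{c'}$. Summing over $c'$ with $d(c,c') \leq 8 r_c$ and applying Markov's inequality (using $p_c = r_c m_c$, which forces $\sum_{c':\, d(c,c') > 8 r_c} \tilde{x}_c^{c'} \leq m_c/8$) yields
\[
\sum_{\tilde{c} \in C} \sum_{c':\, d(c,c') \leq 8 r_c} y_c^{c',\tilde{c}} \geq \frac{7 m_c}{8} \geq \frac{7}{8},
\]
where the last step uses $m_c \geq 1$.

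Next I would peel off the $\tilde{c}=c$ contribution. It is at most $\sum_{c'} y_c^{c',c} = y_c^c$, and since $c \in H$ means $a_c = 1 - y_c^c > 1/4$, we have $y_c^c < 3/4$. Subtracting leaves $\sum_{\tilde{c} \in C \setminus \{c\}} \sum_{c':\, d(c,c') \leq 8 r_c} y_c^{c',\tilde{c}} \geq 7/8 - 3/4 = 1/8$. Corollary~\ref{cor:center_shift} then gives $d(c,\tilde{c}) \leq d(c,c') + \max(4 r_c, d(c,c'))$ whenever $y_c^{c',\tilde{c}} > 0$, so any close $c'$ forces $d(c,\tilde{c}) \leq 16 r_c$; this upgrades the preceding inequality into the desired close-mass bound, and the pigeonhole step finishes the argument.

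The main obstacle is cleanly separating the mass that $c$ sends to other centers from the mass kept at $c$ itself. This is exactly what the per-$c'$ form of Lemma~\ref{lem:half_inv_assigned} permits: it recovers the initial mass of $c$ source-by-source, so after losing at most $m_c/8$ to Markov and at most $3/4$ to absorption at $c$, a constant fraction ($1/8$) of the initial mass must still be routed through close $c'$'s to other centers. The global bound $\sum_{\tilde{c}} y_c^{\tilde{c}} \geq 1$ alone would not be strong enough for this separation, since nothing in it prevents up to $3/4$ of the mass from being absorbed at $c$ while the remaining $1/4$ is routed through arbitrarily distant $c'$'s.
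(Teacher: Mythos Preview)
Your proof is correct and follows essentially the same approach as the paper: both use Markov's inequality on the initial assignments of $c$ to collect at least $7/8$ of the mass within distance $8r_c$, subtract the at most $y_c^c < 3/4$ absorbed at $c$ itself, apply Corollary~\ref{cor:center_shift} to bound $d(c,\tilde{c}) \leq 16 r_c$, and finish by pigeonhole over $|C| \leq 2k$. You merely make explicit the invocation of Lemma~\ref{lem:half_inv_assigned} (to recover $\sum_{\tilde{c}} y_c^{c',\tilde{c}} \geq \tilde{x}_c^{c'}$ at termination), which the paper leaves implicit.
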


\begin{proof}
By Markov's inequality it holds that initially at least $\frac{7}{8}$ of $c$ has been assigned to openings within distance $8 r_c$. Since $y_c^c < \frac{3}{4}$ we thus get:
\begin{alignat*}{2}
    &\sum\limits_{c' \in V, d(c,c')\leq 8 r_c} \sum\limits_{\tilde{c} \in C \setminus \{c\}} y_{c}^{c',\tilde{c}} &&\quad> \frac{7}{8} - \frac{3}{4} = \frac{1}{8}\\
    \stackrel{\text{Cor. }\ref{cor:center_shift}}{\Rightarrow}\quad &\sum\limits_{\tilde{c} \in C \setminus \{c\}: d(c,\tilde{c}) \leq 16 r_c} y_c^{\tilde{c}} &&\quad> \frac{1}{8}
\end{alignat*}

By combining this with the fact that there exist at most $2k$ centers we may use Markov's inequality again to conclude that there exists at least one of them such that $y_c^{\tilde{c}}\geq \frac{1}{16k}$ and $d(c,\tilde{c}) \leq 16 r_c$.
\end{proof}
Using this lemma we may conclude for any $c \in H$ that if we set 
\begin{equation*}
s(c) = \argmin_{\tilde{c} \in C\setminus \{c\}, x_c^{\tilde{c}} \geq \frac{1}{16k} } d(c,\tilde{c}),
\end{equation*}
then $R(c,s(c)) \leq 16r_c M_c$. 

Now for the centers in $O$ we would like to find a similar successor. However we cannot always require that the minimum cut between a center $c$ and a possible successor $\tilde{c}$ is already greater than $\frac{1}{16k}$. Thus we will need to bound the replacement using \ref{eq:replacement_hard}.

%
%

The problem with this bound is that if for a particular $c$ the value of $a_c$ is rather small then $c$ gets only assigned to other centers by a really small portion which can result in very large values of $\sum_{v \in V} y_v^{\tilde{c}} \cdot d(v, \tilde{c})\frac{1}{y_c^{\tilde{c}}}  \frac{1}{16k}$. Additionally these centers can also be quite far away if they resulted from assignments of $c$ in $\tilde{x}$ whose distance was way higher than the radius $r_c$. However, if all the centers in $O$ have very small $a_c$-values then $|O|$ is not much larger then $\sum_{c \in O} y_c^c$ which means that we would only need to add few nodes in $O$ to $C_{1/2}$ to add the other centers to $C_1$ which results in them not producing any additional cost at all. Thus we would have few centers with high costs and many centers with very low additional cost. To split the cost more evenly between the centers in $O$ we define for each $c \in O$ a potential $\Phi_c$:

\begin{equation*}
    \Phi_c = 3 M_c \cdot r_c + \frac{1}{16k} \cost(y)
\end{equation*}

As we will show later, the sum of all of these potentials lies only in $O(k\log n) \cdot \cost(\tilde{x})$. Additionally we will show that for any $c \in O$, the cost $R(c,\tilde{c})$ of the best reassignment to any other center $\tilde{c} \in C$ is upper bounded by $\frac{\Phi_c}{a_c}$. Using this we can bound the entire additional cost incurred by adding centers from $O$ to $C_{1/2}$ by $2$ times the sum of all potentials if we always add the center to $C_{1/2}$ that minimizes $\frac{\Phi_c}{a_c}$ among all remaining centers until $|C \setminus C_{1/2}| + \frac{1}{2}|C_{1/2}| \leq k$. Once the latter inequality is fulfilled we can simply add all the remaining centers to $C_1$.

However there is one more detail regarding the decision whether or not a center in $O$ gets added to $C_{1/2}$. As mentioned above, half of the centers contained in $C_{1/2}$ get also opened in their entirety while the others get closed in a later step. To do this we are going to consider the graph using the centers as nodes and having a directed edge from each center in $C_{1/2}$ to its successor. If this graph is bipartite we can simply open the smaller half of the bipartition. However initially the graph might still contain cycles of uneven length. So instead of directly requiring a bipartite graph we will settle for the property that any cycle only contains at most one node from $O$. Using this property we will later be able to modify the graph such that the cycles of uneven length get removed. Every time a node in $O$ gets added to $C_{1/2}$, one can follow the directed path of edges between centers and their successors until either a previous node gets encountered (i.e., there is a cycle) or the path ends. If the path ends one can check whether the final node $r$ was neither added to $C_{1/2}$ nor $C_1$ yet. If yes, it gets added to $C_1$ which will avoid cycles containing multiple nodes in $O$ since the centers in $C_1$ have no successors. Algorithm \ref{Alg_h_integral} gives a more formal description how the sets $C_1$ and $C_{1/2}$ can be obtained.

\begin{algorithm}
	\DontPrintSemicolon
	\SetKwInOut{Input}{input}
	\SetKwInOut{Output}{output}
    \caption{Splitting up the centers into $C_1$ and $C_{1/2}$}\label{Alg_h_integral}
	$H = \left\{c \in C \mid y_c^c < \frac{3}{4}\right\}$\;
    $O = C \setminus H$\;
    $\forall c \in C: a_c = 1 - y_c^c$\;
    $C_{1},C_{1/2} = \emptyset$\;
    \For{$c \in H$}{
        Set $s(c) = \argmin_{\tilde{c} \in C \setminus \{c\}: y_c^{\tilde{c}} \geq \frac{1}{16k}} d(c,\tilde{c})$\;
        $C_{1/2} = C_{1/2} \cup \{c\}$
    }
    Set $O' = O$\;
	\While {$|O'| + |C_1| + \frac{1}{2} |C_{1/2}| > k$}{\label{line:while}
        $c = \argmin_{\tilde{c} \in O'} \frac{\Phi_{\tilde{c}}}{a_{\tilde{c}}}$\;
        Set $s(c) = \argmin_{\tilde{c} \in C \setminus \{c\}} R(c, \tilde{c})$\;
        Add $c$ to $C_{1/2}$, remove it from $O'$\;
        Set $S = \left\{(\tilde{c},s(\tilde{c})) \mid \tilde{c} \in C_{1/2}\right\}$, $G_S=(C,S)$\;
        Check whether the single directed path from $c$ in $G_S$ ends in a node $r$ or in a cycle\;
        \label{line:if} \If{$r$ exists and $r \in O'$}{
            Add $r$ to $C_1$, remove it from $O'$\;
        }
    }
    $C_1 = C_1 \cup O'$\;
    \textbf{return} $C_1,C_{1/2}, s$\;
\end{algorithm}

To show the correctness of the algorithm, it is first important to prove that the algorithm actually terminates. This can be done by showing that the opening of the centers neither in $C_1$ nor $C_{1/2}$ plus the size of $C_1$ and half the size of $C_{1/2}$ is at most $k$. Since the algorithm removes in every step of the while loop at least one center from $O'$ this directly implies that the while loop only runs for at most $|O|$ iterations:

\begin{lemma}
\label{lem:inv_half_integral}
Whenever line \ref{line:while} of Algorithm \ref{Alg_h_integral} is reached it holds that
\begin{equation*}
    |C_1| + \frac{1}{2}|C_{1/2}| + \sum_{c \in O'} y_c^c \leq k.
\end{equation*}
\end{lemma}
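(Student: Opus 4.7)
\medskip
\noindent\textbf{Proof plan.} The plan is to prove this by induction on the number of iterations of the while loop, verifying the invariant at entry to line~\ref{line:while} each time. The key numerical ingredients are the three lower bounds on openings: Lemma~\ref{lem_half} gives $y_c^c \geq \tfrac{1}{2}$ for every $c\in C$, the definition $O=\{c\in C\mid a_c\leq \tfrac{1}{4}\}$ gives $y_c^c\geq \tfrac{3}{4}$ for $c\in O$, and Lemma~\ref{lem:half_inv_number} (together with Property~\eqref{half_inv_number}) gives $\sum_{c\in C} y_c^c \leq k$.

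\medskip
\noindent\textbf{Base case.} At the first entry to line~\ref{line:while}, we have $C_1=\emptyset$, $C_{1/2}=H$, and $O'=O$, so
\begin{equation*}
  |C_1| + \tfrac{1}{2}|C_{1/2}| + \sum_{c\in O'} y_c^c
  \;=\; \tfrac{1}{2}|H| + \sum_{c\in O} y_c^c
  \;\leq\; \sum_{c\in H} y_c^c + \sum_{c\in O} y_c^c
  \;=\; \sum_{c\in C} y_c^c \;\leq\; k,
\end{equation*}
where the first inequality uses $y_c^c\geq \tfrac{1}{2}$ for $c\in H$ and the last uses Property~\eqref{half_inv_number}. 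This is the base step.

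\medskip
\noindent\textbf{Inductive step.} In one iteration of the while loop, the center $c=\arg\min_{\tilde{c}\in O'} \Phi_{\tilde{c}}/a_{\tilde{c}}$ is removed from $O'$ and added to $C_{1/2}$, and optionally the endpoint $r$ of the directed path starting at $c$ in $G_S$ is removed from $O'$ and added to $C_1$. Because $c\in O$, moving $c$ changes the LHS by $\tfrac{1}{2} - y_c^c \leq \tfrac{1}{2} - \tfrac{3}{4} = -\tfrac{1}{4}$. If the optional second move occurs, then $r\in O'\subseteq O$ (the if-condition in line~\ref{line:if} enforces $r\in O'$), so moving $r$ changes the LHS by $1 - y_r^r \leq 1 - \tfrac{3}{4} = +\tfrac{1}{4}$. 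Thus the net change per iteration is at most $-\tfrac{1}{4}+\tfrac{1}{4}=0$, and the invariant is preserved.

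\medskip
\noindent\textbf{Expected difficulty.} The statement itself is essentially a bookkeeping lemma, so I expect no real obstacle beyond being careful about two points: (i) verifying that each center that is moved out of $O'$ truly satisfies $y_c^c\geq \tfrac{3}{4}$, which follows immediately from the membership criterion of $O$ and is inherited through the loop since $O'\subseteq O$ throughout; and (ii) confirming that the only ways $C_1$ and $C_{1/2}$ grow inside the loop are the two moves described above, so no other changes to the LHS need to be tracked. Both points are immediate from reading Algorithm~\ref{Alg_h_integral}, making the proof a short inductive calculation.
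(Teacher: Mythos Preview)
Your proof is correct and follows essentially the same inductive approach as the paper: the base case is identical, and in the inductive step the paper bundles the two moves together (total contribution to $|C_1|+\tfrac12|C_{1/2}|$ of $\tfrac32$ versus a drop of at least $\tfrac32$ in $\sum_{c\in O'} y_c^c$), whereas you account for them separately ($\le -\tfrac14$ and $\le +\tfrac14$), which is the same arithmetic.
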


\begin{proof}
We will show the claim via induction. The first time the condition of the while loop gets checked, it holds that
\begin{align*}
    |C_1| + \frac{1}{2}|C_{1/2}| + \sum_{c \in O'} y_c^c &= |\emptyset| + \sum_{c\in H} \frac{1}{2} + \sum_{c \in O} y_c^c\\
    &\leq \sum_{c\in H} y_c^c + \sum_{c \in O} y_c^c\\
    &= \sum_{c \in C} y_c^c\\
    &\leq k.
\end{align*}

Let us now assume that the claim holds before the $t$-th iteration of the loop. Let $c_t$ be the center considered during the $t$-th loop. If we only remove $c_t$ from $O'$ and add it to $C_{1/2}$ it is clear that this only decreases the sum since $y_c^c \geq \frac{3}{4}$ which means that the claim remains true. If we also remove a node $r$ from $O'$ and add it to $C_1$ we know that the contribution of $r$ to $|C_1|+\frac{1}{2}|C_{1/2}|$ becomes $1$, while the contribution of $c_t$ becomes $\frac{1}{2}$. Thus after iteration $t$ the total contribution of these two nodes to the sum is exactly $\frac{3}{2}$. However since $c_t,r \in O'$ we know that also $c_t,r \in O$ which means that $y_{c_t}^{c_t} + y_r^r \geq \frac{3}{4} + \frac{3}{4} = \frac{3}{2}$. Thus removing $c_t$ and $r$ from $O'$ simultaneously reduces the value of $\sum_{c \in O'} y_c^c$ by at least $\frac{3}{2}$. This implies that the entire value of $|C_1| + \frac{1}{2}|C_{1/2}| + \sum_{c \in O'} y_c^c$ does not increase and thus stays below $k$, which in turn proves that the claim is also fulfilled at the beginning of the $(t+1)$-th iteration of the loop.
\end{proof}

\begin{corollary}
    Algorithm \ref{Alg_h_integral} terminates in polynomial time.
\end{corollary}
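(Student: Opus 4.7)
The plan is to argue termination by counting loop iterations and bounding the per-iteration work. The only nontrivial loop is the while-loop starting at line~\ref{line:while}; the for-loop over $H$ clearly runs at most $|H| \le |C| \le 2k$ times, so I will focus on the while-loop.

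First I would observe that every iteration of the while-loop removes at least the chosen center $c$ from $O'$ (and possibly one further center $r$, but never adds anything back to $O'$). Since $|O'| \le |O| \le |C| \le 2k$ at the start, the set $O'$ can only shrink, and after at most $|O|$ iterations it is empty. At that point the loop condition reduces to $|C_1| + \tfrac{1}{2}|C_{1/2}| > k$, which by Lemma~\ref{lem:inv_half_integral} is false (the lemma gives $|C_1| + \tfrac{1}{2}|C_{1/2}| + \sum_{c \in O'} y_c^c \le k$, and with $O' = \emptyset$ the third summand vanishes). Hence the loop terminates after at most $2k$ iterations.

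Next I would verify that the work per iteration is polynomial. Computing $\argmin_{\tilde c \in O'} \Phi_{\tilde c}/a_{\tilde c}$ requires evaluating $\Phi_{\tilde c}$ and $a_{\tilde c}$ for each $\tilde c \in O'$, which reduces to sums of at most $|V|$ terms using the already-computed matrix $y$. Computing $\argmin_{\tilde c \in C\setminus\{c\}} R(c,\tilde c)$ requires, for each candidate $\tilde c$, evaluating the closed-form replacement bounds from the definition of $R$ (or its strengthened form~\eqref{eq:replacement_hard}); these are again sums over $V$. Maintaining the graph $G_S$ and following the unique directed path out of $c$ takes at most $|C_{1/2}| \le 2k$ edge traversals, and detecting whether this path terminates at a node $r$ or returns to a previously visited vertex is a standard traversal. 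All of these operations are polynomial in $n$ and $k$.

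No step here is particularly delicate; the only subtlety is making sure that $O'$ is indeed monotonically shrinking (so the iteration counter is bounded by $|O|$), which is immediate from inspection of the pseudocode. Combining the $O(k)$ bound on the number of iterations with polynomial per-iteration work yields the claimed polynomial running time.
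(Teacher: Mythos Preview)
Your proposal is correct and follows essentially the same approach as the paper: the paper argues (in the paragraph preceding Lemma~\ref{lem:inv_half_integral}) that each iteration of the while-loop removes at least one center from $O'$, so the loop runs at most $|O|$ times, and Lemma~\ref{lem:inv_half_integral} guarantees that once $O'=\emptyset$ the loop condition fails. Your additional discussion of per-iteration work is a reasonable elaboration that the paper leaves implicit.
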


To bound the sum of the cost of replacing all nodes in $O \cap C_{1/2}$ by their respective successors, we first need to bound the sum of the potentials of all nodes in $O$:

\begin{lemma}
\label{lem:potential_bound}
It holds that
\begin{equation*}
    \sum_{c \in O} \Phi_c \leq 13.5k \cdot \cost(\tilde{x}) + \frac{1}{8} \cost(y).
\end{equation*}
\end{lemma}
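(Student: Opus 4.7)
The plan is to split the potential into its two additive pieces, $\Phi_c = 3 M_c r_c + \tfrac{1}{16k}\cost(y)$, and bound each one separately by an already established quantity. The first piece corresponds exactly to the kind of sum controlled by Lemma \ref{lem:shift_by_radius}, while the second piece is a constant repeated $|O|$ times and therefore only needs a bound on $|O|$.

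For the first piece, I would write
\[
    3\sum_{c \in O} M_c \, r_c \;=\; 3\sum_{c \in O} \sum_{v \in V} y_v^c \, r_c \;\leq\; 3\sum_{c \in C} \sum_{v \in V} y_v^c \, r_c,
\]
using $O \subseteq C$ and non-negativity of the $y_v^c$. Lemma \ref{lem:shift_by_radius} directly bounds the right-hand side by $3 \cdot 4.5 k \cdot \cost(\tilde{x}) = 13.5 k \cdot \cost(\tilde{x})$, which is exactly the first term in the claimed inequality.

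For the second piece, I need $|O|/(16k) \leq 1/8$, i.e.\ $|O| \leq 2k$. This follows because by Lemma \ref{lem_half} every $c \in C$ satisfies $y_c^c \geq 1/2$, while Property \eqref{half_inv_number} (preserved by Lemma \ref{lem:half_inv_number}) gives $\sum_{c \in C} y_c^c \leq k$; hence $|C| \leq 2k$ and in particular $|O| \leq 2k$. Consequently
\[
    \sum_{c \in O} \frac{1}{16k}\cost(y) \;=\; \frac{|O|}{16k}\cost(y) \;\leq\; \frac{1}{8}\cost(y).
\]

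Summing the two bounds yields $\sum_{c \in O} \Phi_c \leq 13.5 k \cdot \cost(\tilde{x}) + \tfrac{1}{8}\cost(y)$, as desired. There is no real obstacle here beyond assembling the pieces correctly; the only subtlety worth flagging is that the bound on $|O|$ relies on every center being at least half opened, which is precisely the guarantee carried over from Algorithm \ref{Alg_half} via Lemma \ref{lem_half}.
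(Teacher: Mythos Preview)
Your proof is correct and follows essentially the same approach as the paper: split $\Phi_c$ into its two summands, bound the first via $O \subseteq C$ and Lemma~\ref{lem:shift_by_radius}, and bound the second via $|O| \leq |C| \leq 2k$. You even spell out the justification for $|O|\le 2k$ (half-opening plus the opening budget) more explicitly than the paper does.
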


\begin{proof}
\begin{align*}
    \sum_{c \in O} \Phi_c &\stackrel{\phantom{\text{Lem. }\ref{lem:decrease}}}{=} \sum_{c \in O} \left(3 M_c \cdot r_c + \frac{1}{16k} \cost(y)\right)\\
    &\stackrel{\phantom{\text{Lem. }\ref{lem:decrease}}}{\leq} 3 \left( \sum_{c \in C} M_c \cdot r_c \right) + \frac{2k}{16k}\cost(y)\\
    &\stackrel{\phantom{\text{Lem. }\ref{lem:decrease}}}{=} 3 \left(\sum_{c \in C} \sum_{v \in V} y_v^c r_c \right) + \frac{1}{8} \cost(y)\\
    &\stackrel{\text{Lem. }\ref{lem:shift_by_radius}}{\leq} 13.5k \cdot \cost(\tilde{x}) + \frac{1}{8} \cost(y)
\end{align*}
\end{proof}

Next we will bound for any $c \in O$ the minimum replacement cost by any other center $\tilde{c} \in C \setminus \{c\}$, i.e. $\min_{\tilde{c} \in C \setminus \{c\}} R(c,\tilde{c})$, by $\frac{\Phi_c}{a_c}$. To do this, we need to identify assignments of $c$ to other centers of a total amount of $a_c$ such that their entire (fractional) length is at most $3 r_c$. Intuitively these assignments result from the fact that if for a node $c' \in V$ the initial assignment $\tilde{x}_c^{c'}$ of $c$ to $c'$ is larger than the value $y_c^{c',c}$ of opening shifted from $c'$ to $c$ at the end of Algorithm \ref{Alg_half} then an opening of value at least $\tilde{x}_c^{c'} - y_c^{c',c}$  has been shifted from $c'$ to other centers. The length of the corresponding assignments of $c$ to these centers can then be bounded by the previous corollary. In the following lemma these assignments are denoted by $f_c^{\tilde{c}}$ for any $\tilde{c} \in C \setminus \{c\}$.

\begin{lemma}
\label{lem:ugly}
For any $c \in O$ there exist values $(f_c^{\tilde{c}})_{\tilde{c} \in C \setminus \{c\}}$ such that:
\begin{enumerate}
\item $\forall \tilde{c} \in C \setminus \{c\}: f_c^{\tilde{c}} \leq y_c^{\tilde{c}}$,
\item $\sum\limits_{\tilde{c} \in C \setminus \{c\}} f_c^{\tilde{c}} = a_c$,
\item $\sum\limits_{\tilde{c} \in C \setminus \{c\}} f_c^{\tilde{c}} \cdot d(c,\tilde{c}) \leq 3 r_c$.
\end{enumerate}
\end{lemma}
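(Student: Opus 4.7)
The plan is to construct the $f_c^{\tilde c}$ in two stages. First, I choose a fractional ``budget'' $\gamma_{c'}$ at each $c' \in V$ that bounds how much of $c$'s original assignment to $c'$ we draw from; second, for each $c'$ I split $\gamma_{c'}$ across the centers $\tilde c\neq c$ in proportion to how much opening was shifted from $c'$ to $\tilde c$. The resulting $f_c^{\tilde c}$ will then inherit both the capacity bound $f_c^{\tilde c} \leq y_c^{\tilde c}$ and a per-$c'$ bound on its contribution to $\sum f_c^{\tilde c} d(c,\tilde c)$.

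For the first stage I set $\alpha_{c'} := \max\bigl(0,\,\tilde x_c^{c'} - y_c^{c',c}\bigr)$, the portion of $c$'s original mass at $c'$ that the algorithm did \emph{not} shift back to $c$. Lemma~\ref{lem:half_inv_assigned}, applied at termination where all $x$-values vanish, gives $\sum_{\tilde c \in C\setminus\{c\}} y_c^{c',\tilde c} \geq \alpha_{c'}$, ensuring capacity on the other centers. Summing over $c'$ and using $\sum_{c'} y_c^{c',c} = y_c^c = 1-a_c$ together with $m_c \geq 1$ yields $\sum_{c'} \alpha_{c'} \geq m_c - 1 + a_c \geq a_c$. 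Setting $\gamma_{c'} := \alpha_{c'} \cdot a_c / \sum_{c''} \alpha_{c''}$, splitting proportionally via $\hat f_c^{c',\tilde c} := \gamma_{c'} \cdot y_c^{c',\tilde c}/\sum_{\hat c\neq c} y_c^{c',\hat c}$, and defining $f_c^{\tilde c} := \sum_{c'} \hat f_c^{c',\tilde c}$ then immediately yields properties~(1) and~(2), since $\gamma_{c'} \leq \alpha_{c'} \leq \sum_{\hat c\neq c} y_c^{c',\hat c}$.

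For~(3) I invoke Corollary~\ref{cor:center_shift}: whenever $y_c^{c',\tilde c} > 0$ with $\tilde c \neq c$, one has $d(c,\tilde c) \leq d(c,c') + \max(4r_c, d(c,c')) \leq 2\,d(c,c') + 4r_c$. Substituting gives
\begin{equation*}
\sum_{\tilde c\neq c} f_c^{\tilde c}\, d(c,\tilde c) \leq \sum_{c'} \gamma_{c'}\bigl(2d(c,c') + 4r_c\bigr) = 2\sum_{c'} \gamma_{c'}\, d(c,c') + 4 r_c a_c.
\end{equation*}
A scaling estimate controls the first summand: $\sum_{c'} \gamma_{c'}\, d(c,c') = \tfrac{a_c}{\sum_{c''}\alpha_{c''}} \sum_{c'}\alpha_{c'} d(c,c') \leq \tfrac{a_c}{m_c - 1 + a_c}\,p_c = \tfrac{a_c m_c}{m_c - 1 + a_c}\, r_c \leq r_c$, where the last inequality follows from the elementary $a_c(m_c - 1) \leq m_c - 1$ for $m_c \geq 1$ and $a_c \leq 1$. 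Since $c \in O$ gives $a_c \leq 1/4$, the second summand is at most $r_c$, yielding the desired bound $2r_c + r_c = 3r_c$.

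The main delicacy is bridging the two different ``diagonal'' quantities the algorithm tracks: the original LP assignment $\tilde x_c^{c'}$ (which controls $p_c$ and hence the scaling budget) and the post-algorithm mass $y_c^c$ (into which $a_c$ enters). Introducing $\alpha_{c'}$ and using Lemma~\ref{lem:half_inv_assigned} to convert a lower bound on $\sum_{\tilde c} y_c^{c',\tilde c}$ into usable capacity on the $\tilde c$-side is the conceptual glue; once that is in place, the rest is an application of Corollary~\ref{cor:center_shift} together with the scaling inequality above.
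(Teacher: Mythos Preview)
Your proof is correct. The construction and the three verifications all go through; the only tacit convention is that $\hat f_c^{c',\tilde c}=0$ whenever the denominator $\sum_{\hat c\neq c} y_c^{c',\hat c}$ vanishes, which is harmless since in that case $\alpha_{c'}=0$ and hence $\gamma_{c'}=0$ anyway.

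Your route differs from the paper's in a useful way. The paper introduces auxiliary quantities $\delta_c^{c',\tilde c}$, the amount by which $x_c^{c'}$ decreased due to shifts from $c'$ to $\tilde c$, and then sets $f_c^{\tilde c}=\frac{a_c}{m_c-\delta_c^c}\sum_{c'}\delta_c^{c',\tilde c}$. Establishing property~(1) then requires going back into the mechanics of Algorithm~\ref{Alg_half} to argue that $y_c^{c',\tilde c}\geq\delta_c^{c',\tilde c}$ via the upper bounds $u_c^{c',\tilde c}$. You instead work directly with the final values $y_c^{c',\tilde c}$: the per-$c'$ budget $\alpha_{c'}$ together with Lemma~\ref{lem:half_inv_assigned} at termination already gives enough capacity on the $\tilde c$-side, and the proportional split makes property~(1) immediate. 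Both proofs finish with Corollary~\ref{cor:center_shift} and essentially the same scaling estimate. What your approach buys is that it relies only on lemmas already stated and proved, without revisiting the internals of the shift procedure; what the paper's approach buys is a single global scaling factor rather than a per-$c'$ split.
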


\begin{proof}

One might observe that at the beginning of Algorithm \ref{Alg_half} the total assignment $\sum_{c' \in V} x_c^{c'}$ of $c$ by $x$ was equal to $m_c = \sum_{c' \in V} \tilde{x}_c^{c'}$ and that it gets decreased to $0$ during its execution. Additionally for any $c' \in V$ it holds that for any shift from $c'$ to $c$ the decrease of $x_c^{c'}$ during this shift is upper bounded by the increase of $y_c^{c',c}$.
To see this, consider a shift from $c'$ to $c$ over $v$, where potentially $v=c$.
By observation \ref{obs:value_of_x} $x_v^{c'} = \Delta^{\tilde{x}} (D_{c'},c,c')$. Thus $x_v^{c'}$ tells us how much $\sep^{\tilde{x}}(D_{c'},c')$ increases when we add $v$ to $D_{c'}$. Since at the same time the value $\sep^{\tilde{x}}(D_{c'} \cup\{c\},c')$ does not decrease when we add $v$ to $D_{c'}$ and $x_c^{c'}$
is always equal to $\Delta^{\tilde{x}}(D_{c'},c,c')$ we can conclude that $x_c^{c'}$ decreases by at most $x_v^{c'}$ during this shift. Since at the same time the algorithm increases $y_c^{c',c}$ by $x_v^{c'}$ during this we get that the decrease of $x_c^{c'}$ is upper bounded by this increase.

The idea is now basically to set for any $\tilde{c} \in C \setminus \{c\}$ the value of $f_c^{\tilde{c}}$ to the amount by which the assignments of $c$ get decreased by shifts to $\tilde{c}$ and bound the respective distances by Corollary \ref{cor:center_shift}. Additionally since the total decrease can be larger than $a_c$ the resulting values need to be scaled down accordingly.

Let for all $\tilde{c} \in C $ the value $\delta_c^{c',\tilde{c}}$ be defined as the amount by which $x_c^{c'}$ is decreased by all shifts from $c'$ to $\tilde{c}$ during the algorithm. Then by the previous argument $\delta_c^{c',c}$ is upper bounded by $\tilde{x}_c^{c'}$ minus the final value of $ y_c^{c',c}$ while simultaneously $\sum_{\tilde{c} \in C} \delta_c^{c',\tilde{c}} = \tilde{x}_c^{c'}$. Let $\tilde{c}$ be any other center of $C \setminus \{c\}$ such that $\delta_c^{c',\tilde{c}} > 0$. It is easy to observe that $u_c^{c',\tilde{c}} \geq \delta_{c}^{c', \tilde{c}}$ since $u_c^{c',\tilde{c}} = x_c^{c'}$ before the first shift from $c'$ to $\tilde{c}$. Furthermore if a shift to $\tilde{c}$ reduces the value of $x_c^{c'}$ by some value it increases $y_{\tilde{c}}^{c',\tilde{c}}$ by at least this value  by a similar argument as above. Thus after the last shift from $c'$ to $\tilde{c}$ also $y_{\tilde{c}}^{c',\tilde{c}} \geq \delta_{c}^{c',\tilde{c}}$ and thus $y_c^{c',\tilde{c}} = \min(u_c^{c',\tilde{c}} ,y_{\tilde{c}}^{c',\tilde{c}}) \geq \delta_{c}^{c', \tilde{c}}$. For any $\tilde{c} \in C$ let $\delta_c^{\tilde{c}} = \sum_{c' \in V}\delta_{c}^{c', \tilde{c}}$ and if $\tilde{c} \neq c$ set $f_c^{\tilde{c}} = \frac{a_c}{m_c - \delta_c^c} \delta_{c}^{\tilde{c}}$. Then 1. is fulfilled as $y_c^{\tilde{c}} \geq \delta_c^{\tilde{c}}$ and $a_c = 1 - y_c^c\leq m_c -y_c^c \leq m_c - \delta_c^c$.


Additionally also 2. is satisfied:

\begin{align*}
    \sum_{\tilde{c} \in C \setminus \{c\}} f_c^{\tilde{c}} &= \frac{a_c}{m_c - \delta_c^c}\sum_{\tilde{c} \in C \setminus \{c\}} \delta_{c}^{\tilde{c}}\\
    &= \frac{a_c}{m_c - \delta_c^c} (m_c - \delta_c^c)\\
    &= a_c
\end{align*}

Using Corollary \ref{cor:center_shift} we obtain that $d(c,\tilde{c}) \leq 2 d(c,c') + 4 r_c$ for any $\tilde{c} \in C, c' \in V$ with $ \delta_c^{c',\tilde{c}} >0$. Thus:
\begin{align*}
    \sum_{\tilde{c} \in C} f_c^{\tilde{c}} d(c,\tilde{c}) &\stackrel{\phantom{\text{Cor. }\ref{cor:center_shift}}}{=} \frac{a_c}{m_c - \delta_c^c} \sum_{c' \in V} \sum_{\tilde{c} \in C \setminus \{c\}} \delta_c^{c',\tilde{c}} d(c,\tilde{c})\\
    &\stackrel{\text{Cor. }\ref{cor:center_shift}}{\leq} \frac{a_c}{m_c - \delta_c^c} \sum_{c' \in V} \left( \sum_{\tilde{c} \in C \setminus \{c\}} \delta_c^{c',\tilde{c}}\right) \left(2 d(c,c') + 4 r_c\right)\\
    &\stackrel{\phantom{\text{Cor. }\ref{cor:center_shift}}}{=} \frac{a_c}{m_c - \delta_c^c} \sum_{c' \in V} \left(\tilde{x}_{c}^{c'} - \delta_c^{c',c}\right) \left(2 d(c,c') + 4 r_c\right)\\
    &\stackrel{\phantom{\text{Cor. }\ref{cor:center_shift}}}{\leq} \frac{1- y_c^c}{m_c - \delta_c^c} \left(\sum_{c' \in V} \tilde{x}_c^{c'} 2 d(c,c')\right)  + \frac{a_c}{m_c - \delta_c^c} (m_c - \delta_c^c) 4 r_c\\
    &\stackrel{\phantom{\text{Cor. }\ref{cor:center_shift}}}{\leq}  \frac{1- y_c^c + \delta_c^c}{m_c} \left(\sum_{c' \in V} \tilde{x}_c^{c'}  2 d(c,c')\right) + a_c 4 r_c\\
    &\stackrel{\phantom{\text{Cor. }\ref{cor:center_shift}}}{\leq} 2 \frac{1}{m_c} \left(\sum_{c' \in V} \tilde{x}_c^{c'}  d(c,c')\right) + a_c 4 r_c\\
    &\stackrel{\phantom{\text{Cor. }\ref{cor:center_shift}}}{\leq} 3 r_c,
\end{align*}
where in the second last inequality, we used $y_c^c \geq \delta_c^c$ and in last inequality we used the definition of $r_c$ and that $a_c \leq \frac{1}{4}$ since $c \in O$.
\end{proof}

\begin{lemma}\label{lem:replacement-potential-bound}
    For any $c \in O$ it holds that
    \begin{equation*}
        \min_{\tilde{c} \in C  \setminus \{c\}} R(c,\tilde{c}) \leq \frac{\Phi_c}{a_c}.
    \end{equation*}
\end{lemma}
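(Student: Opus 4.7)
The plan is to use an averaging argument over the fractional assignment $(f_c^{\tilde{c}})_{\tilde{c} \in C \setminus \{c\}}$ produced by Lemma \ref{lem:ugly}. Since $\sum_{\tilde{c} \neq c} f_c^{\tilde{c}} = a_c$, we have the standard inequality
\[
    \min_{\tilde{c} \in C \setminus \{c\}} R(c,\tilde{c}) \;\leq\; \frac{1}{a_c} \sum_{\tilde{c} \in C \setminus \{c\}} f_c^{\tilde{c}} \cdot R(c,\tilde{c}),
\]
so it suffices to bound the right-hand sum by $\Phi_c = 3 M_c r_c + \frac{1}{16k}\cost(y)$.

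Next I would split the sum according to the two cases in the definition of $R(c,\tilde{c})$. For centers $\tilde{c}$ with $y_c^{\tilde{c}} \geq \frac{1}{16k}$, we use $R(c,\tilde{c}) \leq M_c \cdot d(c,\tilde{c})$. For centers $\tilde{c}$ with $y_c^{\tilde{c}} < \frac{1}{16k}$, we use the bound from \eqref{eq:replacement_hard}, which gives an extra additive term $\sum_{v \in V} y_v^{\tilde{c}} d(v,\tilde{c}) \cdot \frac{1}{y_c^{\tilde{c}}} \cdot \frac{1}{16k}$. The crucial observation is that property 1 of Lemma \ref{lem:ugly} gives $f_c^{\tilde{c}} \leq y_c^{\tilde{c}}$, so when we multiply the additive term by $f_c^{\tilde{c}}$, the factor $\frac{1}{y_c^{\tilde{c}}}$ is absorbed and the term collapses to $\frac{1}{16k} \sum_{v \in V} y_v^{\tilde{c}} d(v,\tilde{c})$.

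Putting both cases together, the sum is bounded by
\[
    \sum_{\tilde{c} \neq c} f_c^{\tilde{c}} \cdot M_c \cdot d(c,\tilde{c}) \;+\; \frac{1}{16k} \sum_{\tilde{c} \neq c} \sum_{v \in V} y_v^{\tilde{c}} d(v,\tilde{c}).
\]
The first sum is bounded by $M_c \cdot 3 r_c$ by property 3 of Lemma \ref{lem:ugly}. The second sum is bounded by $\frac{1}{16k}\cost(y)$ since $\sum_{\tilde{c} \in C}\sum_{v \in V} y_v^{\tilde{c}} d(v,\tilde{c}) = \cost(y)$. Combining, the right-hand side is at most $\frac{1}{a_c}(3 M_c r_c + \frac{1}{16k}\cost(y)) = \frac{\Phi_c}{a_c}$, as required.

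The proof is essentially mechanical given Lemma \ref{lem:ugly} and the replacement-cost bounds, so I do not anticipate a real obstacle. The only subtle point is recognizing that property 1 of Lemma \ref{lem:ugly} ($f_c^{\tilde{c}} \leq y_c^{\tilde{c}}$) is precisely what is needed to tame the potentially large factor $1/y_c^{\tilde{c}}$ appearing in the hard case of the replacement cost; this is exactly why the definition of $\Phi_c$ contains the additive $\frac{1}{16k}\cost(y)$ term rather than something center-specific.
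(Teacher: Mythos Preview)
Your proposal is correct and follows essentially the same approach as the paper: both use the averaging inequality over the weights $(f_c^{\tilde{c}})$ from Lemma \ref{lem:ugly}, exploit property~1 ($f_c^{\tilde{c}} \le y_c^{\tilde{c}}$) to cancel the factor $1/y_c^{\tilde{c}}$ in the hard case of the replacement bound, and then apply property~3 together with the definition of $\cost(y)$ to arrive at $\Phi_c/a_c$. The only cosmetic difference is that the paper applies the bound \eqref{eq:replacement_hard} uniformly to all $\tilde{c}$ with $f_c^{\tilde{c}}>0$ (the extra term is nonnegative anyway), whereas you keep the two cases separate.
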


\begin{proof}
    For any $c \in O$ let the values of $f_c^{\tilde{c}}$ for any $\tilde{c} \in C \setminus \{c\}$ be defined as in Lemma \ref{lem:ugly}. Let $A \subseteq C \setminus \{c\}$ be the set of centers $\tilde{c}$ such that $f_c^{\tilde{c}} > 0$. Then we can bound $R(c, \tilde{c})$ for any $\tilde{c} \in A$ as follows:
    \begin{align*}
        R(c,\tilde{c}) &\leq \sum_{v \in V} y_v^c \cdot (d(v,\tilde{c}) - d(v,c)) + \sum_{v \in V} y_v^{\tilde{c}} \cdot d(v, \tilde{c}) \frac{1}{y_c^{\tilde{c}}} \frac{1}{16k}\\
        &\leq M_c \cdot d(c,\tilde{c}) + \sum_{v \in V} y_v^{\tilde{c}} \cdot d(v, \tilde{c})\frac{1}{f_c^{\tilde{c}}} \frac{1}{16k}
    \end{align*}
    Let now $c^*$ be the center in $C \setminus \{c\}$ minimizing $R(c,c^*)$. It holds that:
    \begin{align*}
        R(c,c^*) &\stackrel{\phantom{\text{Lem. }\ref{lem:decrease}}}{\leq} \min_{\tilde{c} \in A} M_c \cdot d(c,\tilde{c}) + \sum_{v \in V} y_v^{\tilde{c}} \cdot d(v, \tilde{c}) \frac{1}{f_c^{\tilde{c}}} \frac{1}{16k}\\
        &\stackrel{\text{Lem. }\ref{lem:ugly}}{\leq} \frac{1}{a_c} \sum_{\tilde{c} \in A} f_c^{\tilde{c}} \left( M_c \cdot d(c,\tilde{c}) + \sum_{v \in V} y_v^{\tilde{c}} \cdot d(v, \tilde{c}) \frac{1}{f_c^{\tilde{c}}} \frac{1}{16k} \right)\\
        &\stackrel{\phantom{\text{Lem. }\ref{lem:decrease}}}{=} \frac{1}{a_c} \left( M_c \sum_{\tilde{c} \in A} f_c^{\tilde{c}} d(c,\tilde{c})  + \frac{1}{16k} \sum_{\tilde{c} \in A} y_v^{\tilde{c}} d(v, \tilde{c})\right)\\
        &\stackrel{\text{Lem. }\ref{lem:ugly}}{\leq} \frac{1}{a_c} \left(  M_c \cdot 3 r_c + \frac{1}{16k} \cost(y)\right)\\
        &\stackrel{\phantom{\text{Lem. }\ref{lem:decrease}}}{=} \frac{\Phi_c}{a_c}
    \end{align*}
    which proves the lemma.
\end{proof}

Using this we can bound the replacement cost of the centers in $O \cap C_{1/2}$ by twice the amount of the total potential:

\begin{lemma}
At the end of Algorithm \ref{Alg_h_integral} it holds that:
    \begin{equation*}
        \sum_{c \in \left( C_{1/2} \cap O \right)} R(c,s(c)) \leq 2 \sum_{c \in O} \Phi_c
    \end{equation*}
\end{lemma}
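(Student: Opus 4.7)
The plan is to combine the bound of Lemma \ref{lem:replacement-potential-bound} on $R(c, s(c))$ with the greedy choice rule of Algorithm \ref{Alg_h_integral} and the invariant of Lemma \ref{lem:inv_half_integral}. I would first index the iterations of the while loop by $t = 1,\ldots,T$, writing $c_t$ for the center picked in line~10 of the algorithm and, in those iterations that trigger case~(b), $r_t$ for the node added to $C_1$ in line~16. By Lemma \ref{lem:replacement-potential-bound} and the choice $s(c_t) = \argmin_{\tilde c \in C\setminus\{c_t\}} R(c_t,\tilde c)$ in line~11, the per-iteration bound $R(c_t, s(c_t)) \leq \Phi_{c_t}/a_{c_t}$ holds. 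Moreover, since $c_t = \argmin_{c \in O'_t} \Phi_c/a_c$, we have $\Phi_{c_t}/a_{c_t} \leq \Phi_c/a_c$ for every $c \in O'_t$, and in particular for $r_t$ in case~(b), which rearranges to $a_{r_t}\Phi_{c_t} \leq a_{c_t}\Phi_{r_t}$.

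Next, I would charge the replacement cost of iteration $t$ to the elements of $O$ that this iteration removes from $O'$. Multiplying the per-iteration bound by $a_{c_t}$ yields $a_{c_t} R(c_t, s(c_t)) \leq \Phi_{c_t}$, and in case~(b), multiplying by $a_{r_t}$ and applying the greedy inequality yields $a_{r_t} R(c_t, s(c_t)) \leq \Phi_{r_t}$. Summing these across all iterations, and observing that the sets $\{c_t\}_{t=1}^T$ and $\{r_t : t \text{ case (b)}\}$ are disjoint subsets of $O$, produces the weighted bound
\begin{equation*}
\sum_{t=1}^T \bigl(a_{c_t} + [\text{case (b)}]\, a_{r_t}\bigr)\, R(c_t, s(c_t)) \;\leq\; \sum_{c \in O} \Phi_c.
\end{equation*}

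The main obstacle is upgrading this weighted bound into the unweighted bound claimed by the lemma. Because $a_c \leq \tfrac14$ for each $c \in O$, the coefficients on the left are at most $\tfrac12$, so a purely pointwise argument cannot produce the factor of $2$. To close the gap I would combine Lemma \ref{lem:inv_half_integral} with the loop-termination condition to amortise across iterations: each iteration decreases $|O'|+|C_1|+\tfrac12|C_{1/2}|$ by exactly $\tfrac12$, so $T \leq 2\bigl(|O|+\tfrac12|H|-k\bigr)$, and the invariant forces the slack $\sum_{c \in O'_t} a_c \geq |O'_t|+|C_1|_t+\tfrac12|C_{1/2}|_t - k$ to remain positive throughout. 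Using additionally that the picked ratios $\Phi_{c_t}/a_{c_t}$ are non-decreasing in $t$ (since greedy always removes the current minimum), one obtains an amortised charging in which each unit of $\Phi_c$ is charged at most twice, giving the claimed bound $\sum_{t=1}^T R(c_t, s(c_t)) \leq 2\sum_{c\in O}\Phi_c$. The tight instance in which $a_c \equiv \tfrac14$ and roughly half of $O$ is picked shows that the factor $2$ is best possible for this style of analysis.
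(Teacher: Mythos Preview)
Your setup is sound and matches the paper: you correctly obtain $R(c_t,s(c_t))\le \Phi_{c_t}/a_{c_t}$, correctly exploit the greedy rule to get $\Phi_{c_t}/a_{c_t}\le \Phi_c/a_c$ for every $c\in O'_t$, and your weighted bound $\sum_t(a_{c_t}+[\text{case (b)}]a_{r_t})\,R(c_t,s(c_t))\le\sum_{c\in O}\Phi_c$ is valid. The observations about the per-iteration decrease, the invariant on $\sum_{c\in O'_t}a_c$, and the monotonicity of the picked ratios are also all correct.

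The gap is the final step. Your charging to $\{c_t,r_t\}$ only supplies a coefficient of $a_{c_t}+a_{r_t}\le\tfrac12$ (and merely $a_{c_t}\le\tfrac14$ in case~(a)), whereas you need coefficient $1$ on each $R(c_t,s(c_t))$; the sentence ``one obtains an amortised charging in which each unit of $\Phi_c$ is charged at most twice'' does not explain where the missing $\ge\tfrac12$ comes from. In particular, feeding your three ingredients into the naive bound $\rho_t\le \bigl(\sum_{c\in O'_t}\Phi_c\bigr)/\bigl(\sum_{c\in O'_t}a_c\bigr)$ and summing only yields a harmonic factor, not~$2$.

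What the paper does to close this is to charge the deficit to the centers in $O_1$, \emph{including those left over in $O'$ when the while loop terminates} (line~18), not just the $r_t$'s. Concretely, it writes $\rho_t=\Phi_{c_t}/a_{c_t}=2\Phi_{c_t}+2(\tfrac12-a_{c_t})\rho_t$ and constructs a fractional map $g:O_{1/2}\times O_1\to\mathbb{R}_{\ge 0}$ with $\sum_{\tilde c}g(c,\tilde c)=\tfrac12-a_c$, $\sum_c g(c,\tilde c)\le a_{\tilde c}$, and $g(c,\tilde c)>0\Rightarrow \rho_c\le\Phi_{\tilde c}/a_{\tilde c}$. The pairing $g(c_t,r_t)=a_{r_t}$ that your case~(b) charging implicitly uses is only the first half of this map; the second half routes the remaining supply to the terminal $O'$ (these centers have ratio $\ge\rho_t$ for every $t$ by the greedy order, so the inequality is preserved). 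The feasibility of this second routing is exactly what the counting identity derived from Lemma~\ref{lem:inv_half_integral} and the termination condition establishes. You have all the pieces; you are missing the explicit construction of $g$ that assembles them.
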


\begin{proof}
    Let $O_{1/2} = O \cap C_{1/2}$ and $O_1 = O \cap C_1$. We will define a function $g: O_{1/2} \times O_1\rightarrow \mathbb{R}_{\geq 0}$ where we require for any $(c,\tilde{c}) \in O_{1/2} \times O_1$ that if $g(c,\tilde{c}) > 0$ that then also $\frac{\Phi_c}{a_c} \leq \frac{\Phi_{\tilde{c}}}{a_{\tilde{c}}}$. Additionally we require for any $c \in O_{1/2}$ that $\sum_{\tilde{c} \in O_1} g(c,\tilde{c}) = \frac{1}{2} - a_c$ and for any $\tilde{c} \in O_1$ that $\sum_{c \in O_{1/2}} g(c,\tilde{c}) \leq a_{\tilde{c}}$. Intuitively speaking if we interpret $C_1$ as the centers whose opening has been increased to $1$ and $C_{1/2}$ as those whose opening has been decreased to $\frac{1}{2}$ then $g$ is basically telling us how the openings have been shifted from the nodes in $O_{1/2}$ to the nodes in $O_1$.

    To define $g$ consider for any $c \in O_{1/2}$ the iteration of the algorithm it gets added to $C_{1/2}$. If we also added a node $r$ to $C_1$ we set $g(c,r) = a_r$. This does not violate the desired properties of $g$ because $c$ minimizes $\frac{\Phi_c}{a_c}$ among all centers in $O'$ and $r$ was still an element of $O'$ at the beginning of this iteration. Additionally since $c,r \in O$ it holds that $\frac{1}{2} - a_c \geq \frac{1}{2} - \frac{1}{4} = \frac{1}{4} \geq a_r$. Besides this we will define for the centers which are in $O'$ after the end of the while loop (which will be added to $O_1$) an arbitrary mapping $g(c,\tilde{c})$ from the nodes in $O_{1/2}$ to them such that in total every $c \in C_{1/2}$ is mapped with a value of $\frac{1}{2} -a_c$ to $O_1$ and the total mapping to any $\tilde{c}\in O'$ is at most $a_{\tilde{c}}$. To see that such a mapping exists, first verify that $|O'| + |C_1| + \frac{1}{2} |C_{1/2}| = k$ after the last iteration of the while loop since the value of the sum only decreases by $\frac{1}{2}$ in every iteration. If the term was smaller then $k$ at the beginning of the while loop, then we have $O_{1/2} = \emptyset$ and the lemma is trivially fulfilled. Secondly combined with the fact that the total opening is bounded by $k$ implies:
    \begin{align*}
        \sum_{c \in O_{1/2}} \left(\frac{1}{2} - a_c\right)  - \sum_{\tilde{c} \in O_1 \setminus O'} g(c,\tilde{c}) &= \frac{1}{2} |O_{1/2}| - \sum_{c \in O \setminus O'} a_c\\
        &= \frac{1}{2} |O_{1/2}| - \sum_{c \in O \setminus O'} 1 - y_c^c\\
        &= \sum_{c \in O \setminus O'} y_c^{c} + \frac{1}{2} |O_{1/2}| - |O \backslash O'|\\
        &= \sum_{c \in O \setminus O'} y_c^{c} - \frac{1}{2} |O_{1/2}| - |O_1 \setminus O'|\\
        &\leq \sum_{c \in O \setminus O'} y_c^{c} + \sum_{c \in H} y_c^c - \frac{1}{2}|H| - \frac{1}{2} |O_{1/2}| - |O_1 \setminus O'|\\
        &= \sum_{c \in C \setminus O'} y_c^{c} -k + k - \frac{1}{2} |C_{1/2}| - |O_1 \setminus O'|\\
        &\leq - \sum_{\tilde{c} \in O'} y_{\tilde{c}}^{\tilde{c}} + |O'| \\
        &= \sum_{\tilde{c} \in O'} a_{\tilde{c}}
    \end{align*}

    Using this mapping we can basically redistribute the potential of the centers added to $O_1$ to the centers in $O_{1/2}$ to bound for any $c \in O_{1/2}$ the value of $R(c,s(c))$ as follows:
    \begin{alignat*}{3}
        R(c,s(c)) &\stackrel{\text{Lem. }\ref{lem:replacement-potential-bound}}{\leq}&&\frac{\Phi_c}{a_c}\\
        &\stackrel{\phantom{\text{Lem. }\ref{lem:replacement-potential-bound}}}{=}&& \Phi_c \left( 2 + 2\frac{ 0.5 - a_c}{a_c}\right)\\
        &\stackrel{\phantom{\text{Lem. }\ref{lem:replacement-potential-bound}}}{=}&& \Phi_c + 2 \sum\limits_{\tilde{c} \in O_1} g(c,\tilde{c}) \frac{\Phi_c}{a_c}\\
        &\stackrel{\phantom{\text{Lem. }\ref{lem:replacement-potential-bound}}}{\leq}&&  \Phi_c + 2 \sum\limits_{\tilde{c} \in O_1} g(c,\tilde{c}) \frac{\Phi_{\tilde{c}}}{a_{\tilde{c}}}
    \end{alignat*}

    And using that for any $\tilde{c} \in O_1$ the amount by which the other centers are mapped to it is upper bounded by $a_{\tilde{c}}$ we can thus bound the reassignment cost of all nodes in $O_{1/2}$:
    \begin{align*}
        \sum_{c \in O_{1/2}} R(c,\tilde{c}) &\leq \sum_{c \in O_{1/2}} \left(2 \Phi_c + 2 \sum_{\tilde{c} \in O_1} g(c,\tilde{c}) \frac{\Phi_{\tilde{c}}}{a_{\tilde{c}}}\right)\\
        &= 2 \sum_{c \in O_{1/2}} \Phi_c + 2   \sum_{\tilde{c} \in O_1} \sum_{c \in O_{1/2}} g(c,\tilde{c}) \frac{\Phi_{\tilde{c}}}{a_{\tilde{c}}}\\
        &\leq 2 \sum_{c \in O_{1/2}} \Phi_c + \sum_{\tilde{c} \in O_1} a_{\tilde{c}} \frac{\Phi_{\tilde{c}}}{a_{\tilde{c}}}\\
        &= 2 \sum_{c \in O} \Phi_c
    \end{align*}
\end{proof}

Combining this with Lemma \ref{lem:potential_bound} directly gives us an $O(k)$ bound for the total cost necessary if we replaced all nodes in $O \cap C_{1/2}$ by their successors:

\begin{corollary}
\label{cor:replace_o_bound}
    \begin{equation*}
        \sum_{c \in \left( C_{1/2} \cap O \right)} R(c,s(c)) \leq 27k \cdot \cost(\tilde{x}) + \frac{1}{4} \cost(y)
    \end{equation*}
\end{corollary}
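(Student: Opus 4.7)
The plan is almost entirely a chaining exercise, since the heavy lifting has already been done in the two lemmas immediately preceding the corollary. First I would invoke the previous lemma, which states that
\[
\sum_{c \in C_{1/2} \cap O} R(c, s(c)) \leq 2 \sum_{c \in O} \Phi_c.
\]
This reduces the task of bounding total replacement cost for the half-opened centers coming from $O$ to the task of bounding the total potential assigned to centers in $O$.

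Next I would apply Lemma \ref{lem:potential_bound}, which gives
\[
\sum_{c \in O} \Phi_c \leq 13.5k \cdot \cost(\tilde{x}) + \frac{1}{8} \cost(y).
\]
Substituting this bound into the previous inequality and distributing the factor of $2$ yields exactly $27k \cdot \cost(\tilde{x}) + \frac{1}{4} \cost(y)$, which is the claimed bound.

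There is no real obstacle here: both ingredients are already proved, the constants line up exactly ($2 \cdot 13.5 = 27$ and $2 \cdot \tfrac{1}{8} = \tfrac{1}{4}$), and the corollary is essentially a one-line composition. The only thing worth double-checking is that the set over which the sum is taken on the left is $C_{1/2} \cap O$ (as in the hypothesis of the previous lemma), while the bound on the right uses the potential over all of $O$, which is fine since $C_{1/2} \cap O \subseteq O$ and the lemma explicitly states the inequality with $\sum_{c \in O} \Phi_c$ on the right-hand side. Hence the chain goes through verbatim and the corollary follows.
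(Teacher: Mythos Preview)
Your proposal is correct and matches the paper's approach exactly: the paper presents this corollary without a formal proof, simply noting that it follows by combining the preceding lemma with Lemma~\ref{lem:potential_bound}. Your chaining of the two bounds, with the arithmetic check on the constants, is precisely what is intended.
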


\subsection{Finding the integral centers}

After the previous step we have subdivided our centers into $C_{1/2}$ and $C_1$ with $|C_1| + \frac{1}{2} |C_{1/2}| \leq k$ and for each $c \in C_{1/2}$ we have a successor $s(c)$ such that replacing the assignments to $c$ by assignments to $s(c)$ costs not too much. Now we would like to find a subset $F$ of $C_{1/2}$ of size at most $k - |C_1|$ such that for every center $c \in C_{1/2}$ either $c$ or $s(c)$ is contained in $F \cup C_1$, which will be our final center set.

To do this we consider the directed graph $G_{1/2} = (C_{1/2},S)$ with $S = \big\{ (c,s(c)) \mid c,s(c) \in C_{1/2}\big\}$. In the original LP rounding algorithm for $k$-median the authors could use that this graph is bipartite since there exist no directed cycles of uneven length however in our case this is not true. But since there is only one outgoing edge for each center we know that each component of the graph contains at most one cycle. Additionally because of the if-statement in line \ref{line:if} of Algorithm \ref{Alg_h_integral} we know that any cycle in $G_{1/2}$ contains at most one node from $O$. For any cycle of uneven length let $c$ be the singular node in it contained in $O$ if such a node exists and an arbitrary node otherwise. Let $p$ be the predecessor of $c$ on the cycle and $g$ the predecessor of $p$. Both $p$ and $g$ are contained in $H$ because the shortest uneven cycle length is $3$.

We will now show that either it is possible to set $s(p)=g$ to eliminated the cycle, or to set $s(g)=c$ to shorten the cycle by one.
We distinguish two cases: If $d(g,p) \leq d(p,c)$ we turn the cycle into a tree by setting $s(p) = g$ instead of $c$. Since $g \in H$ we know by construction of Algorithm \ref{Alg_h_integral} that $y_g^p \geq \frac{1}{16k}$. Thus the graph would stay $\frac{1}{16k}$-connected if all assignments to $p$ will be reassigned to $g$ instead. Furthermore since $g$ is closer to $p$ than $c$ it still holds that $R(p,g) \leq M_p \cdot 16 r_p$. Otherwise we set $s(g) = c$. This way the cycle length reduces by 1 and the component gets bipartite. As in the previous case it holds that $y_g^p \geq \frac{1}{16k}$ and $y_p^c \geq \frac{1}{16k}$. 
Note, that this does not necessarily guarantees us, that there is a flow from $g$ to $c$ of value $\frac{1}{16k}$ in $(y_v^c)_{v \in V}$. This only implies that there is such a flow from $g$ to $p$ in $(y_v^p)_{v \in V}$ and a flow from $p$ to $c$ in $(y_v^c)_{v \in V}$ with this values. We observe, that if we replace $p$ by $c$, then we get our desired flow from $g$ to $c$ with value $\frac{1}{16k}$ and our connectivity is ensured. As a consequence, we can only reassign $g$ to $c$, if we also reassign $p$ to $c$. However since both $p$ and $g$ are adjacent to $c$ after this step, they will be part of the same half of the bipartition which directly guarantees us that we either replace both $p$ and $g$ or add both of them to the center set. Thus, the graph will stay $\frac{1}{16k}$-connected after the reassignments.
To bound the cost, by the triangle inequality we have that $d(g,c) \leq d(g,p) + d(p,c) \leq 32 r_g$. Thus $R(g,c) \leq M_g \cdot 32 r_g$.

Once we have made the graph bipartite this way we will consider a bipartition $H_1, H_2$ for every connected component. Let without loss of generality $|H_1| \leq |H_2|$. Then $H_1$ contains at most half as many centers as the entire component $H_1 \cup H_2$. Additionally for any $c \in H_2$ it holds that $s(c) \in H_1$ or $s(c) \in C_1$. Thus by defining $F$ to be the union of all the smaller halves of the bipartition of the connected components ensures that for all $c \in C_{1/2} \setminus F$  that $s(c) \in C_1 \cap F$ while $|F| \leq \frac{|C_{1/2}|}{2} \leq k - |C_1|$. Then we set $C_1 = C_1 \cup F$ and use it as our final center set. Let $C_0 = C\setminus C_1$. Then we can define a $\frac{1}{16k}$-connected solution $\left(z_v^c\right)_{v \in V,c \in C_1}$ using the following two steps: 
\mycomment{
{\color{red}
\begin{itemize}
\item For all $c \in C_1$:
\begin{itemize}
\item $z_c^c = 1$
\item $\alpha_c = \max\left(1,\max_{\tilde{c} \in C_0 \cap O: s(\tilde{c}) = c} \frac{1}{16k}\frac{1}{y_{\tilde{c}}^{c}}\right)$
\item $\forall v \in V \setminus \{c\}$ $z_v^c = \alpha_c \cdot y_v^c$
\end{itemize}
\item For all $c \in C_0$:
\begin{itemize}
    \item Let $\tilde{c} = s(c)$
    \item $\forall v \in V$: $z_v^{\tilde{c}} = z_v^{\tilde{c}} + y_v^c$
\end{itemize}
\end{itemize}
}
}
\begin{itemize}
\item For all $c \in C_1$ we set $z_c^c = 1$ and then define a value $\alpha_c = \max_{\tilde{c} \in C_0 \cap O: s(\tilde{c}) = c} \frac{1}{16k}\frac{1}{y_{\tilde{c}}^{c}}$. If this value is greater $1$ we set for all $v \in V \setminus \{c\}$ the assignment $z_v^c = \alpha_c \cdot y_v^c$ and otherwise $z_v^c = y_v^c$. This ensures for any $c \in O \cap C_0$ that $c$ is $\frac{1}{16k}$-connected to $s(c)$. As argued above for the centers in $H \cup C_0$ this is already fulfilled by the solution $y$.
\item Afterwards we  go over all $c \in C_0$ and increase for all nodes $v \in V$ the value $z_v^{s(c)}$ by $y_v^c$, i.e. we set $z_v^{s(c)} = z_v^{s(c)} + y_v^c$. This way we redirect all assignments to $c$ to its successor instead. Since we know that these successors are all contained in $C_1$ the resulting solution $z$ is indeed only using $|C_1|$ as a center set.
\end{itemize}

We can bound the cost of the solution $z$ as follows:

\begin{lemma}
\label{lem:bound_int_centers_z}
Let $z$ be the final solution using only integral centers. Then
\begin{equation*}
    \cost(z) \in O( k \cdot cost(\tilde{x})).
\end{equation*}
\end{lemma}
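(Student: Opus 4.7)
The plan is to decompose $\cost(z)$ into an ``unchanged'' contribution from $y$, a reassignment contribution, and a scaling contribution, and bound each using the estimates established earlier. From the definition of $z$, using $d(c,c)=0$, we obtain
\begin{equation*}
\cost(z) \;=\; \sum_{c \in C_1} \alpha_c \sum_{v \in V} y_v^c \, d(v,c) \;+\; \sum_{c' \in C_0} \sum_{v \in V} y_v^{c'} \, d(v,s(c')),
\end{equation*}
with the convention $\alpha_c = 1$ whenever no predecessor in $C_{1/2} \cap O$ maps to $c$. Splitting $\alpha_c = 1 + (\alpha_c - 1)$ in the first sum and using the triangle inequality $d(v,s(c')) \leq d(v,c') + d(c',s(c'))$ in the second gives
\begin{equation*}
\cost(z) \;\leq\; 2\cost(y) \;+\; \sum_{c' \in C_0} M_{c'} \, d(c',s(c')) \;+\; \sum_{c \in C_1} (\alpha_c - 1) \sum_{v \in V} y_v^c \, d(v,c).
\end{equation*}
By Theorem~\ref{thm:bound_y}, $2\cost(y) \leq 40k \cdot \cost(\tilde{x})$.

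For the middle (reassignment) term I would split $C_0$ into $C_0 \cap H$ and $C_0 \cap O$. For $c' \in C_0 \cap H$, Lemma~\ref{lem:H-connectivity-and-cost-guarantee} initially ensures $d(c',s(c')) \leq 16\, r_{c'}$, and the cycle-breaking step at most doubles this bound to $32\, r_{c'}$ (in case~2 one uses $d(p,c) \leq d(g,p) \leq 16 r_g$ and the triangle inequality); summing and applying Lemma~\ref{lem:shift_by_radius} then bounds this contribution by $32 \cdot 4.5\, k \cdot \cost(\tilde{x})$. For $c' \in C_0 \cap O$, observe that the cycle-breaking only ever redefines successors of $H$-nodes, so $s(c')$ is unchanged from Algorithm~\ref{Alg_h_integral}; moreover $M_{c'} d(c',s(c'))$ is dominated by $R(c',s(c'))$, so this sub-contribution is at most $\sum_{c' \in C_{1/2} \cap O} R(c',s(c'))$, which Corollary~\ref{cor:replace_o_bound} combined with Theorem~\ref{thm:bound_y} bounds by $O(k)\cdot \cost(\tilde{x})$.

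The main obstacle is the scaling term, since several predecessors $\tilde{c} \in C_{1/2} \cap O$ may share the same successor $c \in C_1$ while we pay for the scaling at $c$ only once. The key observation is that for each $c \in C_1$ with $\alpha_c > 1$ the maximum in the definition of $\alpha_c$ is attained at a specific predecessor $\tilde{c}_c^{*} \in C_{1/2} \cap O$ with $s(\tilde{c}_c^{*}) = c$; since every $\tilde{c} \in C_{1/2} \cap O$ has a unique successor, each such $\tilde{c}$ serves as $\tilde{c}_c^{*}$ for at most one $c \in C_1$. Consequently $(\alpha_c - 1) \leq \alpha_c = \frac{1}{16k\, y_{\tilde{c}_c^{*}}^{c}}$, and the charging argument gives
\begin{equation*}
\sum_{c \in C_1} (\alpha_c - 1) \sum_{v \in V} y_v^c \, d(v,c) \;\leq\; \sum_{\tilde{c} \in C_{1/2} \cap O} \frac{1}{16k \, y_{\tilde{c}}^{s(\tilde{c})}} \sum_{v \in V} y_v^{s(\tilde{c})} \, d(v,s(\tilde{c})).
\end{equation*}

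The right-hand side is precisely the ``hard-case'' summand of the upper bound (\ref{eq:replacement_hard}) on $R(\tilde{c},s(\tilde{c}))$, so the whole scaling term is at most $\sum_{\tilde{c} \in C_{1/2} \cap O} R(\tilde{c},s(\tilde{c}))$, which Corollary~\ref{cor:replace_o_bound} (plus $\cost(y) \leq 20k \cdot \cost(\tilde{x})$) controls by $O(k)\cdot \cost(\tilde{x})$. Adding the three $O(k)\cdot \cost(\tilde{x})$ bounds yields $\cost(z) \in O(k) \cdot \cost(\tilde{x})$. The trickiest points to write up carefully will be (i)~verifying that the cycle-breaking cost stays within $O(r_{c'})$ even when $s$ is redirected to a second-hop successor, and (ii)~making the uniqueness argument ``each $\tilde{c} \in C_{1/2}\cap O$ is charged at most once'' fully precise.
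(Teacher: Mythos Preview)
Your proposal is correct and follows essentially the same approach as the paper. The paper's proof is more compressed: it observes directly that $\cost(z) \leq \cost(y) + \sum_{c \in C_0} R(c,s(c))$ by noting that every modification in the construction of $z$ (the scaling at a center $c\in C_1$ and the reassignment from each $c'\in C_0$) already appears inside one of the replacement procedures $R(\cdot,\cdot)$; in particular, the scaling at $c$ is charged to $R(p_c,c)$ where $p_c \in C_0\cap O$ attains the maximum defining $\alpha_c$, exactly as in your ``uniqueness'' argument. From there the paper bounds $\sum_{c \in C_0} R(c,s(c)) \leq \sum_{c \in H} 32\,M_c r_c + \sum_{c \in O\cap C_{1/2}} R(c,s(c))$ and invokes Lemma~\ref{lem:shift_by_radius}, Corollary~\ref{cor:replace_o_bound}, and Theorem~\ref{thm:bound_y}. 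Your three-term decomposition unpacks the same accounting more explicitly (and your $2\cost(y)$ can in fact be tightened to $\cost(y)$, since the $C_1$ and $C_0$ contributions together equal $\cost(y)$ rather than each being bounded by it separately), but the logic and the lemmas used are identical.
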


\begin{proof}
We may first observe that $\cost(z) \leq \cost(y) + \sum\limits_{c \in C_0} R(c,s(c))$. This is due to the fact that for every $c \in C_0$ the reassignments in the second step would also have been executed if we would have only replaced $c$ by $s(c)$. Additionally for any $c \in C_1$ with $\alpha_c > 1$ let $p_c = \argmax\left(1,\max_{\tilde{c} \in C_0 \cap O: s(\tilde{c}) = c} \frac{1}{16k}\frac{1}{y_{\tilde{c}}^{c}}\right)$. Then also during the replacement of $p_c$ by $c$ the assignments to $c$ would have been increased by $\frac{1}{\alpha_c}$. Thus any modification performed during the calculation of $z$ is also part of at least one replacement procedure which gives us the described bound. As a result:
\begin{alignat*}{2}
    \cost(z) &\mathmakebox[1.5cm][c]{ \stackrel{}{\leq} }&& \cost(y) + \sum\limits_{c \in C_0} R(c,s(c))\\
    &\mathmakebox[1.5cm][c]{\stackrel{}{\leq}}&& \cost(y) + \sum\limits_{c \in C_{1/2}} R(c,s(c))\\
    &\mathmakebox[1.5cm][c]{\stackrel{}{=}}&& \cost(y) + \sum\limits_{c \in H} R(c,s(c)) + \sum\limits_{c \in O \cap C_{1/2}} R(c,s(c))\\
    &\mathmakebox[1.5cm][c]{\stackrel{\text{Cor. }\ref{cor:replace_o_bound}}{\leq}}&&  \cost(y) + 27k \cdot \cost(\tilde{x}) + \frac{1}{4} \cost(y) + \sum\limits_{c \in H} M_c d(c,s(c))\\
    &\mathmakebox[1.5cm][c]{\stackrel{\text{Thm. }\ref{thm:bound_y}}{\leq}}&& 52k \cdot \cost(\tilde{x}) + \sum\limits_{c \in H}M_c 32 r_c\\
    &\mathmakebox[1.5cm][c]{\stackrel{}{\leq}}&& 52k \cdot\cost(\tilde{x}) + 32 \sum\limits_{c \in C}M_c r_c\\
    &\mathmakebox[1.5cm][c]{\stackrel{\text{Lem. }\ref{lem:shift_by_radius}}{\leq}}&& 196k \cdot \cost(\tilde{x})
\end{alignat*}
\end{proof}

\subsection{Getting an entirely integral solution}

If we obtained $z$ we can multiply its values by $16k$ (capped at $1$) which results in a 1-connected solution where every node is assigned to at least one of the centers with value $1$. Thus as in the assignment version we can use the primal-dual algorithm to obtain integer assignment values while increasing the cost only by $O(\log n)$. By combining this with Lemma \ref{lem:bound_int_centers_z} we obtain the following theorem:

\ClusteringNonDisjoint*


\subsection{Technical Lemmas}\label{sec:technical}


\begin{lemma}
\label{lem:cup_cap}
    For any $A,B\subseteq V$ and any $t \in V$, let:
    \begin{itemize}
    \item $N_{A,B}^\cap = (A \cap I_t(B)) \cup (B \cap I_t(A)) \cup (A\cap B)$
    \item $N^\cup_{A,B} = (A \setminus I_t(B)) \cup (B \setminus I_t(A))$
    \end{itemize}
    Then:
    \begin{enumerate}
    \item $I_t(N_{A,B}^\cap) = I_t(A) \cap I_t(B)$
    \item $H_t(N_{A,B}^\cap) = H_t(A) \cap H_t(B)$
    \item $I_t(N_{A,B}^\cup) \supseteq I_t(A) \cup I_t(B)$
    \item $H_t(N_{A,B}^\cup) \supseteq H_t(A) \cup H_t(B)$
    \end{enumerate}
\end{lemma}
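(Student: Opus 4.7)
My approach is to reduce everything to the path-level characterisation that $v\in H_t(S)$ iff every simple path from $v$ to $t$ meets $S$ (allowing $v$ itself to count if $v\in S$). I will first prove claim (2), derive (1) from it, then prove (4) by a ``ping-pong'' argument, and derive (3) from (4).

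\textbf{Claim (2).} For $\subseteq$, take $v\in H_t(N^{\cap}_{A,B})$ and suppose towards contradiction $v\notin H_t(A)$, so some path $P$ from $v$ to $t$ avoids $A$. Since $N^{\cap}_{A,B}\subseteq A\cup B$, the path $P$ must hit $N^{\cap}_{A,B}$ inside $B\cap I_t(A)$, say at $u$. But then the suffix of $P$ from $u$ to $t$ is a path avoiding $A$, contradicting $u\in I_t(A)$. By symmetry $v\in H_t(B)$.

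For $\supseteq$, take $v\in H_t(A)\cap H_t(B)$ and assume $v\notin N^{\cap}_{A,B}$; I first argue $v\notin A\cup B$ (otherwise $v$ would fall into one of the three pieces of $N^{\cap}_{A,B}$). Now fix any path $P$ from $v$ to $t$. Let $a$ be the first $A$-vertex and $b$ the first $B$-vertex on $P$; WLOG $a$ occurs no later than $b$. If $a=b$ then $a\in A\cap B\subseteq N^{\cap}_{A,B}$. Otherwise, I claim $a\in I_t(B)$: indeed $a\notin B$ (as $b$ is the first $B$-vertex and $a\ne b$), and if some path $Q$ from $a$ to $t$ avoided $B$, then the prefix of $P$ from $v$ to $a$ (which avoids $B$ by choice of $b$) concatenated with $Q$ would give a walk, hence a simple path, from $v$ to $t$ avoiding $B$, contradicting $v\in I_t(B)$. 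Thus $a\in A\cap I_t(B)\subseteq N^{\cap}_{A,B}$.

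\textbf{Claim (1).} Since $N^{\cap}_{A,B}\subseteq A\cup B$, it suffices to show $(H_t(A)\cap H_t(B))\setminus N^{\cap}_{A,B}\subseteq V\setminus(A\cup B)$; but if $v$ lies in this difference and in $A$ (say), then $v\in H_t(B)$ forces $v\in B$ or $v\in I_t(B)$, which in either case puts $v$ into $N^{\cap}_{A,B}$. So $v\notin A\cup B$ and hence $v\in I_t(A)\cap I_t(B)$. The reverse inclusion is immediate since $I_t(A)\cap I_t(B)\subseteq H_t(A)\cap H_t(B)=H_t(N^{\cap}_{A,B})$ and $I_t(A)\cap I_t(B)$ avoids $A\cup B\supseteq N^{\cap}_{A,B}$.

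\textbf{Claim (4).} Take $v\in H_t(A)$; the case $v\in H_t(B)$ is symmetric. If $v\in N^{\cup}_{A,B}$ we are done, so assume not, and fix any path $P$ from $v$ to $t$. I run a ping-pong argument: let $a_1$ be the first $A$-vertex on $P$. If $a_1\in A\setminus I_t(B)$ then $a_1\in N^{\cup}_{A,B}$ and we are done, so $a_1\in A\cap I_t(B)$; in particular the suffix of $P$ after $a_1$ must hit $B$. Let $b_1$ be the first such $B$-vertex; if $b_1\in B\setminus I_t(A)$ we are done, so $b_1\in B\cap I_t(A)$, and the suffix of $P$ after $b_1$ must hit $A$, giving $a_2$, etc. This produces a strictly increasing sequence of positions on the finite path $P$, which is impossible unless the process terminates by hitting $N^{\cup}_{A,B}$.

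\textbf{Claim (3).} Given $v\in I_t(A)$ (symmetrically for $B$), claim (4) gives $v\in H_t(N^{\cup}_{A,B})$, so it only remains to check $v\notin N^{\cup}_{A,B}$. Since $v\notin A$ we have $v\notin A\setminus I_t(B)$; and $v\in I_t(A)$ immediately gives $v\notin B\setminus I_t(A)$. Hence $v\in I_t(N^{\cup}_{A,B})$.

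The only subtle step is the ping-pong argument in claim (4); the rest is bookkeeping with the definitions of $I_t$ and $H_t$.
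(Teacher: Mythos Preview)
Your proof is correct. The main difference from the paper is the order of derivation: the paper proves (1) and (3) directly and obtains (2) and (4) from them by a short set-algebra computation, whereas you prove (2) and (4) directly and extract (1) and (3) afterwards. At the path level, the arguments also differ in flavour. For the $\cap$-side the paper looks at the \emph{first vertex on the path that leaves $I_t(A)\cap I_t(B)$} and observes that this vertex must lie in $N^{\cap}_{A,B}$; you instead compare the first $A$-hit and the first $B$-hit and show the earlier one lies in $N^{\cap}_{A,B}$. For the $\cup$-side the paper uses a single boundary step: take adjacent $p_1\in I_t(A)\cup I_t(B)$ and $p_2\notin I_t(A)\cup I_t(B)$ on the path and conclude $p_2\in N^{\cup}_{A,B}$; your ping-pong argument reaches the same conclusion but via an iterated alternation between $A$- and $B$-hits. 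Both approaches are sound; the paper's boundary-step argument for (3) is slightly slicker than the ping-pong, while your direct proof of (2) is arguably cleaner than deriving it from (1).
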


\begin{proof}

To show 1. let $v \not\in I_t(A)$. Then there exists a path from $V$ to $t$ not passing through $H_t(A)$. Since it is easy to verify that $N_{A,B}^\cap \subseteq H_t(A)$ the path is also not passing through $N_{A,B}^\cap$ and $v\not \in I_t(N_{A,B}^\cap)$. Since the same argument also holds for $B$ we get that $I_t(N_{A,B}^\cap) \subseteq I_t(A) \cap I_t(B)$. Let now $v \in I_t(A) \cap I_t(B)$ and suppose that there existed a path from $v$ to $t$ not containing any nodes in $N_{A,b}^\cap$. Let $p$ be the first node on the path not in $I_t(A) \cap I_t(B)$. Then if $p \in I_t(A) \setminus I_t(B)$ we may observe that $p \in I_t(A) \cap B \subseteq N_{A,B}^\cap$. The case $p \in I_t(B) \setminus I_t(A)$ can be handled analogously. Thus the only remaining case is that $p\not\in I_t(A)$ and $p \not\in I_t(B)$ but then $p \in A \cap B \subseteq N_{A,b}^\cap$. Thus such a path cannot exist and $v \in I_t(N_{A,B}^\cap)$.

Additionally this implies 2., since:
\begin{align*}
H_t(A) \cap H_t(B) &= (A \cap B) \cup (A \cap I_t(B)) \cup (I_t(A) \cap B) \cup (I_t(A) \cup I_t(B))\\
&= N_{A,B}^\cap \cup (I_t(A) \cap I_t(B))\\
&= N_{A,B}^\cap \cup I_t(N_{A,B}^\cap) = H_t(N_{A,B}^\cap)
\end{align*}

To show 3. let $v$ be an arbitrary element of $I_t(A) \cup I_t(B)$. Suppose there exists a path from $v$ to $t$ not containing any elements from $N_{A,B}^\cup$. Let $p_1,p_2$ be successors on that path such that $p_1 \in I_t(A) \cup I_t(B)$ and $p_2 \not\in I_t(A) \cup I_t(B)$. W.l.o.g. $p_1 \in I_t(A)$ and thus $p_2 \in A$ which together with $p_2 \not \in I_t(B)$ implies that $p \in A \setminus I_t(B) \subseteq N_{A,B}^\cup$. Thus such a path cannot exist and $v \in I_t(N_{A,B}^\cup)$.

Again this directly implies 4.:
\begin{align*}
H_t(A) \cup H_t(B) &= I_t(A) \cup I_t(B) \cup A \cup B\\
&= (I_t(A) \cup I_t(B)) \cup (A \setminus I_t(B)) \cup (B \setminus I_t(A))\\
&\subseteq I_t(N_{A,B}^\cup) \cup N_{A,B}^\cup = H_t(N_{A,B}^\cup)
\end{align*}

\end{proof}

\mycomment{
\begin{lemma}
For any $s,t \in V$ and weight function $w$ it holds that if $A$ and $B$ are minimum weight cuts between $s$ and $t$ the same also holds for $A \cup B$.
\end{lemma}

\begin{proof}
For any $S, T \subseteq V$ let $\Gamma(S,T) = \{\{v,w\}\in E \mid v \in S, w \in T \setminus S$ and $w(S,T) = \sum_{e \in \Gamma(s,t)} w_e$. Then:

\begin{equation*}
    w(\Gamma(B)) = w(B,A) + w(B, V \setminus A)
\end{equation*}

\begin{equation*}
    w(\Gamma(A\cup B)) = w(A, V \setminus (B \cap \Gamma(B))) + w(B, V \setminus A)
\end{equation*}

If we assume that $A \cup B$ is not a minimum cut  while $A$ and $B$ are then $ w(\Gamma(A\cup B))> w(\Gamma(A)$ which implies:
\begin{equation*}
     w(A, V \setminus (B \cap \Gamma(B))) > w(B,A)
\end{equation*}

Using this we can bound $w(\Gamma(A))$ as follows:
\begin{align*}
    w(\Gamma(A)) &= w(A,B) + w(A, V \setminus B)\\
    &= w(A,B) + w(A, V \setminus (B\cup\Gamma(B)) + w(A, \Gamma(B)\\
    &> w(A,B) + w(B,A) + w(A, \Gamma(B))\\
    &\geq w(A \cap B, A) + w(A\cap B, B) + w(A\cap B, V \setminus (A \cup B))\\
    &= w(\Gamma(A \cap B))
\end{align*}

However this is a direct contradiction to $A$ being a minimum weight cut. Thus $A \cup B$ also needs to be a minimum cut.
\end{proof}

}

\begin{lemma}
\label{lem:sum_cup_cap}
For any $A, B \subseteq V$ it holds that $w(A) + w(B) \geq w(N_{A,B}^\cap) + w(N_{A,B}^\cup)$.
\end{lemma}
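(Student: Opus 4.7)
The plan is to show that the stated inequality actually holds with equality, via a direct double-counting / partition argument on the vertex weights. Since $w(S) = \sum_{v \in S} w(v)$ is additive, it suffices to show that every vertex $v \in V$ contributes the same amount to both sides, i.e.\ $\mathbf{1}[v \in A] + \mathbf{1}[v \in B] = \mathbf{1}[v \in N_{A,B}^\cap] + \mathbf{1}[v \in N_{A,B}^\cup]$.

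Concretely, I would partition each of $A$ and $B$ into three disjoint pieces. Setting $X := A \cap B$, $Y_A := A \cap I_t(B)$ and $Z_A := A \setminus H_t(B)$, and similarly $Y_B := B \cap I_t(A)$, $Z_B := B \setminus H_t(A)$, the partition $A = X \sqcup Y_A \sqcup Z_A$ is valid because $H_t(B) = B \sqcup I_t(B)$ is a disjoint union (by the very definition $I_t(B) \subseteq V \setminus B$), and analogously for $B = X \sqcup Y_B \sqcup Z_B$. In particular this gives
\[ w(A) + w(B) = 2 w(X) + w(Y_A) + w(Y_B) + w(Z_A) + w(Z_B). \]

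Next I would verify the analogous decompositions of the two derived cuts. For $N_{A,B}^\cap = (A \cap I_t(B)) \cup (B \cap I_t(A)) \cup (A \cap B) = Y_A \cup Y_B \cup X$, the three pieces are pairwise disjoint: $Y_A \cap X = \emptyset$ because $Y_A \subseteq I_t(B)$ while $X \subseteq B$, and $Y_A \cap Y_B = \emptyset$ because any vertex in $Y_A \cap Y_B$ would lie in both $A$ and $I_t(A) \subseteq V \setminus A$. For $N_{A,B}^\cup = (A \setminus I_t(B)) \cup (B \setminus I_t(A))$, a quick rewriting using the partition of $A$ gives $A \setminus I_t(B) = X \cup Z_A$ and symmetrically $B \setminus I_t(A) = X \cup Z_B$, so $N_{A,B}^\cup = X \cup Z_A \cup Z_B$ (again disjoint). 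Summing weights yields
\[ w(N_{A,B}^\cap) + w(N_{A,B}^\cup) = 2 w(X) + w(Y_A) + w(Y_B) + w(Z_A) + w(Z_B), \]
which matches $w(A) + w(B)$ and in particular establishes the claimed inequality.

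There is no real obstacle—the entire difficulty lies in keeping the set-algebraic bookkeeping straight, especially the observation that $I_t(\cdot)$ is disjoint from its argument (so $X, Y_A, Z_A$ really do partition $A$) and that $Y_A \cap Y_B = \emptyset$. Once the partition is set up, the equality of both sides is immediate from inclusion–exclusion over the six atoms $X, Y_A, Y_B, Z_A, Z_B$ (and the remainder outside $H_t(A) \cup H_t(B)$, which contributes $0$ to both sides).
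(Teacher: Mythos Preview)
Your proof is correct and follows essentially the same set-algebraic bookkeeping as the paper's proof. The paper's version is slightly lazier: it only applies a union bound $w(N_{A,B}^\cap) \le w(A\cap I_t(B)) + w(B\cap I_t(A)) + w(A\cap B)$ and thus obtains merely the inequality, whereas your partition into the atoms $X, Y_A, Y_B, Z_A, Z_B$ verifies all disjointness conditions and yields the sharper equality $w(A)+w(B)=w(N_{A,B}^\cap)+w(N_{A,B}^\cup)$.
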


\begin{proof}
It holds that:
\begin{align*}
w(N^\cup_{A,B}) &= w((A \setminus I_t(B)) \cup (B \setminus I_t(A)))\\
&= w(A \setminus I_t(B)) + w(B \setminus I_t(A)) - w((A \setminus I_t(B)) \cap (B \setminus I_t(A)))\\
&= w(A \setminus I_t(B)) + w(B \setminus I_t(A)) - w(A \cap B)
\end{align*}

while at the same time by the definition of $N_{A,B}^\cap$:

\begin{align*}
w(N^\cap_{A,B}) &\leq w(A \cap I_t(B))  + w(B \cap I_t(A)) + w(A\cap B)
\end{align*}

And by combining these two bounds we are directly able to prove the lemma:

\begin{align*}
    w(A) + w(B) & = w(A \setminus I_t(B)) + w(A \cap I_t(B)) + w(B \setminus I_t(A)) + w(B \cap I_t(A))\\
    &\geq w(N^\cap_{A,B}) + w(N_{A,B}^\cup)
\end{align*}
\end{proof}

\begin{lemma}
For all $S,S' \subseteq V$, $t,v \in V$ with $S \subseteq S'$ it holds that:
\begin{equation*}
 \Delta^w(S',v,t) \leq \Delta^w(S,v,t)
\end{equation*}
\end{lemma}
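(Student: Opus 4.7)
The statement is a submodularity-style inequality for the function $f(S) := \sep^w(S,t)$: adding $v$ to a larger set yields at most the gain of adding $v$ to a smaller set. After rearrangement, the claim is equivalent to
\[ \sep^w(S\cup\{v\},t) + \sep^w(S',t) \;\geq\; \sep^w(S,t) + \sep^w(S'\cup\{v\},t), \]
so the plan is to exhibit two cuts whose total weight is at most the left-hand side and which are cuts for the two terms on the right-hand side respectively. Then Lemma~\ref{lem:sum_cup_cap} closes the argument.

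Concretely, I would let $A$ be a minimum-weight cut between $S\cup\{v\}$ and $t$, so $w(A)=\sep^w(S\cup\{v\},t)$ and $S\cup\{v\}\subseteq H_t(A)$; and $B$ a minimum-weight cut between $S'$ and $t$, so $w(B)=\sep^w(S',t)$ and $S'\subseteq H_t(B)$. Then I would form the combined cuts $N_{A,B}^\cap$ and $N_{A,B}^\cup$ from Lemma~\ref{lem:cup_cap} (with respect to the sink $t$). The two key containments I need are:
\begin{itemize}
\item $N_{A,B}^\cap$ is a cut between $S$ and $t$. By Lemma~\ref{lem:cup_cap}(2), $H_t(N_{A,B}^\cap)=H_t(A)\cap H_t(B)\supseteq (S\cup\{v\})\cap S'\supseteq S$, where the last inclusion uses $S\subseteq S'$.
\item $N_{A,B}^\cup$ is a cut between $S'\cup\{v\}$ and $t$. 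By Lemma~\ref{lem:cup_cap}(4), $H_t(N_{A,B}^\cup)\supseteq H_t(A)\cup H_t(B)\supseteq (S\cup\{v\})\cup S' = S'\cup\{v\}$.
\end{itemize}

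Combining these with Lemma~\ref{lem:sum_cup_cap} gives
\[ \sep^w(S\cup\{v\},t)+\sep^w(S',t) = w(A)+w(B) \geq w(N_{A,B}^\cap)+w(N_{A,B}^\cup) \geq \sep^w(S,t)+\sep^w(S'\cup\{v\},t), \]
which is the desired inequality after subtracting $\sep^w(S,t)+\sep^w(S',t)$ from both sides.

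The only non-routine part is recognizing that Lemma~\ref{lem:cup_cap} and Lemma~\ref{lem:sum_cup_cap} together encode exactly submodularity of the minimum-cut function; once those technical lemmas are in place, the proof is a short two-line application. No case analysis or additional hypotheses are needed, and the argument does not rely on $v\notin S'$ (if $v\in S'$ both sides of the inequality involve identical $\sep^w$-values and the claim is trivial).
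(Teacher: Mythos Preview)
Your proof is correct and follows essentially the same approach as the paper: choose minimum cuts $A$ for $S\cup\{v\}$ and $B$ for $S'$, apply Lemma~\ref{lem:cup_cap} to see that $N_{A,B}^\cap$ separates $S$ from $t$ and $N_{A,B}^\cup$ separates $S'\cup\{v\}$ from $t$, and then use Lemma~\ref{lem:sum_cup_cap} to conclude. The only cosmetic difference is that the paper routes through the intermediate set $(S\cup\{v\})\cap S'$ before dropping down to $S$, whereas you go to $S$ directly; the content is identical.
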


\begin{proof}
    Let $A$ be a minimum cut separating $cut_H(S \cup \{v\}, t)$ and let $B$ be a minimum cut separating $cut_H(S',t)$. Then it holds that:
    \begin{align*}
        \sep^w(S' \cup \{v\},t) &\stackrel{\text{Lem. }\ref{lem:cup_cap}}{\leq} w(N_{A,B}^\cup)\\
        &\stackrel{\text{Lem. }\ref{lem:sum_cup_cap}}{\leq} w(A) + w(B) - w(N_{A,B}^\cap)\\
        &\stackrel{\text{Lem. }\ref{lem:cup_cap}}{\leq} \sep^w(S \cup \{v\},t) + \sep^w(S',t) - \sep^w((S \cup\{v\})\cap S',t)\\
        &\stackrel{\phantom{\text{Lem. }\ref{lem:decrease}}}{\leq} \sep^w(S \cup \{v\},t) + \sep^w(S',t) - \sep^w(S,t)\\
        &\stackrel{\phantom{\text{Lem. }\ref{lem:decrease}}}{=} \sep^w(S',t) + \Delta^w(S,v,t)
    \end{align*}
    And thus obviously $ \Delta^w(S',v,t) \leq \Delta^w(S,v,t)$.
\end{proof}

\begin{lemma}
For any $S, S' \subseteq V$, $t \in V$ and any cut $N$ between $S$ and $t$ it holds that:
\begin{equation*}
\sep^w(N \cup S',t) \geq \sep^w(S \cup S',t)
\end{equation*}
\end{lemma}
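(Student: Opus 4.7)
The plan is to argue directly on paths rather than to pass through the $I_t/H_t$ machinery used in the previous technical lemmas. Concretely, I will take an arbitrary minimum cut $M$ between $N \cup S'$ and $t$ and show that $M$ is also a cut between $S \cup S'$ and $t$. Once this is established, $\sep^w(S \cup S',t)$ is a minimum over a set of cuts that contains $M$, so $\sep^w(S \cup S',t) \leq w(M) = \sep^w(N \cup S',t)$, which is the claim.

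To prove that $M$ separates $S \cup S'$ from $t$, I will argue by contradiction: assume there is a path $P = v_0 v_1 \ldots v_\ell$ in $G$ with $v_0 \in S \cup S'$, $v_\ell = t$, and $v_i \notin M$ for every $i$. I split into two cases. If $v_0 \in S'$, then $v_0 \in N \cup S'$ and $P$ itself is a path from $N \cup S'$ to $t$ avoiding $M$, contradicting the fact that $M$ is a cut between $N \cup S'$ and $t$. If $v_0 \in S$, then because $N$ is a cut between $S$ and $t$, the path $P$ must contain some $v_j \in N$; the suffix $v_j v_{j+1} \ldots v_\ell$ is then a path from $N \cup S' \ni v_j$ to $t$ avoiding $M$, again a contradiction. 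This case analysis is essentially the whole argument; the one subtlety is just being careful that the suffix is a valid path witness in the sense of Definition~\ref{def:cuts}, which it is since path-containment is preserved under taking suffixes.

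I do not expect a real obstacle here: the lemma is the natural generalization of the standard fact that ``a cut of a cut is a cut,'' and the path-chasing argument avoids the set-theoretic bookkeeping of Lemmas~\ref{lem:cup_cap} and~\ref{lem:sum_cup_cap}. The one thing to double-check is that the definition of a cut allows the endpoints of the path to lie in the cut set (as it does in Definition~\ref{def:cuts}, where $N$ may intersect $S$ and $T$), so that the suffix argument in the second case still goes through when $v_j$ happens to coincide with $t$ or lie in $S'$. Finally, since the bound on $w(M)$ held for an arbitrary minimum cut $M$ between $N \cup S'$ and $t$, taking the minimum over all such cuts on the right-hand side yields the stated inequality.
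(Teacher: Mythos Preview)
Your proof is correct and follows essentially the same path-chasing argument as the paper: take a cut $M$ between $N\cup S'$ and $t$, and show that every path from $S\cup S'$ to $t$ hits $M$ by, in the $S$-case, first passing through a node of $N$ and then using that $M$ separates $N$ from $t$. The only cosmetic difference is that the paper phrases the same step via the $H_t/I_t$ notation (arguing that $p\in N\subseteq H_t(N')$ forces the path to meet $N'$), whereas you argue directly on the suffix of the path; neither route buys anything the other does not.
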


\begin{proof}
    Let $N'$ be a cut between $N$ and $t$ and let $v$ be an arbitrary element in $ H_t(N)$. Then any path from $v$ to $t$ must contain at least one node $p \in N$. Since $p \in H_t(N')$ and $t \not\in I_t(N')$ we may conclude that the path must also contain an element in $N'$. And since this holds for any path from $v$ to $t$ we get that $v \in H_t(N')$. Thus any cut between $N$ and $t$ must also be a cut between $S$ and $t$. In particular this also holds for all cuts $N'$ with $S' \subseteq H_t(N')$ which means that all cuts between $S' \cup N$ and $t$ are also cuts between $S' \cup S$ and $t$ and thus $\sep^w(N \cup S',t) \geq \sep^w(S \cup S',t)$.
\end{proof}

\mycomment{

\begin{lemma}
If $A$ and $B$ are minimum weight cuts between $S$ and $t$ then also $N_{A,B}^\cup$ and $N_{A,B}^\cap$ are minimum cuts between $S$ and $T$.
\end{lemma}

\begin{proof}
Using Lemma \ref{lem:cup_cap} it is easy to verify that both $N_{A,B}^\cup$ and $N_{A,B}^\cap$ are cuts between $s$ and $t$. Additionally Lemma \ref{lem:sum_cup_cap} tells us that $w(A) + w(B) \geq w(N_{A,B}^\cap) + w(N_{A,B}^\cup)$. And thus if $N^\cap_{A,B}$ or $N^\cup_{A,B}$ would not be a minimum cut the other cut would have a strictly lesser weight than $w(A) = w(B)$ which directly contradicts that $A$ and $B$ are minimum weight cuts.
\end{proof}

\begin{corollary}
For any $S \subseteq V$ and $t \in V\setminus S$ we define:
\begin{equation*}
cut_H(S,t) = \{v \in V\mid \sep^w(S \cup \{v\},t) = \sep^w(S,t)\}
\end{equation*}
Then there exists a minimum weight cut $N$ between $S$ and $t$ such that $H_t(N) = cut_H(S,t)$ and for every other minimum weight cut $N' \subseteq V$ between $S$ and $t$ holds that $H_t(N') \subseteq cut_H(S,t)$.
\end{corollary}

\textcolor{red}{The next lemma is probably unnecessary:}

\begin{lemma}
For any $t \in V$ and $S \subseteq S' \subseteq V$ it holds that $cut_H(S,t) \subseteq cut_H(S',t)$.
\end{lemma}

\begin{proof}
Suppose the lemma would be wrong and let $A$ be a minimum cut between $S$ and $t$ with $H_t(A) = cut_H(S,t)$ and $B$ a minimum cut between $S'$ and $t$ with $H_t(B) = cut_H(S',t)$. By Lemma \ref{lem:cup_cap} we know that $H_t(N^\cap_{A,B}) = H_t(A) \cap H_t(B)$ which implies that $N_{A,B}^\cap$ is also a cut between $S$ and $t$ (using that $S \subseteq S' \subseteq H_t(B)$) and that if $cut_H(S,t)$ is not a subset of $cut_H(S',t)$ that $H_t(N^\cap_{A,B}) \subset cut_H(S,t)$ which means that $w(A)$ must be strictly smaller than $w(N^\cap_{A,B})$ (by the definition of $cut$).

In a similar fashion we get that $H_t(N^\cup_{A,B})$ is a cut between $S'$ and $t$ and that $H_t(N^\cup_{A,B}) \supset cut_H(S,t)$ which implies that $w(B)$ must be strictly smaller than $w(N^\cup_{A,B})$. But then:
\begin{equation*}
w(A) + w(B) < w(N^\cap_{A,B}) + w(N^\cup_{A,B})
\end{equation*}

which directly contradicts Lemma \ref{lem:sum_cup_cap} and thus our assumption must be wrong and $cut_H(S,t) \subseteq cut_H(S',t)$
\end{proof}
}

\section{Hardness Result}
\label{sec:appendix_hardness}

\begin{theorem}[Hardness of Non-Disjoint Connected \(k\)-Median] \label{thm:non_disjoint_connected_k_median} 
There is a constant $c > 0$ such that for all sufficiently large $n$, the non-disjoint connected $k$-median problem cannot be approximated within a factor of $c \cdot \log n$ unless P=NP, even when $k=2$. The same is true for the assignment variant.\end{theorem}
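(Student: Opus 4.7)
The plan is to give a gap-preserving reduction from Set Cover to the non-disjoint connected $k$-median problem with $k=2$, in the spirit of Gupta et al.~\cite{gupta2011clustering} but designed so that allowing overlapping clusters does not help the algorithm. Given a Set Cover instance with universe $U=\{u_1,\ldots,u_n\}$ and family $\mathcal{S}=\{S_1,\ldots,S_m\}$, I build a vertex set $V=\{c_1,c_2,h\}\cup\{s_j : j\in[m]\}\cup\{e_i : i\in[n]\}$ whose connectivity graph $G$ has edges $c_1 h$, $c_2 h$, $h s_j$ for every $j$, and $s_j e_i$ whenever $u_i\in S_j$. The metric is a simple two-pole pseudo-metric (promoted to a strict metric by infinitesimal perturbations): the vertices $c_1, h$ and all $s_j$ sit at one pole, the vertices $c_2$ and all $e_i$ sit at the other pole, and the two poles are at distance $M := n^2$.

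The analysis then proceeds in three steps. First, any feasible solution must assign every $e_i$ to $c_2$'s cluster, since assigning even a single $e_i$ to $c_1$ incurs cost $M$, exceeding the value we aim to match. Second, because in $G$ every $c_2$-to-$e_i$ path passes through the hub $h$ and some $s_j$ with $u_i \in S_j$, the set $T$ of set vertices placed in $c_2$'s cluster must form a set cover of $U$. Third, the cost of $c_2$'s cluster is exactly $M$ (for the hub) plus $M\cdot|T|$ (for the set vertices in $T$, which lie at the opposite pole from $c_2$) plus $0$ for the elements; meanwhile the remaining set vertices can be placed into $c_1$'s cluster at zero extra cost. This last step crucially uses non-disjointness: $h$ must simultaneously lie in both clusters in order to keep each of them connected. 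Summing, the optimum clustering cost is exactly $M(1+\tau)$, where $\tau$ is the minimum set-cover size.

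Invoking the $(1-o(1))\ln n$-hardness of Set Cover under $\mathrm{P}\neq\mathrm{NP}$, a hypothetical $c\log n$-approximation for our clustering problem converts into a $c\log n + O(1)$-approximation for Set Cover (choosing $M$ large enough so that the additive $O(1)$ is absorbed and using a hard instance family with $\tau = \omega(1)$), contradicting the hardness for any sufficiently small constant $c$. For the assignment variant the very same analysis applies with $c_1, c_2$ fixed as the prescribed centers. For the regular variant I additionally have to exclude alternative center pairs: if both chosen centers lie on the same pole, then the $\Theta(m+n)$ vertices on the other pole each contribute $\Omega(M)$ to the cost, vastly exceeding $M(1+\tau)$; a mixed center pair inherits the same connectivity analysis as $\{c_1,c_2\}$ up to relabeling and yields the same optimum.

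The step I expect to be the main obstacle is this last case analysis for the regular variant, together with verifying that the two-pole pseudo-metric can be perturbed into a strict metric without affecting the $\Theta(M\tau)$ asymptotics. Both points are essentially routine but must be executed carefully so that the reduction remains gap-preserving.
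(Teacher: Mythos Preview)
Your reduction is in the same spirit as the paper's, but the paper reduces from Dominating Set and, crucially, wires every ``set'' vertex $x_i$ directly to \emph{both} candidate centers $a$ and $b$ (no hub). That structural choice is what makes the two delicate points you flag go through cleanly, and it is exactly where your proposal has gaps.

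\textbf{Step 1 is not justified as written.} You claim every $e_i$ must lie in $c_2$'s cluster ``since assigning even a single $e_i$ to $c_1$ incurs cost $M$, exceeding the value we aim to match.'' But the optimum is $M(1+\tau)\ge 2M$, so paying $M$ for one element does not exceed it; nothing forbids a solution from putting some $e_i$ in $V_1$ and compensating by using a smaller partial cover in $V_2$. What you actually need is an exchange argument: if $e_i\in V_1$ then $h\in V_1$, so removing $e_i$ keeps $V_1$ connected, and (provided $h\in V_2$) adding $e_i$ together with any $s_j$ covering $u_i$ to $V_2$ costs at most $M$ and saves $M$. The paper does exactly this kind of swap explicitly, and its construction makes it particularly clean because of the dedicated $\{x_i,y_i\}$ edge (a canonical unit-cost set for each element). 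In your construction the swap still works, but you have to argue it; the sentence you wrote does not.

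\textbf{The ``mixed center pair'' case is not routine in your graph.} In the paper's instance, $a$ is adjacent to every position-$0$ vertex and $b$ is reachable at cost $0$ through any $x_i$ already in the second cluster, so one can replace an arbitrary center pair by $(a,b)$ without changing costs. In your instance $c_1$ is adjacent only to the hub $h$, so if the algorithm picks, say, $s_1$ or $e_1$ as a center, you cannot simply relabel: the connectivity pattern around these vertices is genuinely different, and you must redo the cover lower bound for each shape of center pair (including the case $h\notin V_1$ with center $s_j$, where $V_1$ can contain element vertices directly). This is doable but is real casework, not ``up to relabeling.''

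In short, the overall plan is sound and close to the paper's, but the hub-based graph you chose makes both the exchange step and the alternative-center analysis harder than you acknowledge; the paper avoids both difficulties by attaching the set vertices directly to the two centers.
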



We establish this hardness via an approximation‐preserving reduction from the dominating set problem, which is known to be NP‐hard to approximate within a factor of $c\log n$ for some constant $c $ \cite{alon2006algorithmic}.  
The proof (and reduction) mainly follows the one presented by Gupta et al. \cite{gupta2011clustering}, but the relaxed ``non‐disjoint clusters'' constraints require more careful analysis. In particular, while preserving the overall construction, we remove certain edges to prevent the formation of advantageous non‐disjoint clusters that would otherwise arise from them. 

Given a graph $G = (V,E)$ with $V = (v_1,..,v_n)$, a set $S \subseteq  V$ is called a \emph{dominating set} if for every $i \in [n]$ the node $v_i$ itself or a neighbor $v_j$ of $v_i$ is contained in $S$. The aim of the dominating set problem is to find a dominating set of minimum size. For our reduction we construct a connectivity graph $G' = (V',E')$ as follows:
\begin{itemize}
    \item $V' =\{a,b,x_1,\ldots,x_n,y_1,\ldots,y_n\}$
    \item For all $i \in \{1,\ldots,n\}$ we add the edges $\{x_i,a\}$, $\{x_i,b\}$ and $\{x_i,y_i\}$ to $E'$
    \item for all edges $\{v_i,v_j\} \in E$ we additionally add the edges $\{x_i,y_j\}$ and $\{x_j,y_i\}$ to $E'$.
\end{itemize}
As for a distance metric $d$ we have two positions $0$ and $1$ in the one dimensional Euclidean space and place the nodes $a$, $x_1,\ldots,x_n$ at position $0$ and the nodes $b,y_1,\ldots,y_n$ at position $1$. The distance between two nodes is then the Euclidean distance between the respective positions, resulting in a distance of $0$ if two nodes are placed at the same position and a distance of $1$ otherwise. Figure~\ref{figure:hardness_nondisjoint} illustrates the resulting structure for a given example. 

\begin{figure}
\begin{subfigure}{0.3\textwidth}
\centering
    \includegraphics[width= 0.8\textwidth]{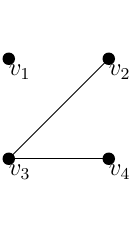}
\end{subfigure}
\hfill
\begin{subfigure}{0.6\textwidth}
\centering
    \includegraphics[width= 0.8\textwidth]{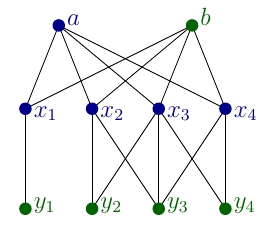}
\end{subfigure}
\caption{An example of the transformation of a graph $G$ acting as a dominating set instance (left) into a connected $2$-median instance (right) performed by our reduction. The blue nodes in the right graph are placed at position $0$ and the green ones at position $1$.}
\label{figure:hardness_nondisjoint}
\end{figure}


Following the same logic as in the reduction for the disjoint setting presented by Gupta et al., any (potentially non-disjoint) $2$-median solution on this instance with objective value $s$ directly corresponds to a dominating set of size $s$ and vice versa. In the following we present a formal proof of this since Gupta et al. omitted it in their work:

\begin{lemma}
   There exists a dominating set of size $s$ in $G$ if and only if there exists a non-disjoint connected $2$-median solution with objective value $s$ for the connectivity graph $G'$ and distance function $d$.
\end{lemma}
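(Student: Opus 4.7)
The plan is to establish both implications of the iff by direct constructions that match the size of a dominating set to the cost of a connected $2$-median solution on $G'$. Since any superset of a dominating set is again dominating, and extra vertices can always be added to clusters of a $2$-median solution to raise its cost in unit steps, it suffices to exhibit a witness of size or cost at most $s$ on each side; the iff at general $s$ then follows by monotone padding on either side.

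For the implication \emph{dominating set $\Rightarrow$ $2$-median solution}, given a dominating set $S \subseteq V$ of size $s$, I set $c_1 := a$ and $c_2 := b$ and define the clusters $V_1 := \{a, x_1, \ldots, x_n\}$ and $V_2 := \{b, y_1, \ldots, y_n\} \cup \{x_j : v_j \in S\}$. The cluster $V_1$ is connected via the hub $a$, and $V_2$ is connected because $b$ is adjacent in $G'$ to every $x_j$ with $v_j \in S$ and because each $y_i$ has, by the dominating-set property combined with the definition of $E'$, an edge to some $x_j$ with $v_j \in S$. The only vertices appearing in both clusters are the $|S|$ copies of $x_j$ with $v_j \in S$; each contributes $d(x_j, a) + d(x_j, b) = 1$ to the objective, while every other vertex contributes $0$, so the total cost is exactly $|S| = s$.

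For the reverse implication, I introduce shorthands $P_0 := \{a, x_1, \ldots, x_n\}$ and $P_1 := \{b, y_1, \ldots, y_n\}$ and proceed by cases on the positions of the two centers. If $c_1, c_2$ both lie in $P_0$ (the case of both in $P_1$ is symmetric), then each of the $n+1$ vertices of $P_1$ appears in at least one cluster at distance $1$ from its center, forcing $s \geq n+1$; the set $V$ is then a dominating set of size $n \leq s$. Otherwise, without loss of generality $c_1 \in P_0$ and $c_2 \in P_1$, the two-valued distance function reduces the objective to $|V_1 \cap P_1| + |V_2 \cap P_0|$, and setting $S_0 := \{v_j : x_j \in V_2\}$ and $T_0 := \{v_i : y_i \in V_1\}$ gives $|S_0| + |T_0| \leq s$. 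The crucial claim is that $S_0 \cup T_0$ dominates $G$: for each $v_i$, either $y_i \in V_1$ and $v_i \in T_0$, or $y_i \in V_2 \setminus V_1$, in which case (assuming $y_i \neq c_2$) the connectivity of $V_2$ forces $y_i$ to have a neighbor in $V_2$; by construction of $E'$ this neighbor must be some $x_j$ with $j = i$ or $\{v_i, v_j\} \in E$, so that $v_j \in S_0$ dominates $v_i$.

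The main obstacle I expect is the corner case $c_2 = y_i$ with $V_2 = \{y_i\}$: here $y_i$ has no neighbor inside $V_2$, so the above argument produces no $x_j \in S_0$ to dominate $v_i$. This case will be handled directly by observing that then every vertex other than $y_i$ must lie in $V_1$; in particular the $n$ vertices in $P_1 \setminus \{y_i\}$ each contribute $1$, forcing $s \geq n$, so the dominating set $V$ of size $n \leq s$ again suffices. Beyond this, the whole reverse direction relies only on the fact that the non-disjoint objective charges a vertex separately for each cluster it belongs to, which is precisely what prevents overlap from letting the solution ``cheat'' by making $|S_0| + |T_0|$ exceed $s$.
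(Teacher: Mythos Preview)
Your argument is correct. The forward direction is essentially the paper's construction: both pick $a$ and $b$ as centers and route each $y_i$ through some $x_j$ with $v_j\in S$; whether the non-dominating $x_j$ live only in the $a$-cluster (the paper) or in both clusters (your version) is immaterial to the cost count.

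The reverse direction, however, is organised differently. The paper first \emph{normalises} the solution: it argues that without loss of generality the centers are $a$ and $b$, and then, for every $y_i$ still in the $a$-cluster, it moves $y_i$ to the $b$-cluster while simultaneously adding $x_i$ there --- a cost-neutral transformation after which a single set $S=\{v_i:x_i\in V_2\}$ can be read off as dominating. You instead leave the given solution untouched and extract \emph{two} index sets, $S_0=\{v_j:x_j\in V_2\}$ and $T_0=\{v_i:y_i\in V_1\}$, whose union you show is dominating; the identity $s=|V_1\cap P_1|+|V_2\cap P_0|$ then bounds $|S_0|+|T_0|$ directly. Your route avoids having to check that the paper's reassignment preserves connectivity of \emph{both} clusters in the non-disjoint model, at the price of the explicit case split on center positions and the isolated-center corner case. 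One small omission: you handle $c_2=y_i$ only when $V_2=\{y_i\}$, but when $c_2=y_i$ and $|V_2|>1$ your own connectivity argument already applies verbatim (the center $y_i$ then has a neighbour in $V_2$, necessarily some $x_j$), so no additional idea is needed there.
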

\begin{proof}

If we are given a dominating set $S$, we can construct a clustering solution with cost $|S|$ as follows: We choose $a$ and $b$ as centers;  for all $i \in [n]$ we assign $y_i$ to $b$; for all $i \in [n]$ we assign $x_i$ to $b$ if $v_i \in S$ and to $a$ otherwise. It is easy to verify that every node $y_i$ is indeed connected to $b$, as there exists a node $v_j \in S$ such that either $v_j = v_i$ or $v_i$ is neighbored to $v_j$ which means that in $G'$ the node $x_j$ (corresponding to $v_j$) is neighbored to $y_i$ and gets assigned to $b$. Additionally the cost of this solution is equal to the number of nodes in $\{x_1,\ldots,x_n\}$ assigned to $b$ which is exactly $|S|$.
 
Conversely: A small \(k\)-median cost implies a small dominating set. For any given k-median solution with cost $s<n$, we can make the following observations: 
\begin{itemize}
    \item One center must be placed at position $0$ while the other one needs to be placed at position $1$. We denote the respective centers as $c_0$ and $c_1$.
    \item Since $a$ is neighbored to any node placed at position $0$, it is in particular neighbored to $c_0$ and we can always assign it to $c_0$ without incurring any additional costs.
    \item Since there are $n+1$ nodes at position $1$ and all of them are only neighbored to nodes in $\{x_1,\ldots,x_n\}$, we know that there exists an $i \in [n]$ such that $x_i$ needs to be assigned to $c_1$. Since $b$ is neighbored to $x_i$, we can assign it to $c_1$ without incurring any additional costs.
\end{itemize} 
Using these observations we may assume w.l.o.g. that $c_0 = a$ and $c_1 = b$. Additionally we can ensure that for every $i \in [n]$ the node $y_i$ is assigned to $b$ without increasing the cost of the solution. Suppose that $y_i$ is assigned to $a$ then $y_i$ is contributing a value of $1$ to the cost of the solution. By reassigning it to $b$ its contribution decreases to $0$ however this may violate our connectivity constraint. To avoid this, we can also assign $x_i$ to $b$ which will then again contribute a value of $1$ to the cost, meaning that the entire cost does not increase. 
By applying this process to every $ y_i $ initially assigned to $ a $, we obtain a solution that assigns all nodes in  $\{y_i\}_{i \in [n]}$ to $b$ without increasing the cost. Let $B$ be the set of all nodes assigned to $b$ in this solution. Notice that for every node $v_i \in V$ it holds that $y_i \in B$ and thus $y_i$ needs to be connected to $b$. 

We consider the subset $S = \{v_i\mid i \in [n] \land x_i \in B\}$ of $V$. 
For every $i \in [n]$ we know that $y_i$ is neighbored to a node $x_j \in B$ which by the definition of $G'$ means that either $i = j$ or $v_i$ and $v_j$ are neighbored in $G$.
Thus either $v_i$ itself or a neighbor of $v_i$ is contained in $S$. Since this is true for any node in $V$, the set $S$ forms a dominating set of $G$ whose size is equal to the number of nodes in $\{x_i\}_{i \in [n]}$ contained in $B$ which is upper bounded by $s$.
\end{proof}

Since this reduction is approximation-preserving, the hardness of Dominating Set also carries over to the non-disjoint variant of connected $k$-median. Since the reduction also works if we specify that $a$ and $b$ are chosen as centers this is also true for the assignment version.

\section{Disjoint Connected $k$-Median Problem}\label{apx:disjoint}
\subsection{General Case}\label{apx:disjoint_general}


If we require the clusters to be disjoint, the connected $k$-median problem gets even harder to approximate. We will prove via a reduction from the 3-SAT problem that even for $k = 2$ we cannot hope for an approximation guarantee much better than $O(n)$, unless P = NP. The reduction resembles the hardness proof for the connected $k$-center problem in~\cite{Drexler2024}.

\hardnessdisjoint*

Consider an instance of the well-known NP-hard $3$-SAT problem with variables $x_1,\ldots,x_a$ and a set of clauses $c_1,\ldots,c_b$. Each clause consists of up to three literals $x_i$ or $\overline{x}_i$. The goal is to decide whether or not there is an assignment of the variables that satisfies all clauses.

We will now reduce the given instance of the $3$-SAT problem to an instance of the connected $k$-median problem. For ease of presentation, we allow different nodes in this instance to have a distance of $0$. The reduction is, however, still valid if all these zero distances are replaced by sufficiently small values greater than $0$. There will be three different groups of nodes, namely $L$, $M$, and $R$. All nodes that belong to the same group have a pairwise distance of~$0$, while the distance between any node in $L\cup R$ and any node in $M$ is~$1$. Finally, the pairwise distances between nodes in $L$ and $R$ is~$2$. Formally this metric (also with non-zero instead of zero distances) can be encoded as a tree metric.

\begin{figure}[ht]
    \centering
    \includegraphics[width=8cm]{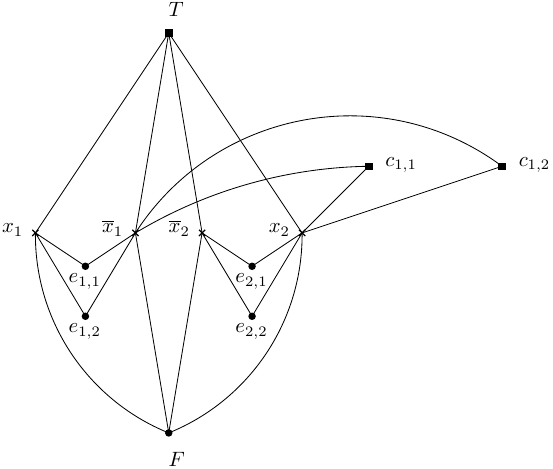}
    \caption{The resulting connectivity graph for the formula $(\overline{x}_1 \lor x_2)$ and $m = 2$. The position of the nodes is shown by their shape: a box represents  $L$, a cross represents $M$ and a disc represents $R$.}
    \label{fig:sketch_formula}
\end{figure}

Let now $m$ be a natural number whose value we will fix later. The set of nodes is constructed as follows:
\begin{itemize}
  \item At position $L$ we place a special node $T$ as well as $m$ nodes $c_{i,1},\ldots,c_{i,m}$ for every clause $c_i$. 
  \item At position $M$ we place two nodes $x_i$ and $\overline{x}_i$ for every variable $x_i$.
  \item At position $R$ we place another special node $F$ as well as $m$ nodes $e_{i,1},\ldots,e_{i,m}$ for every variable $x_i$.
\end{itemize}

Finally, we have to specify the connectivity graph. We add the following edges to this graph.
\begin{itemize}
 \item We connect $T$ and $F$ with $x_i$ and $\overline{x}_i$ for every $i \leq a$.
 \item We add edges between $x_i$ and $e_{i,j}$ as well as between
  $\overline{x}_i$ and $e_{i,j}$ for any $i \leq a$ and $ j \leq m$.
  \item For every literal $l$ and $i \leq b$, $j \leq m$ we add an edge between the node corresponding to $l$ and $c_{i,j}$ if $l$ is contained in the clause $c_i$.
\end{itemize}
 In Figure \ref{fig:sketch_formula} the resulting connectivity graph for the formula $(\overline{x}_1 \lor x_2)$ is depicted.

We will now prove the following two lemmas, showing that the optimal solution of the constructed instance of the connected $k$-median problem is significantly cheaper if the given 3-SAT instance is satisfiable.

\begin{lemma}\label{lem:dir1}
If the given 3-SAT instance is satisfiable, there exists a solution with cost $2a$ for the constructed instance of the connected $k$-median problem.
\end{lemma}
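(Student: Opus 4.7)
The plan is to construct an explicit feasible solution of cost exactly $2a$ from any satisfying assignment $\alpha$. I would take $T$ and $F$ as the two centers and assign as follows: every node at position $L$ other than the literal nodes (i.e., $T$ itself and all $c_{i,j}$) goes into the cluster of $T$; every node at position $R$ other than the literal nodes (i.e., $F$ itself and all $e_{i,j}$) goes into the cluster of $F$; and the $2a$ literal nodes in $M$ are split according to $\alpha$ in the following way: for each variable $x_i$, the literal made true by $\alpha$ is placed in the $T$-cluster, while the literal made false by $\alpha$ is placed in the $F$-cluster.

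The cost calculation is immediate from the metric: every node at $L$ has distance $0$ to $T$, every node at $R$ has distance $0$ to $F$, and each of the $2a$ literal nodes in $M$ contributes cost $1$ to its center. So the total cost equals $2a$, as desired. The content of the lemma therefore lies entirely in verifying that both clusters are connected in $G'$. For the $F$-cluster this is relatively easy: $F$ is adjacent to every literal node, and each gadget node $e_{i,j}$ is adjacent in $G'$ to both $x_i$ and $\overline{x}_i$; since the literal falsified by $\alpha$ lies in the $F$-cluster, $e_{i,j}$ has a two-edge path through it to $F$.

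The main step, and the only place where the satisfiability of the formula is actually used, is the connectivity of the $T$-cluster. The node $T$ is directly adjacent to every literal placed in its cluster, so it remains to connect each clause node $c_{i,j}$ to $T$. By construction, $c_{i,j}$ is adjacent in $G'$ precisely to the literals occurring in clause $c_i$; because $\alpha$ satisfies $c_i$, at least one of those literals is true under $\alpha$ and therefore was placed in the $T$-cluster, which yields the required two-edge path from $c_{i,j}$ to $T$. I do not expect any subtle obstacle: the only point that requires care is checking that the same partition of literal nodes simultaneously routes the clause-side nodes $c_{i,j}$ to $T$ (via true literals) and the variable-side gadget nodes $e_{i,j}$ to $F$ (via false literals), which the construction handles automatically. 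Combined with the cost computation, this establishes a feasible solution of cost $2a$.
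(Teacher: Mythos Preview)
Your proposal is correct and essentially identical to the paper's own proof: the paper likewise chooses $T$ and $F$ as centers, sends all $c_{i,j}$ to $T$, all $e_{i,j}$ to $F$, places the true literal of each variable in the $T$-cluster and the false literal in the $F$-cluster, and verifies connectivity exactly as you describe.
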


\begin{proof}
Let $\alpha_1,\ldots,\alpha_a$ be a satisfying assignment. We choose $T$ and $F$ as centers. If $\alpha_i = 1$, we assign $x_i$ to $T$ and $\overline{x}_i$ to $F$. If $\alpha_i = 0$, we assign $\overline{x}_i$ to $T$ and $x_i$ to $F$. Furthermore for any $i \leq a$ and $j \leq m$ we assign $e_{i,j}$ to $F$ and for any $i \leq b$ and $j \leq m$ we assign $c_{i,j}$ to $T$. 
In both cases, assigning $x_i$ and $\overline{x}_i$ each costs $1$ because $T$ and $F$ are in $L\cup R$ while $x_i$ and $\overline{x}_i$ are in $M$, and the distance between points in $L \cup R$ and $M$ is always $1$. Since there are $a$ variables, the cost of this assignment is $2a$.
It remains to show that the clusters are actually connected.

First let us consider the cluster of $F$. The only nodes assigned to $F$ that are not directly neighbored to it are nodes of the form $e_{i,j}$. Note that for every $i$ either $x_i$ or $\overline{x}_i$ is assigned to $F$. Since $e_{i,j}$ is neighbored to both of these nodes we may thus conclude that there exists a path from $e_{i,j}$ to $F$ within the cluster. Thus the entire cluster is connected.

Now we consider the cluster of $T$. Similarly as above only nodes of the form $c_{i,j}$ could cause problems regarding the connectivity. Let us consider the node $c_{i,j}$ for a fixed $i \leq b$, $j \leq m$. Since the clause $c_i$ is fulfilled by $\alpha$ there exist two cases:
\begin{itemize}
\item There exists an $i'$ such that $\alpha_{i'} = 1$ and $x_{i'} \in c_i$. Then $x_{i'}$ has been assigned to $T$ and is a neighbor of $c_{i,j}$ for any $j$. It follows that $c_{i,j}$ is connected to $T$.
\item There exists an $i'$ such that $\alpha_{i'} = 0$ and $\overline{x}_{i'} \in c_i$. Again, due to the fact that $\overline{x}_{i'}$ has been assigned to $T$ and is a neighbor of $c_{i,j}$ for any $j$ , we can conclude that $c_{i,j}$ is connected to $T$.
\end{itemize}
Thus we can conclude that both clusters are connected and our solution is feasible.
\end{proof}

\begin{lemma}\label{lem:dir2}
If there is no satisfying assignment of the 3-SAT formula, then there is no solution of the $k$-median instance with cost smaller than $2m$ (given that $a,b \geq 2$).
\end{lemma}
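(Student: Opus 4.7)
The plan is to prove the contrapositive: any solution of cost strictly less than $2m$ yields a satisfying truth assignment, contradicting the assumed unsatisfiability of the 3-SAT instance.

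\textbf{Step 1: Classifying the center positions.} I would first rule out all center placements except one center at position $L$ and one at position $R$. If both centers sit at $L$, then each of the $am + 1$ nodes at $R$ incurs cost $2$, giving total cost at least $2(am+1) \geq 2m + 2$; by symmetry the same holds for two centers at $R$. If at least one center lies at $M$, every node at $L$ (respectively $R$) pays at least $1$ to some center, and using $a,b \geq 2$ one checks that each such configuration costs at least $\min(a,b)\,m + 1 \geq 2m + 1$. Hence for a solution of cost $<2m$, exactly one center $c_L$ sits at $L$ and one center $c_R$ sits at $R$.

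\textbf{Step 2: Forced structure of the two clusters.} With $c_L$ and $c_R$ as above, each of the $2a$ literal nodes contributes exactly $1$ to the cost (regardless of its cluster), each node at $L$ placed into $c_R$'s cluster contributes $2$, and each node at $R$ placed into $c_L$'s cluster contributes $2$. Now fix a variable $x_i$: if all $m$ nodes $e_{i,1}, \ldots, e_{i,m}$ were in $c_L$'s cluster, they alone would contribute $2m$ to the cost, a contradiction. So at least one $e_{i,j}$ lies in $c_R$'s cluster, and since the only neighbors of $e_{i,j}$ in $G'$ are $x_i$ and $\overline{x}_i$, the connectivity of the $c_R$-cluster forces at least one of $\{x_i,\overline{x}_i\}$ into $c_R$'s cluster. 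In particular, both literals of variable $x_i$ cannot simultaneously lie in $c_L$'s cluster. The symmetric counting argument applied to the nodes $c_{j,1}, \ldots, c_{j,m}$ shows that for every clause $c_j$ at least one $c_{j,\ell}$ lies in $c_L$'s cluster, whence at least one literal appearing in $c_j$ lies in $c_L$'s cluster.

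\textbf{Step 3: Extracting the assignment.} Define $\alpha_i := 1$ if $x_i$ lies in $c_L$'s cluster, and $\alpha_i := 0$ otherwise. By the last sentence of Step 2 this is consistent: if $x_i$ sits in $c_L$'s cluster then $\overline{x}_i$ cannot, so $\overline{x}_i$ must sit in $c_R$'s cluster and thus evaluates to false under $\alpha_i = 1$; analogously, if $\overline{x}_i$ sits in $c_L$'s cluster then $x_i$ sits in $c_R$'s cluster and $\alpha_i = 0$ makes $\overline{x}_i$ true. Consequently, whenever a literal of clause $c_j$ lies in $c_L$'s cluster, that literal is true under $\alpha$. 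Since every clause has such a literal by Step 2, $\alpha$ satisfies all clauses, contradicting unsatisfiability.

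\textbf{Main obstacle.} The delicate part is the center-placement case analysis in Step 1, which must be careful enough to cover all corner cases, including centers placed at the helper nodes $e_{i,j}$ or $c_{i,j}$ (which nonetheless sit at the same metric positions $R$ and $L$, so the bounds still apply). It is also worth briefly noting the trivial regime $m \leq a$: already the $L$/$R$ case in Step 2 gives cost at least $2a \geq 2m$, so the lemma holds vacuously there. After Step 1 the remaining work is essentially a clean counting plus a careful verification that $\alpha$ is well-defined.
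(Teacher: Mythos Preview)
Your proof is correct and follows essentially the same contrapositive route as the paper: force one center at $L$ and one at $R$ by counting, use connectivity of the $c_{j,\ell}$ and $e_{i,j}$ copies to constrain which literals sit in which cluster, and read off a satisfying assignment. The only organizational difference is that the paper first reduces WLOG to centers $T$ and $F$ (observing that every $L$-node's neighborhood is contained in $T$'s, and similarly for $R$ and $F$), which cleanly eliminates the edge cases you flag in your ``Main obstacle'' paragraph; conversely, you separate the per-variable argument (not both literals can lie in $c_L$'s cluster) from the per-clause argument, whereas the paper nests the former inside the latter.
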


\begin{proof}
Let a solution with clustering cost smaller than $2m$ be given. Since $a,b \geq 2$, both in group $L$ and $R$ are at least $2m$ nodes. Thus the solution must have chosen a point at $L$ and a point at $R$ as a center. Since the neighborhood of any point at $L$ is a subset of the neighborhood of $T$ and the same holds for $R$ and $F$, we may assume without loss of generality that $T$ and $F$ have been chosen as centers.

We will now consider the following assignment $\alpha_1,\ldots,\alpha_a$: If $x_i$ is assigned to $T$, we set $\alpha_i = 1$. Otherwise we set $\alpha_i = 0$. Now consider an arbitrary clause $c_i$. Note that for any $j \leq m$ it holds that $c_{i,j}$ has distance $2$ to $F$. Thus for every $i$, there exists at least one $j$ such that $c_{i,j}$ is assigned to $T$. Since the clusters are connected there exist two cases:

\begin{itemize}
\item There exists an $i'$ such that $x_{i'}$ and $c_{i,j}$ are neighbored and $x_{i'}$ has been assigned to $T$. Then $x_{i'}$ is contained in $c_i$ and $\alpha_{i'} = 1$. Thus $c_i$ is satisfied.
\item There exists an $i'$ such that $\overline{x}_{i'}$ and $c_{i,j}$ are neighbored and $\overline{x}_{i'}$ has been assigned to $T$. With a similar argument as for $c_{i,j}$, we may argue that there exists at least one $e_{i',j'}$ that has been assigned to $F$. Since $e_{i',j'}$ is only neighbored to $x_{i'}$ and $\overline{x}_{i'}$, it follows that $x_{i'}$ has been assigned to $F$ which means that $\alpha_{i'} = 0$. By combing this with the fact that $\overline{x}_{i'}$ is contained in $c_i$, we obtain that $c_i$ is satisfied.
\end{itemize}

Since this holds for any clause $c_i$ it follows that $\alpha$ satisfies every single clause and the lemma holds.
\end{proof}

Using these two lemmas we can prove the theorem.

\begin{proof}[Proof of Theorem~\ref{thm:Hardnessk2}]
Let us consider an arbitrary $\epsilon \geq 0$. By combining Lemma \ref{lem:dir1} and \ref{lem:dir2} we know that any algorithm obtaining an approximation ratio smaller $\frac{m}{a}$ would directly solve the 3-SAT problem. We now choose $m = \left(2a^2 + ab\right)^{1/\epsilon}$. Then it holds that the total number of nodes $n = 2+(m + 2) a + m b \leq (2a + b)m$. Now it holds that:
\begin{align*}
n ^{1- \epsilon} &\leq \left( (2a + b)  \left(2a^2 + ab\right)^{1/\epsilon} \right)^{1 - \epsilon}\\
&< (2a + b) \left(2a^2 + ab\right)^{(1/\epsilon) - 1}\\
&= (2a + b) m / \left(2a^2 + ab\right)\\
&= \frac{m}{a}.
\end{align*}
Thus an $O(n^{1 - \epsilon})$-approximation for the connected $k$-median problem would solve the 3-SAT problem in polynomial time.

Furthermore it holds that for $k = 2$ a polynomial-time approximation algorithm for the assignment version of the connected $k$-median problem (i.e., the version where the centers are given) would directly yield a polynomial-time algorithm for the original problem with the same approximation ratio, since one could simply try out every combination of centers in polynomial time. 
\end{proof}

At the same time the $O(\log^2 k)$-approximation algorithm for connected $k$-center by Drexler et al.~\cite{Drexler2024} directly provides an $O(n\log^2 k)$-approximation algorithm for the disjoint connected $k$-median problem. This is due to the fact the cost of the optimum $k$-median solution is lower bounded by the maximum distance of any point to its center because this distance is also a summand in the objective function. Thus the cost of the best $k$-median solution is lower bounded by the cost of the best $k$-center solution. At the same time, if we interpret a disjoint connected $k$-center solution with radius $r$ as a $k$-median solution, every node is at most adding a cost of $r$ to the objective since it is only contained in a single cluster which bounds the cost of the $k$-median objective by $O(nr)$. By combining these two bounds and using the disjoint connected $k$-center algorithm for disjoint connected $k$-median, we directly get the following result.

\begin{theorem}\label{thm:disjoint_general_apx}
    For the disjoint connected $k$-median problem, there is a polynomial-time algorithm with approximation factor $O(n\log^2{k})$.
\end{theorem}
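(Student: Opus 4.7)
The plan is to obtain the result as a direct black-box reduction to the connected $k$-center problem, invoking the $O(\log^2 k)$-approximation algorithm of Drexler et al.~\cite{Drexler2024}. The entire argument rests on two elementary inequalities in the disjoint setting: (a) any feasible disjoint clustering with max radius $\hat{r}$ has $k$-median cost at most $n\hat{r}$, and (b) the optimal connected $k$-center radius $r^\ast$ is a lower bound for $\mathrm{OPT}_{\mathrm{med}}$.

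First I would run the algorithm of Drexler et al.\ on the input instance $(V,d,G,k)$ in polynomial time. This returns centers $c_1,\ldots,c_k$ and a disjoint collection of connected clusters $V_1,\ldots,V_k$ such that
\[
\hat{r} := \max_{i\in[k]} \max_{v\in V_i} d(v,c_i) \;\leq\; O(\log^2 k)\cdot r^\ast,
\]
where $r^\ast$ is the optimal connected $k$-center radius on $(V,d,G,k)$. Since the same pair $(C,\mathcal{V})$ is also a feasible solution to the (disjoint) connected $k$-median problem, I would then evaluate the $k$-median objective on it:
\[
\Phi_{\mathrm{med}}(C,\mathcal{V}) \;=\; \sum_{i\in[k]}\sum_{v\in V_i} d(v,c_i) \;\leq\; \sum_{i\in[k]} |V_i|\,\hat{r} \;=\; n\hat{r},
\]
where the final equality uses disjointness ($\sum_i |V_i|=n$).

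Next I would establish the lower bound $r^\ast \leq \mathrm{OPT}_{\mathrm{med}}$. Let $(C^\ast,\mathcal{V}^\ast)$ be an optimal connected $k$-median solution. It is a feasible connected $k$-center solution, so its largest radius $R^\ast := \max_{i,v\in V_i^\ast} d(v,c_i^\ast)$ is at least $r^\ast$. On the other hand, $R^\ast$ occurs as one of the summands in $\Phi_{\mathrm{med}}(C^\ast,\mathcal{V}^\ast)$, and all summands are nonnegative, so $R^\ast \leq \mathrm{OPT}_{\mathrm{med}}$. Chaining the bounds yields
\[
\Phi_{\mathrm{med}}(C,\mathcal{V}) \;\leq\; n\hat{r} \;\leq\; O(n\log^2 k)\cdot r^\ast \;\leq\; O(n\log^2 k)\cdot \mathrm{OPT}_{\mathrm{med}},
\]
which is exactly the claimed approximation guarantee.

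There is no real obstacle in this proof; the only thing that might warrant a sentence of care is noting that the reduction truly requires disjointness, because $\sum_i |V_i| = n$ is used to pass from per-cluster radius to total cost. In the non-disjoint setting this step would lose an extra factor of $k$, which is precisely why this clean reduction cannot replace the more involved $O(k^2 \log n)$ algorithm of Theorem~\ref{thm:ClusteringNonDisjoint}.
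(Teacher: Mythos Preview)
Your proof is correct and follows essentially the same argument as the paper: invoke the $O(\log^2 k)$-approximation of Drexler et al.\ for disjoint connected $k$-center, bound the $k$-median cost of the returned partition by $n\hat{r}$ via disjointness, and lower-bound $\mathrm{OPT}_{\mathrm{med}}$ by $r^\ast$ since the maximum radius appears as a summand in the median objective. The paper presents exactly these two inequalities in the paragraph preceding the theorem.
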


One might note that up to some subpolynomial factors this approximation guarantee is tight.

\subsection{Solving the Disjoint Connected k-Median Problem on Trees}\label{apx:dyn_prog}


In this section, we will address the connected k-median problem when the connectivity graph $G = (V,E)$ is a tree. 
We will designate an arbitrary node $r \in V$ as the root of our tree. For two adjacent nodes $a$ and $b$, 
we say that $a$ is the parent of $b$ if the unique path from $b$ to the root $r$ passes through $a$. 
All nodes $b$ with the parent node $a$ are referred to as the children of $a$ and are included in the set $c(a)$. 
A node without a child is termed a leaf. Since a tree does not contain any cycles, a subtree rooted at $a$ can be separated 
from the entire tree by removing the edge from $a$ to its parent. This resulting tree is denoted as $T_a$ and its height is 
defined as the length of the longest path between a leaf in $T_a$ and $a$. 
To simplify, we might sometimes refer to the height of $T_a$ as simply the height of the node $a$.
 
We will compute the optimal $k$-median clustering with a dynamic program, 
related to the one used for connected $k$-center clustering on trees~\cite{Drexler2024}. The latter one is based on the following insight:
For each subtree $T_a$, the node $a$ is the only node with an edge extending outside the subtree. 
Thus, due to the connectivity constraint, every node within $T_a$ can only be assigned to a center that resides in $T_a$ itself, 
or to the same center as $a$. This effectively allows to solve the $k$-center clustering problem on $T_a$ independently 
from the rest of the tree, once the center to which $a$ will be assigned is determined.

More specifically, to solve the connected $k$-center problem on a tree, one guesses the optimal radius. 
Afterwards, the number of centers required within the subtree $T_a$ is calculated for each possible assignment of $a$, 
using a bottom-up approach. To expedite calculations for $T_a$, the respective values for the children of $a$ are utilized. 
Ultimately, it can be checked whether the entire tree can be covered by $k$ centers, 
thereby determining if the selected radius was feasible.
 
For the $k$-median problem, we cannot simply guess the maximum radius of the clusters beforehand. 
Instead, we'll need to compute the cost generated by the subtree $T_a$ based on the number of available clusters. 
Specifically, we calculate the following entries:

\begin{itemize}
    \item $I(a, b, k')$ represents the minimum cost of the subtree $T_a$ 
    if its root is assigned to node $b \in T_a$, and we can select up to $k'$ centers in $T_a$ (including $b$).
    \item $I(a,k')$ represents the minimum cost of the subtree $T_a$ 
    if we can select $k'$ centers within the subtree $T_a$ and we need to assign all nodes in $T_a$ to a one of the $k'$ centers. 
    Essentially, we consider $T_a$ separate from the rest of the tree. 
    It is straightforward to see that $I(a,k') = \min_{b \in T_a} I(a,b,k')$ always holds.
    \item $C(a,b,k')$ represents the minimum cost of the subtree $T_a$  
    if we can select up to $k'$ centers within $T_a$ and we're allowed to assign nodes in $T_a$ to a center $b$ outside of $T_a$. 
    Given that any path from a node $v$ within $T_a$ to a node $b$ outside of $T_a$ must 
    pass through the parent of $a$, we can use the connectivity of the clusters to deduce that $v$ either needs to be assigned to $b$ as well,
    or $v$ is assigned to another node within $T_a$.
    \end{itemize}

Interestingly Angelidakis et al. already developed a very similar recursion as well as a dynamic program for it in the different context of stable clustering \cite{angelidakis2017algorithms}. The main technical difference between their result and ours is that they can assume that the trees are binary while to the authors' knowledge there is no easy way to transform an arbitrary tree into a binary tree in the context of connected clustering without enabling additional incorrect solutions. Thus our dynamic program needs to able to deal with nodes of arbitrary degree.

For nodes $a$ of height $0$ (i.e., the leaves), we proceed as follows:
\begin{itemize}
\item For $I(a, 0)$ and $I(a,a,0)$, the value is $\infty$ since we must choose $a$ itself as a center 
if we aim to assign it within $T_a$. However, for $k' \geq 1$, $I(a,k')$ and $I(a,a,k')$ equals $0$ 
as the distance from $a$ to itself is $0$.
\item For $C(a,b,k')$, we must distinguish whether or not $k' = 0$. If it is, we need to assign $a$ to $b\neq a$ 
and incur a cost of $d(a,b)$. Otherwise, we can choose $a$ itself as a center, which results in a cost of $C(a,b,k')=C(a,a,k')=0$.
\end{itemize}

When calculating the entries of $C$ and $I$ for a node $a$ that is not a leaf, 
we encounter an issue that was absent in the $k$-center variant. Since $a$ could have multiple children, 
we essentially need to determine how many centers should be assigned to the respective subtrees. 
However, there are too many potential combinations to explore.  
To circumvent this, we will introduce another recursion to calculate $C$ and $I$ efficiently. 
To do this, we order the children of $a$ in an arbitrary order $a_1,..,a_z$. 
Let $T(a, z')$ be a tree consisting of $a$ and the subtrees $T_{a_1},...,T_{a_{z'}}$ for node $a$ and $1\le z'\le c(a)$. 
We then define the tables $X_a$ and $Y_a$ as follows:
\begin{itemize}
\item $X_a(b, k', z')$ denotes the minimum cost for the tree $T(a, z')$ if $a$ is assigned to a center $b$ within $T(a, z')$, 
and we can choose $k'$ centers within $T(a, z')$. Similar to the definition of $I$, 
we define $X_a(k',z') = \min_{b \in T(a, z')} X_a(b, k', z')$, which is the minimum cost produced by $T(a, z')$ 
if we assign all nodes to a center within $T(a, z')$ itself, and are allowed to choose $k'$ centers.
\item $Y_a(b, k', z')$ denotes the minimum cost of the same tree $T(a, z')$ as in the definition of $X_a(b, k', z')$ 
if we are allowed to choose $k'$ centers within it, and $a$ is assigned to the fixed center $b$ outside of $T(a, z')$. 
Note that since every path from a node within $T(a, z')$ to a center outside of it must pass through $a$, 
it does actually hold that $Y_a(b, k', z')$ is independent of the choice of other centers outside of $T(a, z')$ 
and is thus well-defined.
\end{itemize}

For the initialization of the recursion, we will only consider the first child, i.e., $z'=1$:
\begin{itemize}
\item We set $X_a(b, k', 1) = d(b,a) + I(a_1,b, k')$ if $b \neq a$, 
because under this condition, $b$ must be contained in $T_{a_1}$, 
and thus the path from $b$ to $a$ must pass through $a_1$. 
Hence, by the connectivity condition, $a_1$ needs to be assigned to $b$, 
and the minimum cost for the subtree $T_{a_1}$ equals $I(a_1,b,k')$. 
Additionally, we also need to add the cost of the node $a$, 
which is obviously $d(a,b)$. If $a=b$, however, it holds that $X_a(a,k',1) = C(a_1,a,k' - 1)$, 
because under these circumstances, the parent of $a_1$ (i.e., $a$) has been assigned to $a$, 
and we are only allowed to choose $k' - 1$ centers in $T_{a_1}$ because $a$ was already chosen as a center.
\item Similarly, we set $Y_a(b,k',1) = d(b,a) + C(a_1,b, k')$. 
By the definition of $Y_a$, $a$ gets assigned to $b$, hence the cost of $a$ equals $d(a,b)$. 
For the subtree $T_{a_1}$, the parent of $A_1$ has been assigned to $b$, and we are allowed to choose $k'$ within $T_{a_1}$. 
By the definition of $C$, the minimum cost of the subtree equals $C(a_1,b,k')$ in this situation.
\end{itemize}

Now, we need to calculate $X_a(b, k', i+ 1)$ if we have already computed the corresponding entries of $X_a$ and $Y_a$ for index $i$. 
Recall that $T(a, i)$ is the tree consisting of $a$ and the subtrees $T_{a_1},...,T_{a_i}$, and $T(a, i + 1)$ is the tree consisting of $a$ and $T_{a_1},..,T_{a_{i+1}}$. 
We know that any of the $k'$ centers in $T(a, i + 1)$ is either contained in $T(a, i)$ or $T_{a_{i+1}}$, 
and for both subtrees, we have already determined the minimum cost of the contained nodes for any given number of centers (fewer than $k'$) chosen inside them. 
We note that these costs (as well as the choices of the centers in the respective subtrees) are independent of each other once we have fixed the center $b$ 
to which the root $a$ is assigned. Thus, we can try all possibilities to split up the centers among the two subtrees and take the one with the minimum cost. 
We will consider two different cases:
\begin{itemize}
    \item  If $b$ is contained in $T(a, i)$, at least one center must be in the subtree and its cost for $k_1$ available centers is equal to $X_a(b, k_1, i)$, 
    while the cost of $T_{a_{i+1}}$ with the remaining $k' -k_1$ centers is equal to $C(a_{i+1}, b, k' - k_1)$. Thus we have:
    \begin{equation*}
        X_a(b, k', i+ 1) = \min_{1 \leq k_1 \leq k'} \left[X_a(b, k_1, i) + C(a_{i+1}, b, k - k_1)\right]
        \end{equation*}
        \item If the center $b$ is in $T_{a_{i+1}}$, the cost of $T(a, i)$ with $k'$ centers equals $Y_a(b, k_1, i)$ 
        (because now $a$ is assigned to a center $b$ outside of $T(a, i)$), while the additional cost of $T_{a_{i+1}}$ equals $I(a_{i+1}, b, k' - k_1)$. 
        Furthermore, $T_{a_{i+1}}$ needs to contain at least one center. As a result, we get the following formula: 
\begin{equation*}
    X_a(b, k', i+ 1) = \min_{0 \leq k_1 \leq k' - 1} \left[ Y_a(b, k_1, i) + I(a_{i+1}, b, k - k_1) \right]
    \end{equation*}
\end{itemize}
 
To calculate $Y_a(b, k', i+ 1)$, we don't need to make this distinction since $a$ is not in $T(a, i + 1)$, 
and thus the cost of $T(a,i)$ with $k_1$ centers equals $Y_a(b, k_1, i)$ and the cost of $T_{a_{i+1}}$ equals $C(a_{i+1}, b, k' - k_1)$. 
There are no additional constraints for the choice of the centers besides the fact that their total number is exactly $k'$:
\begin{equation*}
Y_a(b, k', i+ 1) = \min_{0 \leq k_1 \leq k'} \left[ Y_a(b, k_1, i) + C(a_{i+1}, b, k' - k_1) \right]
\end{equation*}

It's worth noting that for a node $b$ within $T_a$,  $I(a,b,k') = X_a(b,k',z)$ holds. 
Similarly, $C(a,b,k')$ can be calculated as the minimum of $Y_a(b,k',z)$ (i.e., the cost if $a$ is also assigned to $b$) 
and $I(a,k') = \min_{b' \in T_a} I(a,b',k')$ (the minimum cost if $a$ is assigned to a node within $T_a$).

To calculate the cost of the optimum clustering, we use dynamic programming: The algorithm considers all nodes sorted by their height in ascending order. 
The results are stored in a suitable table. If the respective node $a$ has height $0$ and thus no children, 
we can calculate for every $b \in V \setminus a$ and $k' \leq k$ the value of $C(a,b,k')$ as well as $I(a,a,k')$ directly using the initialization of the recursion.

Otherwise, let $z$ be the number of children of $a$. Using another dynamic program, 
we can calculate for increasing $z' \leq z$, $k' \leq k$ and $b \in V$ the value of $X_a(b,k',z')$ if $b \in T(a, z')$ and $Y_a(b,k',z)$ otherwise. 
The respective calculations use the recursive formula as well as the previously stored values. Once the values of $X_a$ and $Y_a$ are calculated for $z' = z$, 
we can use them to easily obtain the values of $C(a,b,k')$ for any $k' \leq k$, $b \not\in T_a$ and $I(a,b,k')$ for any $k' \leq k$, $b \in T_a$.

At the end, the optimum cost of the $k$-median clustering is equal to $I(r,k) = \min_{v \in V} I(r,v,k)$ where $r$ denotes the root of the entire tree.

Regarding the running time, we first consider the costs produced by a single node $a$. 
For each $z' \leq z$, $k' \leq k$, and $b \in V$, we need to calculate either $X_a(b,k',z')$ or $Y_a(b,k',z')$. 
Using the respective entries of the table for the smaller $z'$, we can calculate the respective value in $O(k')$ since we have to minimize over $k'$ different values $k_1 \leq k'$ 
and only need to do constant calculations for each of those. 
In total, the entire running time for node $a$ is upper bounded by $O(nk^2(c(a) + 1))$ 
where $c(a)$ is the number of children of $a$. Thus the entire cost can be bounded by the following:

\begin{align*}
\sum_{a \in V} O(nk^2(c(a) + 1)) = nk^2 \left( \sum_{a \in V} O(1) + \sum_{a \in V} O(c(a))\right)= nk^2 O(n) = O(n^2k^2)
\end{align*}

We used here that the sum of all $c(a)$ over all nodes $a$ is equal to $n-1$ 
because every node is a child of exactly one other node (except for the root). 
In conclusion, the algorithm is able to solve the problem in $O(n^2k^2)$ time, proving the following theorem. 

\DPdisjoint*

One might note that during the entire proof we never used that the distances form a metric. In fact the distances can be chosen arbitrary and the proof still works. Also if the centers are given, the problem gets even easier. Instead of having to calculate the costs produced by a subtree for every possible number of centers in it, we now obtain a single cost value for a fixed assignment of the root of the subtree. Because of this, the inner dynamic program can even be skipped entirely and the resulting dynamic program for the assignment version mostly resembles the one for the connected $k$-center objective. The running time decreases to $O(nk)$, the number of subtrees multiplied with the number of center choices.

\section{Polynomial Linear Program for the Non-Disjoint Connected $k$-Median Assignment Problem}\label{sec:NDCkMAP-alternative-LP-formulation}
In Section \ref{sec:NDCkMAP}, we used an exponentially large separator-based LP formulation for the assignment problem of the non-disjoint connected $k$-median problem. This is used because it is easier to prove our results. Here, we give an alternative flow-based LP formulation that has polynomial size and allows us to use a broader spectrum of algorithms that solve linear programs:

\begin{alignat*}{3}
    \min \quad && \multicolumn{2}{c}{ $\displaystyle\sum\limits_{v \in V,c \in C} d(v,c) x_v^c$ } \\
    \textrm{s.t.} \quad &&x_c^c & =1, && \quad\forall c \in C, \\
    &&x_v^v & =0, && \quad\forall v \in V \backslash C, \\
    &&\sum\limits_{c \in C} x_v^c & \geq 1, && \quad\forall v \in V,\\
    &&\sum\limits_{w \in N(v)} y_{v,w}^{v,c} & =x_v^c, && \quad\forall c \in C, v \in V,\\
    &&\sum\limits_{w \in N(u)} y_{w,u}^{v,c} &= \sum\limits_{w \in N(u)} y_{u,w}^{v,c}, && \quad\forall c \in C, v \in V, u \in V \backslash \{v,c\}, \\
    &&\sum\limits_{w \in N(u)} y_{w,u}^{v,c} & \leq x_u^c, && \quad\forall c \in C, v \in V, u \in V \backslash \{v,c\},\\
    &&x_v^c &\in \{0,1\}, && \quad\forall c \in C, v \in V, \\
    &&y_{u,w}^{v,c} & \in \{0,1\}, && \quad\forall v \in V, c \in C, \{u,w\} \in E,
\end{alignat*}
where $N(v) := \{u \in V \mid \exists \{u,v\} \in E\}$.
The variables $x_v^c$ encode, whether point $v$ is assigned to the center $c$.
The flow constraints forces us that there is a flow of value $x_v^c$ from $v$ to $c$.
This $v$-$c$-flow is encoded by the variables $y^{v,c}_{u,w}$ for $\{u,w\} \in E$.
Here, $C$ defines the set of $k$ centers and thus, we fix $x_c^c = 1$ for all $c \in C$.
Using the max-flow min-cut theorem \cite{ford1956maximal} it is easy to verify that there exists a flows from $v$ to $c$ of value $x_v^c$ if and only if the value of the minimum $(v,c)$-cut is at least $x_v^c$. Thus the values $\left(x_v^c\right)_{v \in V, c \in C}$ of this alternative formulation also form a solution for LP \eqref{LP_nd_assignment}.

\end{document}